\newtheorem{assumption}{Assumption}
\newtheorem{example}{Example}
\newtheorem{remark}{Remark}
\title{Aggregation and Control of Populations of Thermostatically Controlled Loads\\ by Formal Abstractions\thanks{This work is supported by the European Commission MoVeS project FP7-ICT-2009-257005,
by the European Commission Marie Curie grant MANTRAS 249295,
and by the NWO VENI grant 016.103.020.
This paper generalizes the results of \cite{SA13TCL}.}}
\author{Sadegh Esmaeil Zadeh Soudjani and Alessandro Abate\thanks{The authors are with the 
Delft Center for Systems and Control,  
TU Delft -- Delft University of Technology, The Netherlands.
A. Abate is also with the Department of Computer Science, University of Oxford, United Kingdom.
Emails: \email{S.EsmaeilZadehSoudjani@TUDelft.nl,\, A.Abate@TUDelft.nl}}}
\newcommand{\vectr}[1]{\mathbf{#1}}
\begin{document}
\maketitle
\newcommand{\slugmaster}{}

\begin{abstract}
This work discusses a two-step procedure, based on formal abstractions, 
to generate a finite-space stochastic dynamical model as an aggregation of the continuous temperature dynamics of a homogeneous population of Thermostatically Controlled Loads (TCL). 
The temperature of a single TCL is described by a stochastic difference equation and the TCL status (ON, OFF) by a deterministic switching mechanism.   
The procedure is formal as it allows the exact quantification of the error introduced by the abstraction -- as such it builds and improves on a known, 
earlier approximation technique in the literature. 
Further, the contribution discusses the extension to the case of a heterogeneous population of TCL by means of two approaches  
resulting in the notion of approximate abstractions.  
It moreover investigates the problem of global (population-level) regulation and load balancing for the case of TCL that are dependent on a control input.
The procedure is tested on a case study and benchmarked against the mentioned alternative approach in the literature.
\end{abstract}

\begin{keywords}
\noindent 
Thermostatically controlled loads,
Stochastic difference equations, 
Markov chains, 
Formal abstractions,
Probabilistic bisimulation, 
Stochastic optimal control
\end{keywords}


\pagestyle{myheadings}
\thispagestyle{plain}
\markboth{S. Esmaeil Zadeh Soudjani and A. Abate}{Aggregation and Control of Thermostatically Controlled Loads by Formal Abstractions}

\section{Introduction}
\label{sec:intro}
Models for Thermostatically Controlled Loads (TCL) have shown potential to be employed in applications of load balancing and regulation. 
The shaping of the total power consumption of large populations of TCL, 
with the goal of tracking for instance the uncertain demand over the grid, 
while abiding to strict requirements on the users comfort, 
can lead to economically relevant repercussions for an energy provider.       
With this perspective, 
recent studies have focused on the development of usable models for aggregated populations of TCL.
The goal of the seminal work in \cite{HomTCL} is that of finding a reliable aggregated model under homogeneity assumptions over the population, 
meaning that all TCL are assumed to have the same dynamics and parameters.   
Under this assumption, \cite{HomTCL} puts forward a simple Linear Time-Invariant (LTI) model for a population characterized by an input as the temperature set-point, 
and an output as the total consumed power at the population level.   
The parameters of the LTI model are estimated based on data and the model is used to track fluctuations in electricity generation from wind.  
The work in \cite{HetTCL} proposes an approach, 
based on a partitioning of the TCL temperature range, 
to obtain an aggregate state-space model for a population of TCL that is heterogeneous over the TCL thermal capacitance.  
The full information of the state variables of the model is used to synthesize a control strategy for the output (namely, the total power consumption) tracking via a (deterministic) Model Predictive Control scheme.   
The contributions in \cite{HetTCLCallaway,MKC12} extend the results in \cite{HetTCL}
by considering a population of TCL that are heterogeneous over all their parameters. 
The Extended Kalman Filter is used to estimate the states of the model and to identify its state transition matrix. 
The control of the population is performed by switching ON/OFF a portion of the TCL population.    
Additional recent contributions have targeted extensions of the work in \cite{HomTCL} towards higher-order dynamics \cite{TPS_13} or population control \cite{KSBH11}. 

Although the control strategy in \cite{HomTCL} appears to be implementable over the current infrastructure with negligible costs, 
the model parameters are not directly related to the dynamics of the TCL population.  
On the other hand,
the control methods proposed in \cite{HetTCL,HetTCLCallaway} may be practically limited by implementation costs. 
Furthermore, the derivation of the state-space models in \cite{HetTCL,HetTCLCallaway} is valid under two unrealistic assumptions: 
first, the temperature evolution is assumed to be deterministic, leading to a deterministic state-space model; 
second, after partitioning the temperatures range in separate bins, the TCL temperatures are assumed to be uniformly distributed within each state bin. 
Moreover, from a practical standpoint there seems to be no clear connection between the precision of the aggregation and the performance of the model: 
increasing the number of state bins (that is, decreasing the width of the temperature intervals) does not necessarily improve the performance of the aggregated model.

This article proposes a two-step abstraction procedure to generate a finite stochastic dynamical model as an aggregation of the dynamics of a population of TCL.  
The approach relaxes the assumptions in \cite{HetTCL,HetTCLCallaway} by providing a model based on a probabilistic evolution of the TCL temperatures.  
The abstraction is made up of two separate parts: 
(1) going from continuous-space models for a TCL population to finite state-space models, which obtains a population of Markov chains; and 
(2) taking the cross product of the population of Markov chain and lumping the obtained model, 
by finding its coarsest probabilistically bisimilar Markov chain \cite{BK08}: as such the reduced-order Markov chain is an exact representation of the larger model.
The approach is fully developed for the case of a homogeneous population of TCL, 
and extended to a heterogeneous population -- 
however, in the latter case the aggregation (second step) employs an \emph{approximate} probabilistic bisimulation, 
which introduces an error.  
In both cases, it is possible to quantify the abstraction error of the first step, 
and in the homogeneous instance the error of the overall abstraction procedure can be quantified -- this is unlike the approaches in \cite{HomTCL,HetTCL,HetTCLCallaway}.  

The article also describes a dynamical model for the time evolution of the abstraction, 
and shows convergence result as the population size grows:  
increasing number of state bins always improves the accuracy, 
leading to a convergence of the error to zero.
This result is aligned with the work in \cite{BF11} on the aggregation of continuous-time deterministic thermostatic loads. 
The analytic relation between model and population parameters
enables the development of a set-point control strategy for reference tracking over the total load power
(cf. Figure \ref{fig:control_MPC}). 
A modified version of the Kalman Filter is employed to estimate the states 
and the power consumption of the population is regulated via a simple one-step regulation approach. 
As such the control architecture does not require knowledge of the single TCL states, but leverages directly the total power consumption.  
Alternatively, a stochastic model predictive control scheme is proposed. 
Both procedures are tested on a case study and 
the abstraction technique is benchmarked against the approach from \cite{HetTCL,HetTCLCallaway}.   

\smallskip

The article is organized as follows. 
Section \ref{sec:formal_abst}, after introducing the model of the single TCL dynamics, 
describes its abstraction as a Markov Chain, 
and further discusses the aggregation of a homogeneous population of TCL -- the errors introduced by both steps are exactly quantified. 
Section \ref{sec:exten_heter} focuses on heterogeneous populations of TCL and elucidates two techniques to aggregate their dynamics: 
one based on averaging, and a second based on clustering the uncertain parameters. 
The latter approach allows for a general quantification of the error. 
Section \ref{sec:controlled} discusses TCL models endowed with a control input, 
and the synthesis of global (acting at the population level
 -- cf. Figure \ref{fig:control_MPC})
controllers to achieve regulation of the total consumed power -- this is achieved by two alternative schemes.  
Finally, all the discussed techniques are tested on a case study described in Section \ref{sec:benchmarks}. 
%
Tables \ref{tab:discrete_distributions}, \ref{tab:discrete_distributions1} recapitulate quantities and \ref{tab:out} discusses some nomenclature, 
introduced in this work. 
%
%
\section{Formal Abstraction of a Homogeneous Population of TCL}
\label{sec:formal_abst}

\subsection{Continuous Model of the Temperature of a Single TCL}

Throughout this article we use the notation $\mathbb{N}$ to denote the natural numbers,
$\mathbb{Z} = \mathbb{N}\cup\{0\}$,
$\mathbb{N}_n = \{1,2,3,\cdots,n\}$,
and $\mathbb{Z}_n =\mathbb{N}_n\cup \{0\}$.
We denote vectors with bold typeset and a letter corresponding to that of its elements. 

The evolution of the temperature in a single TCL can be characterized by the following stochastic difference equation \cite{HomTCL,MC85} 
\begin{equation}
\label{eq:tcl_dyn}
\theta(t+1) = a\,\theta(t)+(1-a)(\theta_a \pm m(t) R P_{rate}) + w(t),
\end{equation}
where
$\theta_a$ is the ambient temperature, 
$C$ and $R$ indicate the thermal capacitance and resistance respectively, 
$P_{rate}$ is the rate of energy transfer,
and $a = e^{-h/RC}$, 
with a discretization step $h$. 
The process noise $w(t), t\in\mathbb Z$, is made up by i.i.d. random variables characterized by a density function $t_w(\cdot)$.
We denote with $m(t) = 0$ a TCL in the OFF mode at time $t$, 
and with $m(t) = 1$ a TCL in the ON mode. 
In equation \eqref{eq:tcl_dyn} a plus sign is used for a heating TCL, 
whereas a minus sign for a cooling TCL. 
In this work we focus on a population of cooling TCL, 
with the understanding that the case of heating TCL can be similarly obtained. 
The distributions of the initial temperature and mode are denoted by $\pi_0(m,\theta)$, respectively.  
The temperature dynamics for the cooling TCL is regulated by the discrete switching control 
$m(t+1) = f(m(t),\theta(t))$, 
where
\begin{equation}
\label{eq:switch}
f(m,\theta)=
\left\{
\begin{array}{ll}
0, & \theta < \theta_s - \delta/2 \doteq \theta_{-}\\ 
1,  & \theta > \theta_s + \delta/2 \doteq \theta_{+}\\
m, & \text{else,}
\end{array}
\right. 
\end{equation}
where $\theta_s$ denotes a temperature set-point and $\delta$ a dead-band, 
and together characterizing a temperature range.  
The power consumption of the single TCL at time $t$ is equal to $\frac{1}{\eta} m(t) P_{rate}$,
which is equal to zero in the OFF mode and positive in the ON mode, 
and where the parameter $\eta$ is the Coefficient Of Performance (COP).
The constant $\frac{1}{\eta}P_{rate}$, namely the power consumption of TCL in the ON mode, 
will be shortened as $P_{rate,ON}$ in the sequel. 

\subsection{Finite Abstraction of a Single TCL by State-Space Partitioning}
\label{sec:state_partition}
The composition of the dynamical equation in \eqref{eq:tcl_dyn} with the algebraic relation in \eqref{eq:switch} 
allows us to consider a single TCL as a Stochastic Hybrid System \cite{APLS08},
namely as a discrete-time Markov process evolving over a hybrid (namely, discrete/continuous) state-space. 
A hybrid state-space is characterized by a variable $s = (m,\theta) \in \mathbb Z_1\times\mathbb R$ with two components, 
a discrete ($m$) and a continuous ($\theta$) one.  
The one-step transition density function of the stochastic process, $t_s(\cdot|s)$, 
made up of the dynamical equations in \eqref{eq:tcl_dyn}, \eqref{eq:switch}, 
and conditional on point $s$,
can be computed as
\begin{align*}
t_s\left((m',\theta')|(m,\theta)\right) = 
\delta[m'-f(m,\theta)]t_w(\theta'-a\,\theta-(1-a)(\theta_a- m R P_{rate})),
\end{align*}
where $\delta[\cdot]$ denotes the discrete unit impulse function.  
This interpretation allows leveraging an abstraction technique,
proposed in \cite{APKL10} and extended in \cite{SA13,SAH12}, 
aimed at reducing a discrete-time, uncountable state-space Markov process into a 
(discrete-time) finite-state Markov chain. 
This abstraction is based on a state-space partitioning procedure as follows. 
Consider an arbitrary, finite partition of the continuous domain $\mathbb{R} = \cup_{i=1}^{n}\Theta_i$, 
and arbitrary representative points within the partitioning regions denoted by $\{\bar\theta_i\in\Theta_i,i\in\mathbb{N}_n\}$. 
Introduce a finite-state Markov chain $\mathcal M$, 
characterized by $2n$ states $s_{im} = (m,\bar\theta_i),m\in\mathbb{Z}_1,i\in\mathbb{N}_n$. 
The transition probability matrix related to $\mathcal M$ is made up of the following elements 
\begin{align}
\label{eq:trans_prob}
\mathsf P(s_{im},s_{i' m'})& = \int_{\Theta_{i'}}t_s\left((m',\theta')|m,\bar\theta_i\right)d\theta',\quad
\forall m'\in\mathbb{Z}_1,i'\in\mathbb{N}_n.
\end{align} 
The initial probability mass for $\mathcal M$ is obtained as
$
p_0(s_{im}) = \int_{\Theta_{i}}\pi_0(m,\theta)d\theta.
$
For ease of notation we rename the states of $\mathcal{M}$ by the bijective map 
$\ell(s_{im}) = m n + i, m\in\mathbb{Z}_1, i\in\mathbb{N}_n$,
and accordingly we introduce the new notation
\begin{equation*}
P_{ij} =\mathsf P(\ell^{-1}(i),\ell^{-1}(j)),\quad p_{0i} = p_0(\ell^{-1}(i)),\quad \forall i,j\in\mathbb{N}_{2n}.
\end{equation*}

Notice that the conditional density function of the stochastic system capturing the dynamics of a single TCL is discontinuous,  
due to the presence of equation \eqref{eq:switch}. 
This can be emphasized by the following alternative representation of the discontinuity in the discrete conditional distribution,
for all $m,m' \in\mathbb{Z}_1,\theta\in\mathbb{R}$:
\begin{align*}
\delta\left[m'-f(m,\theta)\right] & =
m'\mathbb{I}_{(\theta_+,\infty)}(\theta) + (1-m')\mathbb{I}_{(-\infty,\theta_-)}(\theta)
+ (1-|m-m'|)\mathbb{I}_{[\theta_-,\theta_+]}(\theta),
\end{align*}
where $\mathbb{I}_{\mathcal A} (\cdot)$ denotes the indicator function of a general set $\mathcal A$. 
The selection of the partitioning sets then requires special attention: 
a convenient way to obtain that is to select a partition for the dead-band $[\theta_-,\theta_+]$, 
thereafter extending it to a partition covering the whole real line $\mathbb R$ (cf. Figure \ref{fig:part}). 
Let us select two constants $\mathsf l,\mathsf m\in\mathbb{N}$, $\mathsf l<\mathsf m$,
compute the partition size $\upsilon = \delta/2\mathsf l$ and quantity $\mathcal L = 2\mathsf m\upsilon$. 
Now construct the boundary points of the partition sets $\{\theta_i\}_{i=-\mathsf m}^{i=\mathsf m}$ for the temperature axis as follows: 
\begin{align}
& \theta_{\pm \mathsf l} = \theta_{s}\pm\delta/2,\quad \theta_{\pm\mathsf m} = \theta_{s}\pm\mathcal L/2,\quad \theta_{i+1} = \theta_i+\upsilon,\nonumber\\
& \mathbb{R} = \cup_{i=1}^{n}\Theta_i,\quad\Theta_1 = (-\infty,\theta_{-\mathsf m}),\quad \Theta_n = [\theta_{\mathsf m},\infty),\label{eq:partition}\\
& \Theta_{i+1} = [\theta_{-\mathsf m+i-1},\theta_{-\mathsf m+i}),\quad i\in\mathbb N_{n-2}, \quad n = 2\mathsf m+2,\nonumber
\end{align}
and let us render the Markov states of the infinite-length intervals $\Theta_1, \Theta_n$ absorbing.

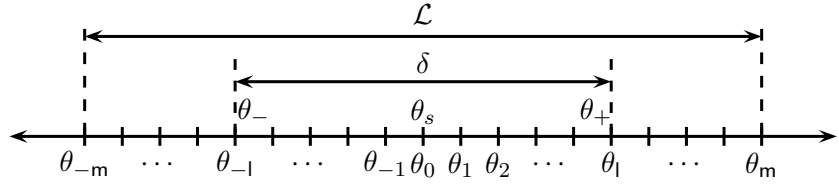
\begin{figure}
\centering
\scalebox{1}
{
\begin{pspicture}(-1.5,-1)(11,1)
\psline[linewidth=0.04cm,arrowsize=0.05291667cm 3.0,arrowlength=1.4,arrowinset=0.4]{<->}(-1,-0.725)(10,-0.725)
\psline[linewidth=0.04cm](0,-0.6)(0,-0.85)
\psline[linewidth=0.04cm](0.5,-0.6)(0.5,-0.85)
\psline[linewidth=0.04cm](1,-0.6)(1,-0.85)
\psline[linewidth=0.04cm](1.5,-0.6)(1.5,-0.85)
\psline[linewidth=0.04cm](2,-0.6)(2,-0.85)
\psline[linewidth=0.04cm](2.5,-0.6)(2.5,-0.85)
\psline[linewidth=0.04cm](3,-0.6)(3,-0.85)
\psline[linewidth=0.04cm](3.5,-0.6)(3.5,-0.85)
\psline[linewidth=0.04cm](4,-0.6)(4,-0.85)
\psline[linewidth=0.04cm](4.5,-0.6)(4.5,-0.85)
\psline[linewidth=0.04cm](5,-0.6)(5,-0.85)
\psline[linewidth=0.04cm](5.5,-0.6)(5.5,-0.85)
\psline[linewidth=0.04cm](6,-0.6)(6,-0.85)
\psline[linewidth=0.04cm](6.5,-0.6)(6.5,-0.85)
\psline[linewidth=0.04cm](7,-0.6)(7,-0.85)
\psline[linewidth=0.04cm](7.5,-0.6)(7.5,-0.85)
\psline[linewidth=0.04cm](8,-0.6)(8,-0.85)
\psline[linewidth=0.04cm](8.5,-0.6)(8.5,-0.85)
\psline[linewidth=0.04cm](9,-0.6)(9,-0.85)
\psline[linewidth=0.04cm,arrowsize=0.05291667cm 3.0,arrowlength=1.4,arrowinset=0.4]{<->}(2,0)(7,0)
\psline[linewidth=0.04cm,arrowsize=0.05291667cm 3.0,arrowlength=1.4,arrowinset=0.4]{<->}(0,0.6)(9,0.6)
\psline[linewidth=0.04cm,linestyle=dashed,dash=0.16cm 0.16cm](2,0.3)(2,-0.7)
\psline[linewidth=0.04cm,linestyle=dashed,dash=0.16cm 0.16cm](7,0.3)(7,-0.7)
\psline[linewidth=0.04cm,linestyle=dashed,dash=0.16cm 0.16cm](0,0.7)(0,-0.7)
\psline[linewidth=0.04cm,linestyle=dashed,dash=0.16cm 0.16cm](9,0.7)(9,-0.7)
\rput(4.5,0.9){$\mathcal L$}
\rput(4.5,0.25){$\delta$}
\rput(2.25,-0.4){$\theta_-$}
\rput(6.8,-0.4){$\theta_+$}
\rput(4.5,-0.4){$\theta_s$}
\rput(3,-1.1){$\cdots$}
\rput(4,-1.1){$\theta_{-1}$}
\rput(4.5,-1.1){$\theta_0$}
\rput(5,-1.1){$\theta_1$}
\rput(5.5,-1.1){$\theta_2$}
\rput(6.2,-1.1){$\cdots$}
\rput(7,-1.1){$\theta_{\mathsf l}$}
\rput(8,-1.1){$\cdots$}
\rput(1,-1.1){$\cdots$}
\rput(2,-1.1){$\theta_{-\mathsf l}$}
\rput(9,-1.1){$\theta_{\mathsf m}$}
\rput(0,-1.1){$\theta_{-\mathsf m}$}
\end{pspicture} 
}
\caption{Partitioning of the temperature axis for the abstraction of the dynamics of a single TCL.}
\label{fig:part}
\end{figure}

Let us emphasize that the discontinuity in the discrete transition kernel $\delta\left[m'-f(m,\theta)\right]$ 
and the above partition induce the following structure on the transition probability matrix of the chain $\mathcal M$:
\begin{equation}
\label{eq:structure_P}
P =
\left[
\begin{array}{cccc}
Q_{11} & 0\\
0 & Q_{22}\\
Q_{31} & 0\\
0 & Q_{42}
\end{array}
 \right],
\end{equation}
where $Q_{11},Q_{42}\in\mathbb R^{(\mathsf m+\mathsf l+1)\times n}$, 
whereas $Q_{22},Q_{31}\in\mathbb R^{(\mathsf m-\mathsf l+1)\times n}$, 
which leads to $P \in\mathbb R^{2n \times 2n}$. 

Clearly, the abstraction of the dynamics in \eqref{eq:tcl_dyn}-\eqref{eq:switch} over this partition of the state-space leads to a discretization error: 
in the next section we formally derive bounds on this error as a function of the partition size $\upsilon$ and of the quantity $\mathcal L$. 
This guarantees the convergence (in expected value) of the power consumption of the model to that of the entire population \cite{APKL10,SA13,SAH12}.

\subsection{Aggregation of a Population of TCL by Bisimulation Relation}

Consider now a population of $n_p$ homogeneous TCL, 
that is a population of TCL which, 
after possible rescaling of \eqref{eq:tcl_dyn}-\eqref{eq:switch}, 
share the same set of parameters $\theta_s,\delta,\theta_a,C,R, P_{rate},P_{rate,ON}$ (and thus $\eta$), $h$, and noise terms $t_w(\cdot)$.
Each TCL can be abstracted as a Markov chain $\mathcal M$ with the same transition probability matrix $P = [P_{ij}]_{i,j}$, 
where $i,j\in\mathbb{N}_{2n}$, 
which leads to a population of $n_p$ homogeneous Markov chains.   
The initial probability mass vector $p_0 = [p_{0i}]_i$ might vary over the population.

The homogeneous population of TCL can be represented by a single Markov chain $\varXi$, 
built as the cross product of the $n_p$ homogeneous Markov chains.  
The state of the Markov chain $\varXi$ is 
\begin{equation*}
\vectr z = [z_1, z_2, \cdots, z_{n_p}]^T\in\mathcal Z = \mathbb N_{2n}^{n_p},
\end{equation*}
where $z_j\in\mathbb N_{2n}$ represents the state of the $j^{\text{th}}$ Markov chain. 
We denote by $P_{\varXi}$ the transition probability matrix of $\varXi$. 

It is understood that $\varXi$, having exactly $(2n)^{n_p}$ states, can in general be quite large, 
and thus cumbersome to manipulate computationally. 
As the second step of the abstraction procedure,  
we are interested in aggregating this model and employ the notion of probabilistic bisimulation to achieve this \cite{BK08}.   
Let us introduce a finite set of atomic propositions as 
a constrained vector with a dimension corresponding to the number of bins of the single $\mathcal M$: 
\begin{equation*}
AP = \left\{\vectr x = [x_1,x_2\cdots,x_{2n}]^T\in\mathbb Z_{n_p}^{2n}\bigg|\sum_{r=1}^{2n}x_r = n_p\right\}. 
\end{equation*}
The labeling function $L:\mathcal Z\rightarrow AP$ associates to a configuration $\vectr z$ of $\varXi$ 
a vector $\vectr x = L(\vectr z)$, which elements $x_i\in\mathbb Z_{n_p}$ count the number of thermostats in bin $i, i\in\mathbb N_{2n}$.  
Notice that
the set $AP$ is finite with cardinality $|AP| = (n_p+2n-1)!/(n_p!(2n-1)!)$,
which for $n_p\ge 2$ is (much) less than the cardinality $(2n)^{n_p}$ of $\varXi$.
%
%

Let us define an equivalence relation $\mathcal R$ \cite{BK08} on the state-space of $\mathcal Z$, 
such that 
\begin{equation*}
(\vectr z,\vectr z')\in \mathcal R \Leftrightarrow L(\vectr z) = L(\vectr z'). 
\end{equation*}
A pair of elements of $\mathcal Z$ is in the relation whenever the corresponding number of TCL in each introduced bin are the same (recall that the TCL are assumed to be homogeneous).   
Such an equivalence relation provides a partition of the state-space of $\mathcal Z$ into equivalence classes belonging to the quotient set $\mathcal Z / \mathcal R$,
where each class is uniquely specified by the label associated to its elements.
We plan to show that $\mathcal R$ is an exact probabilistic bisimulation relation on $\varXi$ \cite{BK08}, 
which requires proving that, 
for any set $\mathscr T \in \mathcal Z / \mathcal R$ and any pair $(\vectr z,\vectr z')\in \mathcal R$ 
\begin{align}
\label{tp4pb}
P_{\varXi} (\vectr z,\mathscr T) = P_{\varXi} (\vectr z',\mathscr T), 
\end{align}
This is achieved by Corollary \ref{cor:pbs} in the next Section. 
We now focus on the stochastic properties of $\varXi$, 
which we study through its quotient Markov chain obtained via $\mathcal R$. 

\subsection{Properties of the Aggregated Quotient Markov Chain}  
\label{sec:stoch_proper}

\begin{table}
\centering
\caption{Construction of probability distributions of interest in this work}
\scalebox{0.85}{%
\begin{tabular}{| l l |}
\hline
Name of the distribution & Interpretation\\
\hline
Bernoulli trials &  the result of a random event that takes on one of $\mathsf n=2$ possible outcomes\\
binomial & sum of independent Bernoulli trials with the same success probability\\
Poisson-binomial & sum of independent Bernoulli trials with success probabilities $p_1,p_2,\cdots,p_{n_p}$\\ 
\hline 
categorical & result of a random event that takes on one of $\mathsf n>2$ possible outcomes\\
multinomial & sum of categorical random variables with the same parameters\\
generalized multinomial & sum of categorical random variables with different parameters\\
\hline
\end{tabular}
}
\label{tab:discrete_distributions}
\end{table}

\begin{table}
\centering
\caption{Properties of probability distributions of interest in this work}
\scalebox{0.82}{%
\begin{tabular}{| l l c c c|}
\hline
Name of the distribution & Support & Parameters & Mean & Variance and covariance\\
\hline
Bernoulli & $\mathbb Z_1$ & $p$ success probability & $p$ & $p(1-p)$\\ 
binomial &$\mathbb Z_{n_p}$ & $p,n_p$ & $n_pp$ & $n_pp(1-p)$\\
Poisson-binomial &$\mathbb Z_{n_p}$ & $p_1,\cdots,p_{n_p}$ & $\sum\limits_{r=1}^{n_p} p_r$ & $\sum\limits_{r=1}^{n_p} p_r(1-p_r)$\\
\hline 
categorical & $\mathbb{N}_{\mathsf n}$ & $\vectr{p} = [p_1,p_2,\cdots,p_{\mathsf n}]$ & $\sum\limits_{r=1}^{\mathsf n} rp_r$& $\sum\limits_{r=1}^{\mathsf n}r^2p_r - \left(\sum\limits_{r=1}^{\mathsf n} rp_r\right)^2$ \\
multinomial & $\mathbb{N}_{\mathsf n}^{n_p}$& $\vectr{p}_r = [p_1,p_2,\cdots,p_{\mathsf n}], r\in\mathbb N_{n_p}$ & $n_pp_i$ &
$n_p p_i(1-p_i)$, 
$-n_p p_i p_j$ \\
generalized multinomial & $\mathbb{N}_{\mathsf n}^{n_p}$ & $\vectr{p}_r = [p_{r1},p_{r2},\cdots,p_{r\mathsf n}],r\in\mathbb N_{n_p}$ &
$\sum\limits_{r=1}^{n_p} p_{ri}$ &
$\sum\limits_{r=1}^{n_p} p_{ri}(1-p_{ri})$, 
$-\sum\limits_{r=1}^{n_p} p_{ri}p_{rj}$ \\
\hline
\end{tabular}
}
\label{tab:discrete_distributions1}
\end{table}

Let us recall the definition of known discrete random variables that are to be used for describing quantities obtained by the abstraction procedure.   
The sum of independent Bernoulli trials characterized by the same success probability follows a \emph{binomial distribution} \cite{D05}.
If a random variable $Y$ is instead defined as the sum of $n_p$ independent Bernoulli trials with different success probabilities ($p_1,p_2,\cdots,p_{n_p}$), 
then $Y$ has a \emph{Poisson-binomial distribution} \cite{PoissBin93} with the sample space $\mathbb Z_{n_p}$ and the following mean and variance:
\begin{equation*}
\mathbb E[Y] = \sum_{r=1}^{n_p}p_r,\quad var(Y) = \sum_{r=1}^{n_p}p_r(1-p_r).
\end{equation*}
As a generalization of the Bernoulli trials, 
a \emph{categorical distribution} describes the result of a random event that takes on one of $\mathsf n>2$ possible outcomes. 
%
Its sample space is taken to be 
$\mathbb{N}_{\mathsf n}$ and its 
probability mass function $\vectr{p} = [p_1,p_2,\cdots,p_{\mathsf n}]$, 
such that $\sum_{i=1}^{\mathsf n} p_i = 1$. 
A \emph{multinomial distribution} is a generalization of the binomial distribution as the sum of categorical random variables with the same parameters. 
The sum of categorical random variables with different parameters follows instead the \emph{generalized multinomial distribution}, defined as follows \cite{JKB97}.
%
%
%
Consider $n_p$ independent categorical random variables defined over the same sample space $\mathbb N_{\mathsf n}$ but with different outcome probabilities  
$\vectr p_r = [p_{r 1},p_{r 2},\cdots,p_{r \mathsf n}]$, $r\in\mathbb N_{n_p}$. 
Let the random variable $Y_i$ indicate the number of times the $i^{\text{th}}$ outcome is observed over $n_p$ samples.
Then vector $\vectr Y = [Y_1, ..., Y_{\mathsf n}]^T$ has a generalized multinomial distribution characterized by 
\begin{align*}
& \mathbb E[Y_i] = \sum_{r=1}^{n_p} p_{ri},\quad
var(Y_i) = \sum_{r=1}^{n_p} p_{ri}(1-p_{ri}),\quad
cov(Y_i,Y_j) = -\sum_{r=1}^{n_p} p_{ri}p_{rj}\quad (i\ne j).
\end{align*}
%
%

\medskip

We now study the one-step probability mass function associated to the codomain of the labeling function (that is, any of the labels), 
conditional on the state of the chain $\varXi$. 
\begin{theorem}
\label{thm:pois_bino}
The conditional random variable $(x_i(t+1)|\vectr z(t))$, $i\in\mathbb N_{2n}$, 
has a Poisson-binomial distribution over the sample space $\mathbb Z_{n_p}$, 
with the following mean and variance:
\begin{align}
\label{eq:mean_var_Poi}
\mathbb E[x_i(t+1)|\vectr z(t)] = \sum\limits_{r=1}^{n_p}P_{z_r(t)i},\quad
var(x_i(t+1)|\vectr z(t)) = \sum\limits_{r=1}^{n_p}P_{z_r(t)i}(1-P_{z_r(t)i}).
\end{align}
\end{theorem}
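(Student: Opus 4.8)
The plan is to recognize that, conditional on $\vectr z(t)$, the count $x_i(t+1)$ is a sum of $n_p$ independent Bernoulli trials, and then to read off the distribution and its moments from the Poisson-binomial facts recalled just before the statement.

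First I would exploit the cross-product construction of $\varXi$: since $\varXi$ is the product of the $n_p$ homogeneous copies of $\mathcal M$, its transition probability matrix factorizes as $P_{\varXi}(\vectr z,\vectr z') = \prod_{r=1}^{n_p} P_{z_r z_r'}$, with each $P_{z_r z_r'}$ given by \eqref{eq:trans_prob}. Consequently, conditional on the configuration $\vectr z(t)$, the components $z_1(t+1),\dots,z_{n_p}(t+1)$ are mutually independent, and $z_r(t+1)$ is distributed over $\mathbb N_{2n}$ according to the $z_r(t)$-th row of the single-TCL matrix $P$.

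Next, for a fixed bin index $i\in\mathbb N_{2n}$ I would introduce the indicator variables $B_r = \mathbb I_{\{i\}}(z_r(t+1))$ for $r\in\mathbb N_{n_p}$. Conditional on $\vectr z(t)$, each $B_r$ is a Bernoulli trial with success probability $\Pr[z_r(t+1)=i \mid z_r(t)] = P_{z_r(t)\,i}$, and the $B_r$ are independent by the previous step. By the very definition of the labeling function $L$, the quantity $x_i(t+1)$ counts how many of the $n_p$ chains occupy bin $i$ at time $t+1$, i.e. $x_i(t+1) = \sum_{r=1}^{n_p} B_r$. Hence $(x_i(t+1)\mid \vectr z(t))$ is a sum of $n_p$ independent, generally non-identical Bernoulli trials with parameters $p_r = P_{z_r(t)\,i}$, which is exactly a Poisson-binomial random variable on the sample space $\mathbb Z_{n_p}$. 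The mean and variance in \eqref{eq:mean_var_Poi} then follow either by specializing the Poisson-binomial moment formulas stated above, or directly from linearity of expectation and from the conditional independence of the $B_r$ for the variance: $\mathbb E[x_i(t+1)\mid\vectr z(t)] = \sum_{r=1}^{n_p} P_{z_r(t)\,i}$ and $var(x_i(t+1)\mid\vectr z(t)) = \sum_{r=1}^{n_p} P_{z_r(t)\,i}(1-P_{z_r(t)\,i})$.

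The only point that genuinely requires care is the justification of the conditional independence of the $z_r(t+1)$ from the product structure of $P_{\varXi}$ — together with the observation that, although the full vector $\vectr x(t+1)$ has correlated components (a generalized multinomial law, as in Table~\ref{tab:discrete_distributions1}), each individual component $x_i(t+1)$ nonetheless collapses to a plain Poisson-binomial. Once that is in place, the remaining moment computations are entirely routine.
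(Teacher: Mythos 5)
Your proof is correct and follows essentially the same route as the paper's: conditional on $\vectr z(t)$, each chain independently lands in bin $i$ with probability $P_{z_r(t)i}$, so $x_i(t+1)$ is a sum of $n_p$ independent Bernoulli trials and hence Poisson-binomial with the stated moments. You simply make explicit the conditional independence coming from the product structure of $P_{\varXi}$, which the paper leaves implicit.
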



Conditional on an observation $\vectr x = [x_1,x_2,\cdots,x_{2n}]^T$ at time $t$ 
over the Markov chain $\varXi$, 
it is of interest to compute the probability mass function of the conditional random variable $(x_i(t+1)|\vectr x(t))$ as 
$\mathsf P(x_i(t+1) = j| \vectr x(t))$, for any $j\in\mathbb Z_{n_p}$ --- notice the difference with the quantity discussed in \eqref{eq:mean_var_Poi}.
%
For any label $\vectr x = [x_1,\cdots,x_{2n}]^T$ there are exactly
$n_p!/(x_1!x_2!\cdots x_{2n}!)$
states of $\varXi$ such that $L(\vectr z) = \vectr x$.
We use the notation $\vectr z\rightarrow\vectr x$ to indicate the states in $\varXi$ associated to label $\vectr x$, that is $\vectr z: L(\vectr z) = \vectr x$.

Based on the law of total probability for conditional probabilities, we can write
\begin{align}
\mathsf P(x_i(t+1)  = j|\vectr x(t))
&= \frac{\sum_{\vectr z(t)\rightarrow\vectr x(t)}
\mathsf P(x_i(t+1) = j|\vectr z(t))\mathsf P(\vectr z(t))}{\mathsf P(\vectr x(t))}\label{eq:total_prob}\\  
& = \mathsf P(x_i(t+1) = j|\vectr z(t))\frac{\sum_{\vectr z(t)\rightarrow\vectr x(t)}\mathsf P(\vectr z(t))}
{\mathsf P(\vectr x(t))}
= \mathsf P(x_i(t+1) = j|\vectr z(t)), \nonumber
\end{align}
where the sum is over all states $\vectr z(t)$ of $\varXi$ such that $L(\vectr z(t)) = \vectr x(t)$: 
in these states we have 
$x_1(t)$ Markov chains in state 1 with probability $P_{1i}$,
$x_2(t)$ Markov chains in state 2 with probability $P_{2i}$,
and so on.
The simplification above is legitimate since the probability of having a label $\vectr x = (x_1,x_2,\cdots,x_{2n})$
is exactly the sum of the probabilities associated to the states $\vectr z$ generating such a label. 
This further allows expressing the quantities in \eqref{eq:mean_var_Poi} as  
\begin{align*}
\mathbb E[x_i(t+1)|\vectr z(t)] = \sum_{r=1}^{n_p}P_{z_r(t)i} = \sum_{r=1}^{2n}x_r(t)P_{ri}. 
\end{align*}


The generalization of the previous results to vector labels leads to the following statement. 
\begin{theorem}
\label{thm:poiss_gen_mult}
The conditional random variables $(x_i(t+1)|\vectr x(t))$ are characterized by Poisson-binomial distributions, 
whereas the conditional random vector $(\vectr x(t+1)|\vectr x(t))$ by a generalized multinomial distribution. 
Their mean, variance, and covariance are described by 
\begin{align*}
\begin{array}{l}
\mathbb E[x_i(t+1)|\vectr x(t)] = \sum_{r=1}^{2n}x_r(t)P_{ri},\\[1ex]
var(x_i(t+1)|\vectr x(t)) =\sum_{r=1}^{2n} x_r(t)P_{ri}(1-P_{ri}),\\[1ex]
cov(x_i(t+1),x_j(t+1)|\vectr x(t)) = -\sum_{r=1}^{2n}x_r(t) P_{ri}P_{rj},
\end{array}
\end{align*}
for all $i,j\in \mathbb N_{2n},i\ne j$.
\end{theorem}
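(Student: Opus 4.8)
The plan is to reduce everything to Theorem~\ref{thm:pois_bino} and to the identity \eqref{eq:total_prob} already established, together with the definitions of the Poisson-binomial and generalized multinomial laws recalled above. First I would dispatch the scalar claim: by \eqref{eq:total_prob}, for any label $\vectr x(t)$ and any state $\vectr z(t)\rightarrow\vectr x(t)$ one has $\mathsf P(x_i(t+1)=j\mid\vectr x(t)) = \mathsf P(x_i(t+1)=j\mid\vectr z(t))$ for every $j\in\mathbb Z_{n_p}$, so the conditional law of $(x_i(t+1)\mid\vectr x(t))$ coincides with that of $(x_i(t+1)\mid\vectr z(t))$, which is Poisson-binomial over $\mathbb Z_{n_p}$ by Theorem~\ref{thm:pois_bino}. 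It then remains to re-index the success probabilities: in a state $\vectr z(t)\rightarrow\vectr x(t)$ there are $x_r(t)$ chains in bin $r$, each moving to bin $i$ with probability $P_{ri}$, so the $n_p$ success probabilities of the Poisson-binomial are $P_{ri}$ repeated $x_r(t)$ times for $r\in\mathbb N_{2n}$; substituting this grouping into the mean and variance of Theorem~\ref{thm:pois_bino} yields $\mathbb E[x_i(t+1)\mid\vectr x(t)] = \sum_{r=1}^{2n}x_r(t)P_{ri}$ and $var(x_i(t+1)\mid\vectr x(t)) = \sum_{r=1}^{2n}x_r(t)P_{ri}(1-P_{ri})$.

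Next I would treat the vector-valued claim. Fixing a state $\vectr z(t)$, the $n_p$ constituent chains evolve independently and chain $r$ jumps to bin $i'$ with probability $P_{z_r(t)i'}$; hence chain $r$ contributes an independent categorical random variable over $\mathbb N_{2n}$ with parameter vector $[P_{z_r(t)1},\dots,P_{z_r(t)2n}]$, and since $x_i(t+1)$ counts how many of these $n_p$ draws equal $i$, the vector $(\vectr x(t+1)\mid\vectr z(t))$ is by definition generalized multinomial, with mean, variance and covariance obtained from the generalized multinomial formulas recalled above upon setting $p_{ri} = P_{z_r(t)i}$. Re-running the total-probability argument of \eqref{eq:total_prob} for the joint event $\{\vectr x(t+1)=\vectr y\}$ — and using that $\mathsf P(\vectr x(t+1)=\vectr y\mid\vectr z(t))$ depends on $\vectr z(t)$ only through the bin counts $x_r(t)$, by homogeneity and independence of the chains — gives $\mathsf P(\vectr x(t+1)=\vectr y\mid\vectr x(t)) = \mathsf P(\vectr x(t+1)=\vectr y\mid\vectr z(t))$, so $(\vectr x(t+1)\mid\vectr x(t))$ is generalized multinomial as well. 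Regrouping $\sum_{r=1}^{n_p}(\cdot)$ over chains into $\sum_{r=1}^{2n}x_r(t)(\cdot)$ exactly as in the scalar case produces the stated mean, variance, and the covariance $cov(x_i(t+1),x_j(t+1)\mid\vectr x(t)) = -\sum_{r=1}^{2n}x_r(t)P_{ri}P_{rj}$ for $i\ne j$; consistency with the first part is automatic since the marginals of a generalized multinomial are Poisson-binomial.

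The only genuinely delicate point is the well-definedness step — that $\mathsf P(\vectr x(t+1)=\vectr y\mid\vectr z(t))$ is independent of which representative $\vectr z(t)\rightarrow\vectr x(t)$ is chosen — which is precisely the label-invariance underlying Corollary~\ref{cor:pbs} (the bisimulation property of $\mathcal R$). Everything else is bookkeeping with known distributional identities, so I would take care to state this invariance explicitly before extending \eqref{eq:total_prob} from the scalar marginal to the vector-valued event, rather than silently reusing the scalar computation.
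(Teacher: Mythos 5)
Your proposal is correct and follows essentially the same route as the paper, which justifies this theorem through the in-text total-probability computation in \eqref{eq:total_prob}, Theorem~\ref{thm:pois_bino}, and the regrouping of the $n_p$ success probabilities $P_{z_r(t)i}$ into the $2n$ bin counts $x_r(t)$. Your explicit treatment of the label-invariance of $\mathsf P(\vectr x(t+1)=\vectr y\mid\vectr z(t))$ (derived from homogeneity rather than from Corollary~\ref{cor:pbs}, which would be circular since that corollary is deduced from this theorem) is a welcome tightening of a step the paper leaves implicit.
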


\smallskip

Theorem \ref{thm:poiss_gen_mult} indicates that the distribution of the conditional random variable $(\vectr x(t+1)|\vectr x(t))$ is independent of the underlying state $\vectr z(t)\rightarrow \vectr x(t)$ of $\varXi$. 
With focus on equation \eqref{tp4pb}, this result allows us to claim the following. 
\begin{corollary}
\label{cor:pbs}
The equivalence relation $\mathcal R$ is an exact probabilistic bisimulation over the Markov chain $\varXi$. 
The resulting quotient Markov chain is the coarsest probabilistic bisimulation of $\varXi$. 
\end{corollary}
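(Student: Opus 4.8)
The plan is to verify directly the two defining properties of a (exact) probabilistic bisimulation in the sense of \cite{BK08} for the equivalence relation $\mathcal R$, and then to establish minimality of the induced partition. First I would recall that $\mathcal R$ is by construction the kernel of the labeling map $L$, so the requirement that related states carry the same atomic proposition is immediate: $(\vectr z,\vectr z')\in\mathcal R$ means precisely $L(\vectr z)=L(\vectr z')$. What remains is the transition condition \eqref{tp4pb}. Here I would exploit the bijective correspondence between equivalence classes $\mathscr T\in\mathcal Z/\mathcal R$ and labels: every class has the form $\mathscr T=\{\vectr z\in\mathcal Z:L(\vectr z)=\vectr x_{\mathscr T}\}$ for a unique $\vectr x_{\mathscr T}\in AP$, so that for any state $\vectr z$ of $\varXi$ the quantity $P_{\varXi}(\vectr z,\mathscr T)$ equals $\mathsf P(\vectr x(t+1)=\vectr x_{\mathscr T}\mid\vectr z(t)=\vectr z)$, i.e. the one-step probability that the configuration label becomes $\vectr x_{\mathscr T}$.

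The analytic heart of the argument is then already contained in Theorem \ref{thm:poiss_gen_mult}, together with the computation in \eqref{eq:total_prob}: the conditional law of $\vectr x(t+1)$ given $\vectr z(t)$ depends on $\vectr z(t)$ only through its label $L(\vectr z(t))$ --- equivalently, the generalized multinomial parameters $\sum_{r}x_r(t)P_{ri}$ are functions of $\vectr x(t)$ alone. Hence, if $(\vectr z,\vectr z')\in\mathcal R$, then $\mathsf P(\vectr x(t+1)=\vectr x_{\mathscr T}\mid\vectr z(t)=\vectr z)=\mathsf P(\vectr x(t+1)=\vectr x_{\mathscr T}\mid\vectr z(t)=\vectr z')$ for every class $\mathscr T$, which is exactly \eqref{tp4pb}. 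The same observation shows that the quotient transition kernel $\mathsf P(\vectr x_{\mathscr T'}\mid\vectr x_{\mathscr T})$ is well defined, independent of the chosen representative, so the quotient Markov chain on $\mathcal Z/\mathcal R\cong AP$ indeed exists and is driven by the generalized multinomial dynamics of Theorem \ref{thm:poiss_gen_mult}.

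For the claim that this quotient is the \emph{coarsest} probabilistic bisimulation, I would invoke the standard fact that on a finite labeled Markov chain the union of all probabilistic bisimulations is itself an equivalence and a bisimulation, hence the coarsest one. Any probabilistic bisimulation may relate only states with identical atomic propositions, i.e. identical labels, so every bisimulation is contained in $\mathcal R$; combined with the first two paragraphs ($\mathcal R$ is a bisimulation), this forces $\mathcal R$ to coincide with the maximal bisimulation, and therefore its quotient is the coarsest. I do not expect a genuine obstacle: the probabilistic content has been discharged in Theorem \ref{thm:poiss_gen_mult}, and the only care needed is bookkeeping --- matching the \cite{BK08} definition of probabilistic bisimulation (in which the labeling is built in, so the partition by $L$ is automatically the finest admissible refinement) and keeping the class/label identification consistent throughout.
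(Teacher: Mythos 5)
Your proposal is correct and follows essentially the same route as the paper, which presents the corollary as an immediate consequence of Theorem \ref{thm:poiss_gen_mult} and the computation in \eqref{eq:total_prob}: the conditional law of the label $\vectr x(t+1)$ given $\vectr z(t)$ depends on $\vectr z(t)$ only through $L(\vectr z(t))$, which is exactly \eqref{tp4pb}. Your explicit treatment of the coarseness claim (every bisimulation must refine label-equivalence, and $\mathcal R$ is itself a bisimulation, hence maximal) is the standard argument the paper leaves implicit, and it is sound.
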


Without loss of generality, 
let us normalize the values of the labels $\vectr x$ by the total population size $n_p$, thus obtaining a new variable $\vectr X$. 
The conditional variable $(\vectr X(t+1)|\vectr X(t))$ is characterized
by the following parameters, 
for all $i,j\in \mathbb N_{2n},i\ne j$:  
\begin{align}
\label{eq:cov_mat}
& \mathbb E[X_i(t+1)|\vectr X(t)] =\sum_{r=1}^{2n} X_r(t)P_{ri},\nonumber\\
& var(X_i(t+1)|\vectr X(t)) = \frac{1}{n_p}\sum_{r=1}^{2n}X_r(t)P_{ri}(1-P_{ri}),\\
& cov(X_i(t+1),X_j(t+1)|\vectr X(t)) = -\frac{1}{n_p}\sum_{r=1}^{2n}X_r(t) P_{ri}P_{rj}.\nonumber
\end{align}
Based on the expression of the first two moments of $(\vectr X(t+1)|\vectr X(t))$,
we apply a translation (shift) on this conditional random vector as 
\begin{align*}
\left\{
\begin{array}{l}
\omega_1(t) = X_1(t+1) - \sum_{r=1}^{2n}X_r(t)P_{r1}\\
\omega_2(t) = X_2(t+1) - \sum_{r=1}^{2n}X_r(t)P_{r2}\\
\vdots\\
\omega_{2n}(t) = X_{2n}(t+1) - \sum_{r=1}^{2n}X_r(t)P_{r 2n},
\end{array}
\right.
\end{align*}
where $\omega_i(t)$ are guaranteed to be (dependent) random variables with zero mean and covariance described by \eqref{eq:cov_mat}.
Such a translation allows expressing the following dynamical model for the variable $\vectr X$:  
\begin{align}
\label{eq:dyn_mod}
\vectr X(t+1) = P^T \vectr X(t)+ \vectr W(t),
\end{align}
where the distribution of $\vectr W(t)$ depends only on the state $\vectr X(t)$.
%

\begin{remark}\label{rem:conn}
We have modeled the evolution of the TCL population with the abstract model \eqref{eq:dyn_mod}, a linear stochastic difference equation.   
The dynamics in \eqref{eq:dyn_mod} represent a direct generalization of the model abstraction provided in \cite{HetTCL,HetTCLCallaway}, 
which is deterministic since its transitions are computed based on the trajectories of a deterministic version of \eqref{eq:tcl_dyn}. 
\end{remark}

\smallskip


In the following we study the limiting behavior when the TCL population size grows to infinity.
The first observation about the covariance in \eqref{eq:cov_mat} is that
the covariance matrix converges to zero as $n_p$ grows, 
with a rate of $1/n_p$.
This observation leads to the next result.
\begin{theorem}
\label{thm:shif_hevi}
The cumulative distribution function of the conditional random variable $(X_i(t+1)|\vectr X(t))$, $i\in\mathbb N_{2n}$,
converges to the shifted Heaviside step function pointed at $\mu = \sum_{r=1}^{2n} X_r(t)P_{ri}$, 
as the number of homogeneous Markov chains goes to infinity.
\end{theorem}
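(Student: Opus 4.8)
The plan is to recognize the normalized variable $X_i(t+1)$, conditional on $\vectr X(t)$, as a sample mean of independent bounded random variables and to invoke a weak law of large numbers. Fix a representative state $\vectr z(t)$ of $\varXi$ with $L(\vectr z(t)) = \vectr x(t)$ and write $x_i(t+1) = \sum_{r=1}^{n_p} B_r$, where the $B_r$ are independent Bernoulli random variables with success probabilities $P_{z_r(t)i}\in[0,1]$; this is exactly the Poisson-binomial structure of Theorem~\ref{thm:pois_bino}, and by the simplification \eqref{eq:total_prob} the induced law depends on $\vectr x(t)$ only, consistently with Theorem~\ref{thm:poiss_gen_mult}. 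Hence $X_i(t+1) = n_p^{-1}\sum_{r=1}^{n_p} B_r$ has (conditional) mean $\mu = n_p^{-1}\sum_{r=1}^{n_p} P_{z_r(t)i} = \sum_{r=1}^{2n} X_r(t)P_{ri}$, which stays fixed once $\vectr X(t)$ is held fixed, and variance $n_p^{-1}\sum_{r=1}^{2n} X_r(t)P_{ri}(1-P_{ri}) \le 1/(4n_p)$ by \eqref{eq:cov_mat}.

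Next I would apply Chebyshev's inequality directly to the two tails of the CDF. For $x<\mu$,
\[
F_{n_p}(x) = \mathsf P\big(X_i(t+1)\le x \,\big|\, \vectr X(t)\big) \le \mathsf P\big(|X_i(t+1)-\mu|\ge \mu-x\big) \le \frac{1}{4 n_p (\mu-x)^2} \xrightarrow[n_p\to\infty]{} 0,
\]
and symmetrically, for $x>\mu$, $1-F_{n_p}(x) \le 1/(4 n_p (x-\mu)^2)\to 0$, so $F_{n_p}(x)\to 1$. Thus $F_{n_p}(x)\to H(x-\mu)$ for every $x\ne\mu$, i.e. the conditional CDF converges to the Heaviside step function shifted to $\mu$. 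Equivalently, this is the statement that $(X_i(t+1)\,|\,\vectr X(t))$ converges in probability, hence in distribution, to the degenerate law concentrated at $\mu$.

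The only delicate point is the behaviour at the discontinuity $x=\mu$: there one cannot prescribe the limit of $F_{n_p}(\mu)$, since the lattice support $\{0,1/n_p,\dots,1\}$ of $X_i(t+1)$ may straddle $\mu$ differently for different $n_p$, so the conclusion must carry the usual proviso ``at every continuity point of the limiting CDF,'' as is standard for weak convergence. No central-limit refinement (Lindeberg-type, say) is required, because the claimed limit is the degenerate one; the whole mechanism is the $1/n_p$ decay of the variance from \eqref{eq:cov_mat} collapsing the distribution onto the single point $\mu$. A minor bookkeeping item I would make explicit is that $\vectr X(t)$, and therefore $\mu$, does not drift with $n_p$ — this normalization is precisely what makes the argument go through.
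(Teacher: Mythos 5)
Your proof is correct and follows essentially the same route as the paper: the paper's tail estimates (bounding $\sum_j \mathsf p_{jn_p}(j/n_p-\mu)^2$ from below by $\varepsilon^2$ times each tail mass) are exactly Chebyshev's inequality written out by hand, driven by the same $1/n_p$ decay of the variance. Your explicit $1/(4n_p)$ bound and the remark about the continuity point $x=\mu$ are harmless refinements of the same argument.
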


\smallskip

In other words, the random variables $(X_i(t+1)|\vectr X(t)), i = 1, \ldots, 2n$, 
converge in distribution to deterministic random variables.  
This result relates again the model in \eqref{eq:dyn_mod} to that in \cite{HetTCL,HetTCLCallaway}, as discussed in Remark \ref{rem:conn}. 

%
Above we have characterized the random variable $(X_i(t+1)|\vectr X(t))$ with a Poisson-binomial distribution. 
We use Lyapunov Central Limit Theorem
(cf. Lemma 7.1)
to show that this distribution converges to a Gaussian one.

\begin{theorem}
\label{thm:Gaus_appr}
The random variable $(X_i(t+1)|\vectr X(t))$ can be explicitly expressed as 
\begin{align*}
X_i(t+1) = \sum_{r=1}^{2n} X_r(t)P_{ri} + \omega_i(t),
\end{align*}
where the random vector $\vectr W(t) = [\omega_1(t),\cdots,\omega_{2n}(t)]^T$
has a covariance matrix $\varSigma(\vectr X(t))$ as in \eqref{eq:cov_mat},
and converges (in distribution) to a multivariate Gaussian random vector
$\mathcal{N}(0,\varSigma(\vectr X(t)))$,
as $n_p \uparrow \infty$. 
\end{theorem}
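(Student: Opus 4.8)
The plan is to recast the conditional vector $(\vectr x(t+1)\,|\,\vectr x(t))$ as a normalized sum of independent terms, to obtain the one-dimensional Gaussian limits by the Lyapunov central limit theorem (Lemma 7.1), and then to upgrade to the multivariate statement through the Cram\'er--Wold device. The shift defining the $\omega_i(t)$ is already fixed in the text, so what remains is the convergence claim for $\vectr W(t)$.

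First I would make the independence explicit. Conditioning on $\vectr x(t)$, the $n_p$ thermostats evolve independently, and a thermostat that currently occupies bin $r$ moves to bin $i$ in one step with probability $P_{ri}$. Hence $x_i(t+1)=\sum_{k=1}^{n_p}\xi_k^{(i)}$, where $\{\xi_k^{(i)}\}_k$ are independent Bernoulli variables with $\mathbb E[\xi_k^{(i)}]=P_{r(k)i}$, $r(k)$ being the bin of thermostat $k$; jointly, $(x_1(t+1),\dots,x_{2n}(t+1))$ is the sum of $n_p$ independent categorical random vectors, i.e.\ exactly the generalized multinomial structure already identified in Theorem \ref{thm:poiss_gen_mult}. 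Dividing by $n_p$ and centering gives $\omega_i(t)=\tfrac1{n_p}\sum_{k=1}^{n_p}\bigl(\xi_k^{(i)}-P_{r(k)i}\bigr)$, and the covariance of $\vectr W(t)=[\omega_1(t),\dots,\omega_{2n}(t)]^T$ is precisely $\varSigma(\vectr X(t))$ in \eqref{eq:cov_mat}, by the moment formulas for the generalized multinomial distribution recalled in Table \ref{tab:discrete_distributions1}.

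Next I would apply Lemma 7.1. Fixing $i$, I regard $\{\xi_k^{(i)}-P_{r(k)i}\}_{k=1}^{n_p}$ as the $n_p$-th row of a triangular array of independent, zero-mean, uniformly bounded random variables (as $n_p$ grows one keeps the fractions $X_r(t)$ essentially fixed, so the success probabilities $P_{r(k)i}$ range over a fixed finite set independent of $n_p$). With $s_{n_p}^2=\sum_{k}var(\xi_k^{(i)})=\sum_{r=1}^{2n}n_pX_r(t)P_{ri}(1-P_{ri})$, the Lyapunov ratio with exponent $2+\delta=3$ satisfies $s_{n_p}^{-3}\sum_k\mathbb E|\xi_k^{(i)}-P_{r(k)i}|^3\le s_{n_p}^{-3}\sum_k var(\xi_k^{(i)})=s_{n_p}^{-1}\to 0$ (using $\mathbb E|\xi-p|^3\le p(1-p)$ for a Bernoulli$(p)$), provided $s_{n_p}\to\infty$, which holds whenever at least one occupied bin $r$ has $P_{ri}\in(0,1)$. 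Lemma 7.1 then yields $s_{n_p}^{-1}\sum_k(\xi_k^{(i)}-P_{r(k)i})\Rightarrow\mathcal N(0,1)$, i.e.\ $\omega_i(t)$ is asymptotically $\mathcal N\bigl(0,\tfrac1{n_p}\sum_r X_r(t)P_{ri}(1-P_{ri})\bigr)$ in the standardized sense. To pass to the joint law I would use Cram\'er--Wold: for any fixed $\vectr a\in\mathbb R^{2n}$, $\vectr a^T\vectr W(t)=\tfrac1{n_p}\sum_{k}\vectr a^T(\vectr\xi_k-\mathbb E\vectr\xi_k)$ is again a normalized sum of independent, zero-mean, bounded scalars ($|\vectr a^T(\vectr\xi_k-\mathbb E\vectr\xi_k)|\le 2\|\vectr a\|_\infty$), so the same Lyapunov estimate applies and $\vectr a^T\vectr W(t)$ converges to a univariate Gaussian with variance $\vectr a^T\varSigma(\vectr X(t))\vectr a$; since this holds for every $\vectr a$, $\vectr W(t)\Rightarrow\mathcal N(0,\varSigma(\vectr X(t)))$.

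The hard part will not be the CLT machinery — which is routine given Lemma 7.1 and the boundedness of the increments — but the careful handling of degeneracy and of what "convergence" means here. Because of the $0/1$ rows of $P$ forced by the block structure \eqref{eq:structure_P} and by the absorbing states, $\varSigma(\vectr X(t))$ is in general only positive semidefinite, so the limit must be read as a possibly degenerate Gaussian whose null directions correspond to exactly deterministic marginals; and since $\varSigma(\vectr X(t))=O(1/n_p)\to 0$, the meaningful statement — consistent with Theorems \ref{thm:shif_hevi} — is that the standardized fluctuation $\varSigma(\vectr X(t))^{-1/2}\vectr W(t)$, restricted to the subspace on which $\varSigma(\vectr X(t))$ is invertible, converges to a standard normal vector, the degenerate directions requiring no argument. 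Making this precise (choosing the triangular array, fixing the admissible sequence of population states, isolating the non-degenerate subspace) is the only delicate bookkeeping in the proof.
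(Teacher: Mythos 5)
Your proposal is correct and follows essentially the same route as the paper's own proof: both decompose the conditional vector into a sum of $n_p$ independent categorical random variables, verify the Lyapunov condition with $\delta=1$ (the ratio decaying like $n_p^{-1/2}$), and pass to the multivariate limit by showing every linear combination $\nu^T\vectr W(t)$ is asymptotically Gaussian. Your additional remarks on the degeneracy of $\varSigma(\vectr X(t))$ and on reading the limit through the standardized fluctuation are a point of care the paper leaves implicit, but they do not change the argument.
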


\medskip

Theorem \ref{thm:Gaus_appr} practically states that the conditional distribution of the random vector $\vectr W(t)$ 
for a relatively large population size 
can be effectively replaced by a multivariate Gaussian distribution with known moments.  
We shall exploit this result in the state estimation of the model using Kalman Filter in Section \ref{sec:controlled}.  
Notice that the above conclusion can be applied to any population of homogeneous Markov chains (TCL), 
as long as all TCL Markov chains have the same transition probability matrix. 
The initial distributions of the single Markov chain can instead be selected freely. 

In the previous theorem we have developed a linear model for the evolution of $X_i(t)$, 
which in the limit encompasses a Gaussian noise $\omega_i(t)$. 
As discussed in \eqref{eq:cov_mat}, 
these Gaussian random variables are not independent in general.  
The covariance matrix in \eqref{eq:cov_mat} is guaranteed to be positive semi-definite for all
$X_r\in \{0,\frac{1}{n_p},\frac{2}{n_p},\cdots,\frac{n_p-1}{n_p},1\}$, 
provided that $\sum_{r=1}^{2n}X_r = 1$. 
In view of a general use in \eqref{eq:dyn_mod}, 
we next show that the covariance matrix remains positive semi-definite when the model is extended over the variables $X_r \in [0,1]$.  

\begin{theorem}
\label{thm:posit_def}
Suppose we model the behavior of the population by the dynamical system \eqref{eq:dyn_mod}, where 
\begin{align}
\label{eq:dynamical_approx}
\mathbb E[\vectr W(t)] = 0,\quad cov(\vectr W(t)) = \varSigma(\vectr X(t)).
\end{align}
Then the covariance matrix $\varSigma(\vectr X)$ is positive semi-definite for all $X_r\ge 0$.
The entries of the random vector $\vectr W$ are dependent on each other, since 
$\sum_{r=1}^{2n}\omega_r = 0$ whenever $\sum_{r=1}^{2n}X_r = 1$. 
Finally, the condition $\sum_{r=1}^{2n}X_r(0) = 1$ implies that $\sum_{r=1}^{2n}X_r(t) = 1$, for all $t\in \mathbb N$.
\end{theorem}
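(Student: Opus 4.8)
The plan is to establish the three assertions of Theorem \ref{thm:posit_def} in order, reusing the algebraic structure of the covariance $\varSigma(\vectr X)$ given in \eqref{eq:cov_mat}. First I would write the covariance matrix explicitly as a sum over the rows $r$ of $P$: setting $\vectr p^{(r)} = [P_{r1},\dots,P_{r,2n}]^T$, one has, up to the factor $1/n_p$,
\begin{align*}
\varSigma(\vectr X) = \sum_{r=1}^{2n} X_r\bigl(\mathrm{diag}(\vectr p^{(r)}) - \vectr p^{(r)}(\vectr p^{(r)})^T\bigr).
\end{align*}
Each summand $\mathrm{diag}(\vectr p^{(r)}) - \vectr p^{(r)}(\vectr p^{(r)})^T$ is exactly the covariance matrix of a single categorical random variable with parameter vector $\vectr p^{(r)}$ (since each $\vectr p^{(r)}$ is a probability vector, being a row of the stochastic matrix $P$), hence positive semi-definite; this is a classical fact, but if a self-contained argument is wanted one can check $\vectr v^T(\mathrm{diag}(\vectr p^{(r)}) - \vectr p^{(r)}(\vectr p^{(r)})^T)\vectr v = \sum_k P_{rk}v_k^2 - (\sum_k P_{rk}v_k)^2 \ge 0$ by Jensen (or Cauchy--Schwarz). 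Since $X_r \ge 0$, the matrix $\varSigma(\vectr X)$ is a nonnegative combination of positive semi-definite matrices, hence positive semi-definite. Note this argument uses only $X_r\ge 0$ and never the normalization $\sum_r X_r = 1$, which is precisely the extension claimed.

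Second, for the dependency claim I would observe that $\sum_{r=1}^{2n}\omega_r(t) = \sum_{i=1}^{2n}X_i(t+1) - \sum_{i=1}^{2n}\sum_{r=1}^{2n}X_r(t)P_{ri} = \sum_{i}X_i(t+1) - \sum_r X_r(t)\sum_i P_{ri} = \sum_i X_i(t+1) - \sum_r X_r(t)$, using that the row sums $\sum_i P_{ri} = 1$ because $P$ is stochastic. When $\sum_r X_r(t) = 1$ and the labels are genuine (so $\sum_i X_i(t+1) = 1$ as well, the counts partitioning the population), this forces $\sum_r \omega_r(t) = 0$, exhibiting an exact linear relation among the entries of $\vectr W$ and hence their mutual dependence.

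Third, the invariance $\sum_r X_r(t) = 1 \Rightarrow \sum_r X_r(t+1) = 1$ follows by summing \eqref{eq:dyn_mod} over components: $\sum_i X_i(t+1) = \sum_i \sum_r X_r(t)P_{ri} + \sum_i \omega_i(t) = \sum_r X_r(t) + 0 = 1$, again invoking the stochasticity of $P$ and $\sum_i \omega_i(t) = 0$; induction on $t$ then gives the claim for all $t \in \mathbb N$. The only genuinely delicate point is a modeling subtlety rather than a computation: in the third assertion $\vectr W(t)$ is characterized only through its first two moments \eqref{eq:dynamical_approx}, so one must argue that the constraint $\sum_i \omega_i = 0$ holds almost surely and not merely in expectation. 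I would handle this by noting that $\varSigma(\vectr X)$ has $\vectr 1 = [1,\dots,1]^T$ in its kernel — indeed $\varSigma(\vectr X)\vectr 1 = \sum_r X_r(\vectr p^{(r)} - \vectr p^{(r)}(\vectr p^{(r)})^T\vectr 1) = \sum_r X_r(\vectr p^{(r)} - \vectr p^{(r)}) = 0$ since $(\vectr p^{(r)})^T\vectr 1 = 1$ — so $\mathrm{var}(\sum_i \omega_i) = \vectr 1^T \varSigma(\vectr X)\vectr 1 = 0$, whence $\sum_i \omega_i$ is almost surely constant and, having zero mean, is almost surely zero. That observation is the crux and ties the three parts together; everything else is bookkeeping with the row/column sums of a stochastic matrix.
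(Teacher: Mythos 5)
Your proposal is correct and follows essentially the same route as the paper's proof: the decomposition $\varSigma(\vectr X)=\frac{1}{n_p}\sum_r X_r\varPhi_r$ with each $\varPhi_r$ shown positive semi-definite via Cauchy--Schwarz, the observation that $\mathfrak 1^T\varSigma(\vectr X)\mathfrak 1=0$ forces $\sum_r\omega_r=0$, and the summation of the dynamics using stochasticity of $P$. Your only (welcome) refinement is noting that the almost-sure vanishing of $\sum_r\omega_r$ follows from zero mean and zero variance alone, whereas the paper invokes Gaussianity of $\vectr W$ at that step.
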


\subsection{Explicit Quantification of the Errors of the Abstraction and of the Aggregation Procedures}  

Let us now quantify the power consumption of the aggregate model, 
as an extension of the quantity discussed after equation \eqref{eq:switch}.  
The total power consumption obtained from the aggregation of the original models in \eqref{eq:tcl_dyn}-\eqref{eq:switch}, 
with variables $(m_j, \theta_j)(t), j\in\mathbb N_{n_p},$ is 
\begin{equation}
\label{eq:tot_power}
y(t) = \sum_{j=1}^{n_p} m_j (t) P_{rate,ON}.
\end{equation}
With focus on the abstract model (with the normalized variable $\vectr X$), 
the power consumption is equal to 
\begin{equation*}
y_{abs}(t) = H\vectr X(t),\quad H = n_p P_{rate,ON}[0_n,\mathfrak 1_n],
\end{equation*}
where $0_n, \mathfrak 1_n$ are row vectors with all the entries equal to zero and one, respectively.

For the error quantification
we consider a homogeneous population of TCL with dynamics affected by Gaussian process noise $w(\cdot)\sim \mathcal N(0,\sigma^2)$, 
and the abstracted model constructed based on the partition introduced in \eqref{eq:partition}.
The result of this section hinges on two features of the Gaussian distribution, its continuity and its decay at infinity.
In order to keep the discussion focused we proceed considering Gaussian distributions, 
however the result can be extended to any distribution with these two features. 

Since the covariance matrix in \eqref{eq:cov_mat} is small for large population sizes, 
the first moment of the random variable $y(t)$ provides sufficient information on its behavior over a finite time horizon. 
The total power consumption in \eqref{eq:tot_power} is
the sum of $n_p$ independent Bernoulli trials over the sample space $\{0,P_{rate,ON}\}$, 
each with different success probability.   
Then for the quantification of the modeling error we study the error produced by the abstraction over the expected value of the TCL mode.

Consider a single TCL, with the initial state $s_0 = (m_0,\theta_0)$.
Also select the desired final time $T_d$ and time horizon $N = T_d/h$,
where $h$ is the discretization step.
The expected value of its mode at time $N$, $m(N)$, can be computed as
\begin{align}
\label{eq:reach_def}
\mathbb E[m(N)|m_0,\theta_0] & = \mathsf P\left(m(N) = 1|m_0,\theta_0 \right)
= \mathsf P\left(s(N-1)\in \mathcal A|m_0,\theta_0 \right),
\end{align}
where
$
\mathcal A = \{1\}\times[\theta_-,+\infty)\cup\{0\}\times[\theta_+,+\infty). 
$
%
This quantity can be characterized via value functions $\mathcal V_k:\mathcal S\rightarrow [0,1]$, $k\in\mathbb N_N$,
which are computed recursively as follows:
\begin{equation}
\label{eq:recur_prob}
\mathcal V_k(s_k) =
\int_{\mathcal S}\mathcal V_{k+1}(s_{k+1})t_s(s_{k+1}|s_k)ds_{k+1},
\quad \forall k\in\mathbb N_{N-1},
\quad\mathcal V_N(s) = \mathfrak 1_{\mathcal A}(s).
\end{equation}
Knowing these value functions, we have that $\mathbb E[m(N)|m_0,\theta_0] = \mathcal V_1(m_0,\theta_0)$.
Computationally, the calculation of these quantities can leverage the results in \cite{APKL10,SA11,SA13}, 
which however require extensions 1) to conditional density functions of the process that are discontinuous, 
and 2) to an unbounded state-space.
The first issue is addressed by the following theorem.
\begin{theorem}
\label{thm:reach_continuity}
The density function $t_s(s'|\cdot)$ is piecewise-continuous within the continuity regions
\begin{align*}
\{0\}\times\mathbb (-\infty,\theta_+],\quad
\{0\}\times\mathbb (\theta_+,+\infty),\quad
\{1\}\times\mathbb (-\infty,\theta_-),\quad
\{1\}\times\mathbb [\theta_-,+\infty).
\end{align*} 
The value functions $\mathcal V_k(s)$ are piecewise-Lipschitz continuous, namely:
\begin{equation*}
|\mathcal V_k(m,\theta)-\mathcal V_k(m,\theta')|\le \frac{2a}{\sigma\sqrt{2\pi}}|\theta-\theta'|,
\end{equation*} 
where $a,\sigma$ represent respectively the TCL parameter vector and the variance of the process noise, and
where $(m,\theta),(m,\theta')$ is any pair of points belonging to one of the four continuity regions of the density $t_s$. 
\end{theorem}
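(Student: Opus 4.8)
The plan is to dispatch the two assertions separately: the piecewise-continuity of the kernel is essentially a direct reading of its closed form, while the Lipschitz estimate follows by unfolding the recursion \eqref{eq:recur_prob} and reducing it to the total-variation distance between a Gaussian density and a shift of itself.

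\emph{Continuity of the kernel.} In the explicit form $t_s\big((m',\theta')|(m,\theta)\big)=\delta[m'-f(m,\theta)]\,t_w\big(\theta'-a\theta-(1-a)(\theta_a-mRP_{rate})\big)$, the only factor that can be discontinuous in the conditioning variable is $\delta[m'-f(m,\theta)]$. From \eqref{eq:switch}, for each fixed $m$ the map $\theta\mapsto f(m,\theta)$ is constant on each of the four regions named in the statement — it equals $0$ on $\{0\}\times(-\infty,\theta_+]$ and on $\{1\}\times(-\infty,\theta_-)$, and $1$ on the remaining two — so $\delta[m'-f(m,\theta)]$ is constant there for each fixed $m'$, whereas $t_w$ is a Gaussian density with an affine, hence continuous, argument; the product is therefore continuous on each region, which proves the first claim.

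\emph{Lipschitz estimate.} I would first record the elementary bound $0\le\mathcal V_k\le 1$ for all $k\in\mathbb N_N$: it holds at $k=N$ because $\mathcal V_N=\mathfrak 1_{\mathcal A}$, and is preserved by \eqref{eq:recur_prob} since $t_s(\cdot|s)$ is a probability density (it integrates to $1$ over $\mathcal S$), so the recursion acts as an averaging map on $[0,1]$. Now fix $k\in\mathbb N_{N-1}$ and two points $(m,\theta),(m,\bar\theta)$ lying in one and the same continuity region. Since $f(m,\cdot)$ is constant there, $f(m,\theta)=f(m,\bar\theta)=:m^\star$, and carrying out the discrete part of the integral in \eqref{eq:recur_prob} against the factor $\delta[\cdot-m^\star]$ gives
\[
\mathcal V_k(m,\theta)-\mathcal V_k(m,\bar\theta)=\int_{\mathbb R}\mathcal V_{k+1}(m^\star,\zeta)\big[t_w(\zeta-a\theta-c_m)-t_w(\zeta-a\bar\theta-c_m)\big]\,d\zeta,
\]
where $c_m:=(1-a)(\theta_a-mRP_{rate})$. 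Bounding $0\le\mathcal V_{k+1}\le1$ and substituting $u=\zeta-a\bar\theta-c_m$, $\varepsilon:=a(\theta-\bar\theta)$, the absolute value of this difference is at most $\int_{\mathbb R}\big|t_w(u-\varepsilon)-t_w(u)\big|\,du$, the total-variation distance between $\mathcal N(0,\sigma^2)$ and its $\varepsilon$-shift.

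The remaining step is the single genuine computation. Writing $t_w(u)-t_w(u-\varepsilon)=\int_{u-\varepsilon}^{u}t_w'(v)\,dv$ and applying Fubini yields $\int_{\mathbb R}\big|t_w(u-\varepsilon)-t_w(u)\big|\,du\le|\varepsilon|\int_{\mathbb R}|t_w'(v)|\,dv$ — here the decay of the Gaussian enters, so that $t_w'\in L^1$ — and for $t_w=\mathcal N(0,\sigma^2)$ one has $t_w'(v)=-(v/\sigma^2)t_w(v)$, which changes sign only at $v=0$, whence $\int_{\mathbb R}|t_w'(v)|\,dv=2t_w(0)=2/(\sigma\sqrt{2\pi})$. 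As $0<a=e^{-h/RC}<1$, assembling these inequalities produces exactly $|\mathcal V_k(m,\theta)-\mathcal V_k(m,\bar\theta)|\le\frac{2a}{\sigma\sqrt{2\pi}}\,|\theta-\bar\theta|$; the case $k=N$ is immediate since $\mathfrak 1_{\mathcal A}$ is constant on each of the four regions, the boundary of its discontinuity coinciding with the region boundaries. I expect the only delicate point to be the bookkeeping around the discontinuity: one must insist that the two comparison points lie in a single continuity region, so that the selected discrete successor $m^\star$ coincides for both, and one should note that any ambiguity of $\mathfrak 1_{\mathcal A}$ on the measure-zero boundary set is immaterial inside the integral \eqref{eq:recur_prob}. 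It is also worth remarking that this Lipschitz bound for $\mathcal V_k$ uses only the sup-norm bound on $\mathcal V_{k+1}$, not its continuity, so no nested ``Lipschitz induction'' is needed.
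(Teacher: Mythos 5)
Your proposal is correct and follows essentially the same route as the paper's proof: unfold one step of the recursion \eqref{eq:recur_prob}, use $0\le\mathcal V_{k+1}\le 1$ to reduce the increment to the $L^1$ distance between a Gaussian density and its $a|\theta-\theta'|$-shift, and bound that distance by $\frac{2}{\sigma\sqrt{2\pi}}\,a|\theta-\theta'|$. The only (immaterial) difference is in the last micro-step: the paper evaluates the shifted-Gaussian $L^1$ distance by symmetry as $2\int_{-a|\theta-\theta'|/2\sigma}^{a|\theta-\theta'|/2\sigma}\phi(v)\,dv\le \frac{2a|\theta-\theta'|}{\sigma\sqrt{2\pi}}$, whereas you use $\|t_w'\|_{L^1}=2t_w(0)$ via Fubini, which yields the identical constant.
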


\smallskip

To cope with the second issue, we study the limiting behavior of the value functions at infinity.  
\begin{theorem}
\label{thm:reach_asymp}
The value function $\mathcal V_k(\cdot)$ has the following asymptotic properties:
\begin{align*}
& \lim_{\theta\rightarrow+\infty}\mathcal V_k(m,\theta) = 1,\quad
\lim_{\theta\rightarrow-\infty}\mathcal V_k(m,\theta) = 0,
\quad m\in\mathbb Z_1.
\end{align*}
\end{theorem}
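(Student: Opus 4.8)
The plan is to establish the two limits by induction on $k$, going backwards from $k = N$ down to $k = 1$, exploiting the recursion \eqref{eq:recur_prob} together with the asymptotic structure of the switching rule \eqref{eq:switch} and the decay of the Gaussian transition kernel $t_s$. The base case $k = N$ is immediate: $\mathcal V_N(m,\theta) = \mathfrak 1_{\mathcal A}(m,\theta)$, and since $\mathcal A = \{1\}\times[\theta_-,+\infty)\cup\{0\}\times[\theta_+,+\infty)$, we have $\mathcal V_N(m,\theta) = 1$ for $\theta$ sufficiently large (specifically $\theta \ge \theta_+$, regardless of $m$) and $\mathcal V_N(m,\theta) = 0$ for $\theta$ sufficiently small ($\theta < \theta_-$), so both limits hold trivially at $k = N$.

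For the inductive step, assume the claim holds for $\mathcal V_{k+1}$. First I would treat $\theta \to +\infty$. The key observation is that when the current temperature $\theta$ is very large, the switching map forces $f(m,\theta) = 1$ (we are above $\theta_+$), and the deterministic part of the update $a\theta + (1-a)(\theta_a - RP_{rate})$ is also very large since $a \in (0,1)$; more precisely, for any fixed threshold $M$, once $\theta$ exceeds some $\theta^\ast$ the mass of $t_s((m',\theta')\mid(m,\theta))$ concentrated on $\{\theta' \le M\}$ becomes arbitrarily small, uniformly, because the Gaussian is centered at a point that tends to $+\infty$ with $\theta$. Splitting the integral in \eqref{eq:recur_prob} into the region $\{\theta' > M\}$, where $\mathcal V_{k+1} \ge 1 - \varepsilon$ by the inductive hypothesis for $M$ large, and the complementary region, whose contribution is bounded by the (small) tail probability since $\mathcal V_{k+1} \le 1$, yields $\liminf_{\theta\to+\infty}\mathcal V_k(m,\theta) \ge 1 - 2\varepsilon$; combined with $\mathcal V_k \le 1$ this gives the first limit. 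The argument for $\theta \to -\infty$ is symmetric: large negative $\theta$ forces $f(m,\theta) = 0$ and pushes the Gaussian mean to $-\infty$, so the transition mass escapes into the region where $\mathcal V_{k+1}$ is near $0$, giving $\limsup_{\theta\to-\infty}\mathcal V_k(m,\theta) \le 2\varepsilon$.

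The main obstacle I anticipate is making the "uniform escape of Gaussian mass to infinity" step fully rigorous while interchanging the limit with the integral over the unbounded state-space $\mathcal S$: since $t_s$ is a genuine probability density for each conditioning point, dominated convergence applies with dominating function $1$, but one must verify that for each fixed $s' = (m',\theta')$ the integrand behaves correctly, or alternatively argue directly via the tail-probability bound above (which sidesteps pointwise convergence of the kernel and is cleaner). A secondary subtlety is that the recursion couples the two discrete modes $m' \in \mathbb Z_1$, but because the bound $\mathcal V_{k+1}(m',\theta') \ge 1 - \varepsilon$ for $\theta' > M$ holds for \emph{both} values of $m'$, and likewise $\le \varepsilon$ for $\theta' < -M$, the discrete sum over $m'$ causes no difficulty — the estimate is uniform in $m'$. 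I would also note in passing that Theorem \ref{thm:reach_continuity} is not strictly needed here, only the boundedness $0 \le \mathcal V_k \le 1$ (clear from the probabilistic interpretation) and the Gaussian tail decay.
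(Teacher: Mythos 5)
Your proof is correct: the backward induction on $k$, with the base case read off from $\mathcal A$ and the inductive step carried out by splitting the integral in \eqref{eq:recur_prob} at a level $M$ and using that the Gaussian kernel's mean $a\theta+(1-a)(\theta_a\pm RP_{rate})$ escapes to $\pm\infty$ with $\theta$, is sound; the mode coupling is indeed harmless because the inductive hypothesis holds for both values of $m'$, and only $0\le\mathcal V_{k+1}\le 1$ plus the tail bound are needed. However, this is a genuinely different route from the paper's. The paper does not induct on the one-step recursion; instead it lower-bounds $\mathcal V_k(1,\theta_k)$ by the probability of the \emph{multi-step} event that the trajectory remains in $\{1\}\times[\theta_-,+\infty)$ for all remaining times (and symmetrically for the OFF mode at $-\infty$), which reduces the hybrid system to a single-mode linear Gaussian safety problem handled by Lemma \ref{lm:conv_safety_upper}. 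That lemma constructs explicit threshold sequences $\gamma_k,\beta_k$ (and $\bar\gamma_k,\bar\beta_k$ for the two "wrong-mode" cases) together with the quantitative bound $1-(N-k)Q(\gamma)$ valid for $\theta_k\ge\gamma_k$. The trade-off is clear: your argument is more elementary, self-contained, and extends to any transition kernel whose mass drifts to infinity with the conditioning point, but it is purely qualitative; the paper's safety-problem formulation is heavier yet delivers the explicit thresholds and the rate $(N-k)Q(\gamma)$ that are consumed immediately afterwards in Theorem \ref{thm:VF_limit} (to choose $\mathcal L$ and compute $\epsilon$ and $\gamma$) and hence in the final error bounds of Theorems \ref{thm:error_1} and \ref{thm:error_hom}. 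If you wanted your induction to serve the same downstream purpose, you would need to make the choice of $M$ and the tail estimate explicit at each step, at which point you would essentially recover the paper's recursive sequences.
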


Given the above asymptotic properties of the value functions, 
we leverage the truncation over the state-space proposed in Section \ref{sec:state_partition}, 
and properly select the value of the functions outside this region.  
The following theorem quantifies the error we incur with this state-space truncation.
\begin{theorem}
\label{thm:VF_limit}
For the partitioning procedure in \eqref{eq:partition} we have that 
\begin{align*}
& \mathcal V_k(m,\theta)\ge 1-(N-k)\epsilon,\quad \forall \theta\ge \theta_{\mathsf m} = \theta_s+\mathcal L/2,
\quad m\in\mathbb Z_1,\\
& \mathcal V_k(m,\theta)\le (N-k)\epsilon,\quad \forall \theta\le \theta_{-\mathsf m} = \theta_s-\mathcal L/2,
\quad m\in\mathbb Z_1,
\end{align*}
where $\epsilon = \dfrac{e^{-\gamma^2/2}}{\gamma\sqrt{2\pi}}$, and where 
\begin{equation*}
\gamma = \frac{1-a}{2\sigma}\left[\frac{a^N\mathcal L+\delta}{1-a^N}-\lambda\right],
\quad \lambda = RP_{rate}+|2(\theta_s-\theta_a)+RP_{rate}|.
\end{equation*}
\end{theorem}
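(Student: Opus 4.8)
The plan is to establish the two symmetric bounds by induction on $k$, running downward from $k=N$ to $k=1$, exploiting the recursion \eqref{eq:recur_prob} together with the asymptotic properties from Theorem \ref{thm:reach_asymp} and the Gaussian tail estimate that underlies the definition of $\epsilon$. I will focus on the lower bound $\mathcal V_k(m,\theta)\ge 1-(N-k)\epsilon$ for $\theta\ge\theta_{\mathsf m}$, since the upper bound for $\theta\le\theta_{-\mathsf m}$ is entirely analogous (or follows by a symmetry argument swapping the roles of the two absorbing intervals). The base case $k=N$ is immediate: $\mathcal V_N(s)=\mathfrak 1_{\mathcal A}(s)$, and for $\theta\ge\theta_{\mathsf m}\ge\theta_+>\theta_-$ one has $(m,\theta)\in\mathcal A$ for both $m\in\mathbb Z_1$, so $\mathcal V_N(m,\theta)=1\ge 1-0\cdot\epsilon$.

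For the inductive step, assume the bound holds at level $k+1$. Starting from a point $(m,\theta)$ with $\theta\ge\theta_{\mathsf m}$, I split the integral in \eqref{eq:recur_prob} over the next-state temperature $\theta'$ into the region $\{\theta'\ge\theta_{\mathsf m}\}$, where the inductive hypothesis gives $\mathcal V_{k+1}\ge 1-(N-k-1)\epsilon$, and the complementary region $\{\theta'<\theta_{\mathsf m}\}$, where I only use $\mathcal V_{k+1}\ge 0$. Writing $\mathcal V_k(m,\theta)\ge (1-(N-k-1)\epsilon)\,\mathsf P(\theta(t+1)\ge\theta_{\mathsf m}\mid \theta(t)=\theta, m)$ and noting that this probability is $1$ minus the probability of the downward deviation, it remains to show that the "escape probability'' $\mathsf P(\theta(t+1)<\theta_{\mathsf m}\mid \theta(t)=\theta\ge\theta_{\mathsf m})$ is at most $\epsilon$. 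Here the key is the dynamics \eqref{eq:tcl_dyn}: with Gaussian noise $w\sim\mathcal N(0,\sigma^2)$, the one-step drift pushes the temperature toward $\theta_a\pm mRP_{rate}$, and a worst-case estimate over $m\in\mathbb Z_1$ and over the relevant temperature range controls how far below $\theta_{\mathsf m}$ the mean of $\theta(t+1)$ can fall; the residual probability of crossing below $\theta_{\mathsf m}$ is then a Gaussian tail of the form $\frac{e^{-\gamma^2/2}}{\gamma\sqrt{2\pi}}$ with the displayed $\gamma$. Combining, $\mathcal V_k(m,\theta)\ge (1-(N-k-1)\epsilon)(1-\epsilon)\ge 1-(N-k)\epsilon$, closing the induction.

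The main obstacle — and the step requiring the most care — is pinning down the constant $\gamma$: one must bound, uniformly over the time horizon and over admissible temperatures, the gap between $\theta_{\mathsf m}$ and the worst-case one-step conditional mean of $\theta(t+1)$, and convert this into the stated expression. This is where the quantities $\lambda = RP_{rate}+|2(\theta_s-\theta_a)+RP_{rate}|$ and the horizon-dependent factor $(a^N\mathcal L+\delta)/(1-a^N)$ enter: the former captures the worst-case steady-state displacement of the drift relative to the dead-band, while the latter accounts for the accumulated effect over $N$ steps of starting anywhere within the truncated band of width $\mathcal L$. I would carry this out by first iterating \eqref{eq:tcl_dyn} (ignoring noise) to get a closed form for the $N$-step deterministic mean, bounding its distance from $\theta_{\mathsf m}$ using $0<a<1$ and $|m|\le 1$, then dividing by $\sigma$ and absorbing the factor $\tfrac{1-a}{2}$ to match $\gamma$; finally, the monotonicity of $x\mapsto \frac{e^{-x^2/2}}{x\sqrt{2\pi}}$ for $x>0$ guarantees that using the smallest admissible $\gamma$ yields a valid upper bound $\epsilon$ on every one-step escape probability in the induction. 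Once $\gamma$ is correctly identified the rest is the routine induction sketched above, with the upper-bound statement obtained by the mirror-image argument about $\theta_{-\mathsf m}$.
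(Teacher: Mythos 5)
There is a genuine gap in the inductive step. You run the induction against the \emph{fixed} half-line $\{\theta'\ge\theta_{\mathsf m}\}$ and claim the one-step escape probability $\mathsf P(\theta(t+1)<\theta_{\mathsf m}\mid\theta(t)=\theta\ge\theta_{\mathsf m})$ is at most $\epsilon$. This is false for these dynamics: in the ON mode the one-step mean is $a\theta+(1-a)(\theta_a-RP_{rate})$, and since $\theta_a-RP_{rate}$ lies far below $\theta_{\mathsf m}$ (e.g.\ $4\,^\circ C$ versus $\theta_{\mathsf m}>\theta_s=20\,^\circ C$ in the case study), starting at $\theta=\theta_{\mathsf m}$ the conditional mean of $\theta(t+1)$ is \emph{below} $\theta_{\mathsf m}$ and the escape probability exceeds $1/2$; no choice of $\gamma>0$ makes it $\le\epsilon$. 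A similar failure occurs for the OFF mode whenever $\theta_{\mathsf m}>\theta_a$. The tell-tale sign is the horizon-dependent factor $(a^N\mathcal L+\delta)/(1-a^N)$ in $\gamma$: if a uniform one-step bound against a fixed threshold worked, $\gamma$ would not need to depend on $N$. Your last paragraph gestures at iterating the dynamics over $N$ steps, but that is inconsistent with the one-step escape bound your induction actually requires.

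The paper's argument avoids this by (i) first lower-bounding $\mathcal V_k(1,\theta)$ by the probability of the \emph{safety} event that the trajectory stays in $\{1\}\times[\theta_-,+\infty)$ for the whole horizon (so the mode never switches and the relevant terminal threshold is $\theta_-$, not $\theta_{\mathsf m}$), and then (ii) running the induction of Lemma \ref{lm:conv_safety_upper} against a \emph{backward-expanding} sequence of thresholds $\gamma_N=\theta_-$, $\gamma_k=(\gamma_{k+1}+\gamma\sigma-(1-a)(\theta_a-RP_{rate}))/a$: from any $x\ge\gamma_k$ the one-step image lands above $\gamma_{k+1}+\gamma\sigma$ deterministically in the mean, so each step only loses a genuine Gaussian tail $Q(\gamma)\le\epsilon$. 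The condition that after $N$ backward steps the expanded threshold still satisfies $\gamma_0\le\theta_{\mathsf m}$ (and symmetrically $\beta_0\ge\theta_{-\mathsf m}$ for the OFF-mode safety set $(-\infty,\theta_+]$) is exactly what yields the stated $\gamma$, with $\lambda$ collecting the worst case over the two modes and the two one-step transition cases ($m=0$ at $+\infty$, $m=1$ at $-\infty$) handled by a single extra integration. To repair your proof you would need to replace the fixed threshold $\theta_{\mathsf m}$ by this moving sequence anchored at $\theta_-$ (resp.\ $\theta_+$), and to make the mode bookkeeping explicit rather than taking a ``worst case over $m\in\mathbb Z_1$'' at each step.
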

Notice in particular that the previous theorem draws a linear dependence of $\gamma$ on $\mathcal L$.   

\begin{theorem}
\label{thm:error_1}
If we abstract a single TCL to a Markov chain based on the procedure of Section \ref{sec:state_partition},
and compute the solution of problem \eqref{eq:recur_prob} over the Markov chain -- call it $\mathcal W_1(m_0,\theta_0)$ --  
then the approximation error can be upper-bounded as follows: 
\begin{equation*}
|\mathcal V_{1}(m_0,\theta_0)-\mathcal W_1(m_0,\theta_0)|\le
(N-1)\left[\frac{N-2}{2}\epsilon + \frac{2a}{\sigma\sqrt{2\pi}}\upsilon\right],\quad
\forall (m_0,\theta_0)\in\mathbb Z_1\times[\theta_{-\mathsf m},\theta_{\mathsf m}].
\end{equation*}
The error has two terms: one term accounts for the error of the approximation over infinite-length intervals $\epsilon$, 
whereas the second is related to the choice of the partition size $\upsilon$. 
\end{theorem}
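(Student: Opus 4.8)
The plan is to decompose the total error into two pieces via a triangle inequality, bounding each separately using the machinery already assembled. Write $\mathcal W_k$ for the value functions computed on the Markov chain $\mathcal M$ (i.e. the discretized version of \eqref{eq:recur_prob}, with the infinite-length absorbing states $\Theta_1,\Theta_n$ assigned the boundary values dictated by Theorem \ref{thm:reach_asymp}). Introduce an intermediate family $\widetilde{\mathcal V}_k$: the value functions solving \eqref{eq:recur_prob} exactly on the \emph{truncated} state-space $\mathbb Z_1\times[\theta_{-\mathsf m},\theta_{\mathsf m}]$, with the values on the two infinite tails frozen to $0$ and $1$ respectively. Then
\[
|\mathcal V_1-\mathcal W_1|\;\le\;|\mathcal V_1-\widetilde{\mathcal V}_1|\;+\;|\widetilde{\mathcal V}_1-\mathcal W_1|,
\]
where the first term is the truncation error and the second is the genuine state-space–partitioning (discretization) error.

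For the truncation term, I would run an inductive argument backwards in $k$. At step $k$, the only discrepancy between $\mathcal V_k$ and $\widetilde{\mathcal V}_k$ comes from (i) the previously accumulated error $|\mathcal V_{k+1}-\widetilde{\mathcal V}_{k+1}|$ propagated through the (sub-stochastic) integral operator, and (ii) the fresh error made by replacing $\mathcal V_{k+1}$ on the two tails by the constants $0$ and $1$. Theorem \ref{thm:VF_limit} controls exactly this fresh error: on $\theta\ge\theta_{\mathsf m}$ one has $\mathcal V_{k+1}\ge 1-(N-k-1)\epsilon$ and on $\theta\le\theta_{-\mathsf m}$ one has $\mathcal V_{k+1}\le(N-k-1)\epsilon$, so the mismatch on the tails is at most $(N-k-1)\epsilon$; since the transition kernel has total mass $\le 1$, the contraction does not amplify anything. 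Summing the bound $(N-k-1)\epsilon$ over $k=1,\dots,N-1$ telescopes to $\sum_{j=1}^{N-2}j\,\epsilon = \tfrac{(N-1)(N-2)}{2}\epsilon$, which matches the $(N-1)\tfrac{N-2}{2}\epsilon$ term in the statement.

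For the discretization term, I would invoke the now-standard abstraction error bound from \cite{APKL10,SA13,SAH12}: replacing the continuous kernel $t_s(\cdot|s)$ by the piecewise-constant kernel induced by the representative points $\bar\theta_i$ introduces, at each of the $N-1$ recursion steps, an error controlled by the Lipschitz constant of the value functions times the partition diameter $\upsilon$. Theorem \ref{thm:reach_continuity} supplies precisely this Lipschitz constant, $\tfrac{2a}{\sigma\sqrt{2\pi}}$, valid within each of the four continuity regions — and the partition \eqref{eq:partition} was deliberately chosen to refine the discontinuity set $\{\theta_-,\theta_+\}$, so every partition cell lies inside a single continuity region and the Lipschitz estimate applies cell-by-cell. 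A backward induction identical in structure to the one above, again using that the integral operator is a (sub-stochastic) contraction, accumulates $N-1$ copies of $\tfrac{2a}{\sigma\sqrt{2\pi}}\upsilon$, giving the $(N-1)\tfrac{2a}{\sigma\sqrt{2\pi}}\upsilon$ term. Adding the two contributions yields the claimed bound.

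The main obstacle I anticipate is handling the discontinuity of the kernel cleanly in the discretization step: the generic Lipschitz-continuity results of \cite{APKL10,SA13} assume a globally continuous kernel, so one must verify carefully that, because the partition boundaries include $\theta_\pm$ (cf. \eqref{eq:partition} and the block structure \eqref{eq:structure_P}), no partition cell straddles the jump, and hence the per-cell error is still $O(\upsilon)$ with the constant from Theorem \ref{thm:reach_continuity} rather than $O(1)$. A secondary subtlety is bookkeeping the "off-by-one" in the indices — the fresh truncation error at recursion step $k$ involves $N-k-1$, not $N-k$, because $\mathcal V_N=\widetilde{\mathcal V}_N$ exactly on the relevant sets — but this is exactly what produces $\tfrac{N-2}{2}$ rather than $\tfrac{N-1}{2}$ in the final bound, so it must be tracked precisely.
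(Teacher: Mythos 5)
Your proposal is correct and follows essentially the same route as the paper's proof: both rest on a backward induction over the value-function recursion, drawing the $\frac{2a}{\sigma\sqrt{2\pi}}\upsilon$ per-step term from the piecewise-Lipschitz bound of Theorem \ref{thm:reach_continuity} (valid cell-by-cell because the partition refines $\{\theta_-,\theta_+\}$) and the $(N-k-1)\epsilon$ per-step term from Theorem \ref{thm:VF_limit}, then telescoping to $E_1=\frac{(N-1)(N-2)}{2}\epsilon+(N-1)\frac{2a}{\sigma\sqrt{2\pi}}\upsilon$. The only (immaterial) difference is organizational: the paper folds both error sources into a single recursion $E_k=\frac{2a\upsilon}{\sigma\sqrt{2\pi}}+(N-k-1)\epsilon+E_{k+1}$ by splitting the integral over $(-\infty,\theta_{-\mathsf m})$, $[\theta_{-\mathsf m},\theta_{\mathsf m}]$, $[\theta_{\mathsf m},\infty)$ at each step, whereas you separate them through an intermediate truncated value function and two inductions.
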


\smallskip

Collecting the results above, 
the following theorem quantifies the abstraction error over the total power consumption.
\begin{theorem}
\label{thm:error_hom}
The difference in the expected value of the total power consumption of the population $y(N)$,
and that of the abstracted model $y_{abs}(N)$, 
both conditional on the corresponding initial conditions, 
is upper bounded by
\begin{align}
\label{eq:error_hom}
\big|\mathbb E[ & y(N)|\vectr s_0] - \mathbb E[y_{abs}(N)|\vectr X_0]\big|
\le n_pP_{rate,ON}(N-1)\left[\frac{(N-2)}{2}\epsilon+\frac{2a}{\sigma\sqrt{2\pi}}\upsilon\right],
\end{align}
for all $\vectr s_0\in(\mathbb Z_1\times[\theta_{-\mathsf m},\theta_{\mathsf m}])^{n_p}$.
The initial state $\vectr X_0$ is a function of the initial states in the TCL population $\vectr s_0$, 
as from the definition of the state vector $\vectr X$.
\end{theorem}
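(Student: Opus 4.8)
The plan is to reduce the claim, by linearity of expectation, to a sum over the population of per‑TCL comparisons between the continuous reachability value function of \eqref{eq:recur_prob} and its finite‑state counterpart, and then to invoke Theorem~\ref{thm:error_1} termwise.

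First I would rewrite the exact expected power in terms of the value functions. Writing $\vectr s_0 = (s_{0,1},\dots,s_{0,n_p})$ with $s_{0,j}=(m_{0,j},\theta_{0,j})$, equation \eqref{eq:tot_power} and linearity give $\mathbb E[y(N)|\vectr s_0] = P_{rate,ON}\sum_{j=1}^{n_p}\mathbb E[m_j(N)|\vectr s_0]$. Since each TCL of the homogeneous population is driven by an independent noise sequence, $m_j(N)$ depends on $\vectr s_0$ only through $s_{0,j}$, so $\mathbb E[m_j(N)|\vectr s_0]=\mathbb E[m_j(N)|s_{0,j}]$, which by \eqref{eq:reach_def}--\eqref{eq:recur_prob} equals $\mathcal V_1(s_{0,j})$. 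Hence $\mathbb E[y(N)|\vectr s_0]=P_{rate,ON}\sum_{j=1}^{n_p}\mathcal V_1(s_{0,j})$.

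Next I would compute the abstract expected power. From the linear model \eqref{eq:dyn_mod} with $\mathbb E[\vectr W(t)]=0$ (Theorem~\ref{thm:posit_def}) and iterated expectations, $\mathbb E[\vectr X(N)|\vectr X_0]=(P^T)^N\vectr X_0$, so that $\mathbb E[y_{abs}(N)|\vectr X_0]=n_pP_{rate,ON}[0_n,\mathfrak 1_n](P^T)^N\vectr X_0$. Here $\vectr X_0=\frac1{n_p}\sum_{j=1}^{n_p}\vectr e_{\ell(s_{0,j})}$ is the normalized vector of bin counts of $\vectr s_0$, with $\vectr e_k$ the $k$‑th standard basis vector; and $[0_n,\mathfrak 1_n](P^T)^N\vectr e_k$ is the probability that the chain $\mathcal M$, started at the $k$‑th Markov state, occupies an ON state after $N$ steps. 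Because the partition \eqref{eq:partition} is engineered so that the switching map $f$ is constant on each bin, this quantity is precisely the finite‑state solution $\mathcal W_1$ of \eqref{eq:recur_prob} evaluated at that state, and the count bookkeeping gives $\mathbb E[y_{abs}(N)|\vectr X_0]=P_{rate,ON}\sum_{j=1}^{n_p}\mathcal W_1(s_{0,j})$.

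Finally I would subtract the two expressions, apply the triangle inequality, and bound each summand by Theorem~\ref{thm:error_1}: the hypothesis $\vectr s_0\in(\mathbb Z_1\times[\theta_{-\mathsf m},\theta_{\mathsf m}])^{n_p}$ places every $s_{0,j}$ in the admissible set, so $|\mathcal V_1(s_{0,j})-\mathcal W_1(s_{0,j})|\le (N-1)[\tfrac{N-2}{2}\epsilon+\tfrac{2a}{\sigma\sqrt{2\pi}}\upsilon]$; summing over $j$ and multiplying by $P_{rate,ON}$ yields \eqref{eq:error_hom}. The step requiring the most care is the identification of $\mathbb E[y_{abs}(N)|\vectr X_0]$ with $P_{rate,ON}\sum_j\mathcal W_1(s_{0,j})$: it rests on reading the matrix power $(P^T)^N$ together with the row selector $[0_n,\mathfrak 1_n]$ as the occupation probability of the ON set, on the bookkeeping between the normalized counts encoded in $\vectr X_0$ and the individual initial TCL states, and on the fact that the partition \eqref{eq:partition} introduces no additional error through the discrete switching $f$; everything else is linearity of expectation, the triangle inequality, and the single‑TCL bound already established.
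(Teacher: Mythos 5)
Your proposal is correct and follows essentially the same route as the paper: both decompose $\mathbb E[y(N)|\vectr s_0]$ into per-TCL mode expectations $\mathcal V_1(s_{0,j})$, identify $\mathbb E[y_{abs}(N)|\vectr X_0]=H(P^T)^N\vectr X_0$ with the corresponding sum of Markov-chain quantities $\mathcal W_1$, and conclude by the triangle inequality and the termwise bound of Theorem~\ref{thm:error_1}. Your treatment is merely more explicit than the paper's about the bookkeeping between $\vectr X_0$ and the individual initial states and about reading $[0_n,\mathfrak 1_n](P^T)^N\vectr e_k$ as an ON-set occupation probability.
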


\smallskip

Notice that this result allows tuning the error in the total power consumption of the population estimated from the abstraction -- effectively reducing it to a desired level by increasing the abstraction precision. 

\begin{remark}
\label{rem:local_error}
The above upper bound on the error can be tightened by local computation of the errors, 
as suggested in \cite{SA11,SA13}.  
Suppose $\vectr E_k$ is a $2n\times 1$ vector where each element specifies the error in the related partition set.
It is possible to derive the following recursion for the local error:
$\vectr E_k = \vectr E + P \vectr E_{k+1}$, 
where $\vectr E_N = 0_{2n}^T$,
and where $\vectr E$ is a constant vector with elements equal to $\frac{2a\upsilon}{\sigma\sqrt{2\pi}}$, 
except those related to the absorbing states, which are equal to $\epsilon$.  
Then the elements of $\vectr E_1$ are upper-bounds for the quantity $|\mathcal V_{1}(m_0,\theta_0)-\mathcal W_1(m_0,\theta_0)|$ in each partition set.
%
Moreover, it is possible to reduce the upper bound \eqref{eq:error_hom} to the quantity $n_p\vectr E_1^T \vectr X_0$. 
\end{remark}

\subsection{Further State-Space Reduction of the Asymptotic Model}
The result in Theorem \ref{thm:error_hom} suggests that 
in order to decrease the abstraction error we have to decrease the size of the partitioning bins, which consequently leads to an increase on their number and to a large-dimensional linear model in \eqref{eq:dynamical_approx}. 
This section discusses how to mitigate this issue by means of application of model-order reduction techniques over the large dimensional linear model obtained by the abstraction. 
This known technique follows the observation that the dynamics of the linear model are mostly determined by the largest eigenvalues of the transition probability matrix. 
The following statement helps reframing the linear model within the framework of model-order reduction by eliminating the dependency of state variables in \eqref{eq:dynamical_approx}.

\begin{theorem}
\label{thm:elim_redun}
The dynamical system in \eqref{eq:dynamical_approx} can be modeled by the following stable input/output model 
\begin{align*}
& \bar X(t+1) = A \bar X(t) + B u(t) + \bar W(t)\\
& y_{red}(t) = C \bar X(t) + D u(t),
\end{align*}
where the state vector is $\bar X = [X_1,\cdots,X_{2n-1}]^T$ and the input is taken as the step function. 
The process noise $\bar W(t)$ contains the first $(2n-1)$ elements of $W(t)$.
Suppose we partition the transition matrix:
\begin{align*}
P = \left[
\begin{array}{cc}
\Omega_{11} & \Omega_{12}\\
\Omega_{21} & \Omega_{22}
\end{array}
\right],
\end{align*}
where
$\Omega_{11}\in \mathbb R^{(2n-1)\times(2n-1)},\Omega_{12}\in \mathbb R^{(2n-1)\times 1},\Omega_{21}\in \mathbb R^{1\times(2n-1)}$, and $\Omega_{22}\in \mathbb R$.
Then $A^T = \Omega_{11}-\mathfrak 1_{2n-1}^T\Omega_{21}$,
$B^T = \Omega_{21}$,
$C = [-\mathfrak 1_{n}, 0]$,
and $D = 1$.
Finally, $\lambda(P) = \lambda(A)\cup\{1\}$.
\end{theorem}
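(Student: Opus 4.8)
The plan is to eliminate the redundant last coordinate of $\vectr X$ using the conservation law $\sum_{r=1}^{2n}X_r(t)\equiv 1$ from Theorem~\ref{thm:posit_def} together with its companion $\sum_{r=1}^{2n}\omega_r(t)\equiv 0$, and to carry the constant that this substitution produces as an exogenous step input $u(t)\equiv 1$. First I would perform the elimination directly on \eqref{eq:dynamical_approx}: writing the $i$-th scalar equation of $\vectr X(t+1)=P^T\vectr X(t)+\vectr W(t)$ for $i\in\mathbb N_{2n-1}$ and inserting $X_{2n}(t)=1-\mathfrak 1_{2n-1}\bar X(t)$ gives
\[
X_i(t+1)=\sum_{r=1}^{2n-1}\left(P_{ri}-P_{2n,i}\right)X_r(t)+P_{2n,i}+\omega_i(t).
\]
Stacking these $2n-1$ equations and reading off coefficients in the stated block partition of $P$ yields $A^T=\Omega_{11}-\mathfrak 1_{2n-1}^T\Omega_{21}$, $B^T=\Omega_{21}$, and $\bar W(t)=[\omega_1(t),\dots,\omega_{2n-1}(t)]^T$. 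For the readout, recall that the ordering $\ell(s_{im})=mn+i$ puts the ON states last, so the normalized ON-population equals $\sum_{r=n+1}^{2n}X_r=1-\sum_{r=1}^{n}X_r=1-[\mathfrak 1_n,0]\bar X$; this is exactly $C\bar X+Du$ with $C=[-\mathfrak 1_n,0]$ and $D=1$.

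To obtain the spectral identity I would recast this elimination as a similarity transformation of the full model. Set $S=\begin{bmatrix}I_{2n-1}&0\\ \mathfrak 1_{2n-1}&1\end{bmatrix}$, so that $S\vectr X$ stacks $\bar X$ over $\sum_r X_r$, with $S^{-1}=\begin{bmatrix}I_{2n-1}&0\\ -\mathfrak 1_{2n-1}&1\end{bmatrix}$. A block computation of $SP^TS^{-1}$ that uses \emph{only} the row-stochasticity of $P$ (equivalently the partial-sum identities $\mathfrak 1_{2n-1}\Omega_{11}^T+\Omega_{12}^T=\mathfrak 1_{2n-1}$ and $\mathfrak 1_{2n-1}\Omega_{21}^T+\Omega_{22}=1$) shows that the lower-left block of $SP^TS^{-1}$ vanishes and its lower-right entry is $1$, i.e.
\[
SP^TS^{-1}=\begin{bmatrix}A&B\\ 0&1\end{bmatrix},\qquad S\vectr W(t)=\begin{bmatrix}\bar W(t)\\ 0\end{bmatrix},
\]
the second equality because $\sum_r\omega_r=0$. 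Block-triangularity then gives $\det(zI-P)=\det(zI-P^T)=(z-1)\det(zI-A)$, hence $\lambda(P)=\lambda(A)\cup\{1\}$; and the trivial bottom row of the transformed system shows the coordinate $\sum_r X_r$ stays at its initial value $1$ for all $t$, so the constant multiplying $B$ is indeed the step input.

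Finally, stability of $A$ follows from $\lambda(A)=\lambda(P)\setminus\{1\}$ once one knows that $1$ is a simple eigenvalue of $P$ and every other eigenvalue lies strictly inside the unit disk — a Perron–Frobenius property of the transition matrix of the abstracted chain $\mathcal M$ built from the partition \eqref{eq:partition}. The main obstacle I anticipate is the block bookkeeping in evaluating $SP^TS^{-1}$: several sub-blocks collapse only after the row-sum-one identities are applied, and one must keep the transposes and the left/right orientation of the all-ones vectors straight. The stability claim is the one ingredient that is not pure linear algebra, resting instead on the recurrence structure of $\mathcal M$.
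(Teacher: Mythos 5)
Your construction is essentially identical to the paper's proof: the paper applies exactly the transformation you call $S$ (there named $T$, with the same inverse), computes $TP^TT^{-1}$ to obtain the block-triangular form with lower-right entry $1$ and $TW(t)=[\bar W(t)^T,0]^T$, observes that the last transformed coordinate is frozen at its initial value $1$, and substitutes it back to produce $A$, $B$, and the constant input; your opening scalar elimination is just the componentwise version of this, and the spectral identity $\lambda(P)=\lambda(A)\cup\{1\}$ follows from the same block triangularity (the paper leaves this implicit).

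The one place you go beyond the paper is the stability argument, and that step does not go through as stated. You invoke a Perron--Frobenius property to claim that $1$ is a simple eigenvalue of $P$ with all other eigenvalues strictly inside the unit disk, but the abstracted chain $\mathcal M$ built from the partition \eqref{eq:partition} is explicitly made reducible: the states corresponding to the infinite-length intervals $\Theta_1,\Theta_n$ (in both modes) are rendered absorbing, so $P$ carries the eigenvalue $1$ with multiplicity greater than one and $A$ inherits the extra copies. Perron--Frobenius in the form you need requires irreducibility (and aperiodicity), which fails here. The paper's own proof simply does not address the word ``stable'' in the statement, so this is a gap you share with the theorem as written rather than one you introduced; but if you want to keep the stability claim you would have to argue it from the specific recurrence structure of $\mathcal M$ restricted to the non-absorbing states, not from a blanket Perron--Frobenius appeal.
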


\smallskip

This Theorem allows using model-order reduction techniques like \emph{balanced realization and truncation}
or \emph{Hankel singular values} \cite{A05} to obtain a low dimensional model describing the dynamics of the population power consumption.
Notice that one major difference between the reduced-order model and the model of \cite{HetTCL, HetTCLCallaway} is that here the state matrix $A$ is no longer a transition probability matrix. 
Theorem \ref{thm:elim_redun} and the related model reduction technique are not exclusively applicable to homogeneous populations of TCL, 
but can as well be employed for the heterogeneous populations discussed in the following section.    
The technique is applied on the case study in Section \ref{sec:benchmarks}. 

\section{Formal Abstraction of a Heterogeneous Population of TCL}
\label{sec:exten_heter}
Consider a heterogeneous population of $n_p$ TCL, 
where heterogeneity is characterized by a parameter $\alpha$ that takes $n_p$ values in $\{\alpha_1,\alpha_2,...,\alpha_{n_p}\}$.
Each instance of $\alpha$ specifies a set of model parameters
$(\theta_s,\delta,\theta_a,C,R,\sigma,P_{rate},P_{rate,ON})$ for the dynamics of a single TCL.
Notice that all the parameters in the set influence the temperature evolution, 
except $P_{rate,ON}$, which affects exclusively the output equation. 
Each dynamical model can be abstracted as a Markov chain $\mathcal{M}_\alpha$ with a transition matrix $P_\alpha = [P_{ij}(\alpha)]_{i,j}$, 
according to the procedure in Section \ref{sec:formal_abst}.    
As expected, the transition probability matrix $P_\alpha$ obtained for a TCL depends on its own set of parameters $\alpha$. 

With focus on an aggregated Markov chain model for a population of $n_p$ TCL, 
the goal is again that of abstracting it as a reduced-order (lumped) model.  
The apparent difficulty is that the heterogeneity in the transition probability matrix $P_\alpha$ of the single TCL renders the quantity
$\mathsf P(x_i(t+1) = j|\vectr z(t))$ 
dependent not only on the label $\vectr x(t) = L(\vectr z(t))$,
but effectively on the current state $\vectr z(t)$, 
namely the present distribution of temperatures of each TCL. 
This leads to the impossibility to simplify equation \eqref{eq:total_prob}, 
as done in the homogeneous case. 
Recall that computations on $\mathsf P(\vectr z(t))$ require manipulations over the large dimensional matrix $P_\Xi$, 
which can become practically infeasible.  

In contrast to the homogeneous case, 
which allows us to quantify the probabilities $\mathsf P(x_i(t+1) = j|\vectr x(t))$ 
over a Markov chain obtained as an exact probabilistic bisimulation of the product chain $\varXi$, 
in the heterogeneous case we resort to an \emph{approximate} probabilistic bisimulation \cite{DLT08} of the Markov chain $\varXi$.
The approximation enters in equation \eqref{eq:total_prob} with the replacement of the weighted average in the expression of the law of total probability 
with a normalized (equally weighted) average, as follows: 
\begin{align}
\label{eq:total_prob2} 
\mathsf P(x_i(t+1) = j| \vectr x(t)) = \frac{\sum_{\vectr z(t)\rightarrow\vectr x(t)}
\mathsf P(x_i(t+1) = j|\vectr z(t))}{\#\left\{\vectr z(t)\rightarrow\vectr x(t)\right\}}.
\end{align}
In other words we have assumed that the probability for the Markov chain $\varXi$ to be in each labeled state is the same.  
Similarly, 
the average of the random variables $x_i(t+1)$, conditioned over $\vectr x(t)$, 
can be obtained from \eqref{eq:total_prob2} as
$
\mathbb E[x_i(t+1)|\vectr x(t)]
= \frac{\sum_{\vectr z(t)\rightarrow\vectr x(t)}\mathbb E[x_i(t+1)|\vectr z(t)]}
{\#\left\{\vectr z(t)\rightarrow\vectr x(t)\right\}}. 
$
Unlike in the exact bisimulation instance, 
the error introduced by the approximate probabilistic bisimulation relation can only be quantified empirically over matrix $P_\Xi$.   

Next, we put forward two alternative approaches to characterize the properties of the abstraction of the TCL population: 
by an averaging argument  in Section \ref{sec:Het_avg}, 
and by a clustering assumption in Section \ref{sec:Het:cluster}. 

\subsection{Abstraction of a Heterogeneous Population of TCL via Averaging}
\label{sec:Het_avg}
We characterize quantitatively the population heterogeneity by constructing 
the empirical density function $f_{\alpha}(\cdot)$ from the finite set of values taken by the parameter $\alpha$. 
This allows the characterization of the statistics of the conditional variable $(\vectr X(t+1)|\vectr X(t))$ 
(recall that $\vectr X$ is a normalized version of $\vectr x$) as follows. 
\begin{theorem}
\label{thm:hetr_mean}
Consider a TCL population with heterogeneity that is encompassed by a parameter $\alpha$ with empirical density function $f_{\alpha}(\cdot)$. 
Introducing an approximate probabilistic bisimulation of the Markov chain $\varXi$ as in \eqref{eq:total_prob2},   
the conditional random variable $(\vectr X(t+1)|\vectr X(t))$ has the following statistics: 
\begin{align*}
\mathbb E[X&_i(t+1)|\vectr X(t)] =\sum_{r=1}^{2n} X_r(t)\overline{P_{ri}},\\
var( & X_i(t+1)|\vectr X(t))
= \frac{1}{n_p} \sum\limits_{r=1}^{2n}X_r\overline{P_{ri}(1-P_{ri})}
+\frac{1}{n_p-1} \left(\sum\limits_{r=1}^{2n}X_r\overline{P_{ri}}\right)^2
-\frac{1}{n_p-1} \sum\limits_{r=1}^{2n}X_r \overline{P_{ri}}^2,\\
cov( & X_i(t+1),X_j(t+1) |\vectr X(t))
= \frac{1}{n_p-1}\left(\sum\limits_{r=1}^{2n}X_r \overline{P_{r i}}\right)\left(\sum\limits_{s=1}^{2n}X_s \overline{P_{s j}}\right)\\
& \hspace{1.7in} -\frac{1}{n_p-1}\sum\limits_{r=1}^{2n}X_r \overline{P_{r i}}\overline{P_{r j}}-
\frac{1}{n_p}\sum_{r=1}^{2n}X_r\overline{P_{r i}P_{r j}},
\end{align*}
where the barred quantities indicate an expected value respect to the parameters set $\alpha$, for instance 
$\overline{P_{r i}P_{r j}} =\mathbb E_\alpha[P_{r i}(\alpha)P_{r j}(\alpha)] = \int P_{r i}(v)P_{r j}(v)f_{\alpha}(v)dv$.
\end{theorem}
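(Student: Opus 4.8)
The plan is to compute the first two moments of $(\vectr X(t+1)|\vectr X(t))$ directly from the approximate law of total probability \eqref{eq:total_prob2}, treating the parameter $\alpha$ of each TCL as drawn (empirically) according to $f_\alpha(\cdot)$. First I would fix the label $\vectr x(t)$ and a configuration $\vectr z(t)\rightarrow\vectr x(t)$. Recall from the homogeneous analysis (Theorems \ref{thm:pois_bino}--\ref{thm:poiss_gen_mult}) that, conditional on a fixed state $\vectr z(t)$ and on the list of individual transition matrices $P_{\alpha_1},\dots,P_{\alpha_{n_p}}$, the vector $(\vectr x(t+1)|\vectr z(t))$ is a generalized multinomial: TCL $r$ lands in bin $i$ with probability $P_{z_r(t)i}(\alpha_r)$ independently across $r$. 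Hence $\mathbb E[x_i(t+1)|\vectr z(t),\alpha] = \sum_{r} P_{z_r(t)i}(\alpha_r)$, with the corresponding generalized-multinomial variance and covariance expressions. The approximate bisimulation step then averages these over all $\vectr z(t)\rightarrow\vectr x(t)$ with equal weights, which is precisely an average over which TCL occupies which bin; since there are $x_r(t)$ TCL in bin $r$, this equal-weight average reproduces the weighting $\sum_r x_r(t)(\cdot)$ in the $2n$-dimensional bin index.

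The second ingredient is to take expectations over the heterogeneity parameter. Because the $n_p$ parameters $\alpha_r$ are effectively sampled i.i.d. from $f_\alpha$, I would replace each occurrence of $P_{ri}(\alpha_r)$, $P_{ri}(\alpha_r)^2$, $P_{ri}(\alpha_r)P_{rj}(\alpha_r)$ by its $\alpha$-average $\overline{P_{ri}}$, $\overline{P_{ri}^{\,2}}$, $\overline{P_{ri}P_{rj}}$, while carefully distinguishing terms involving a single TCL (where the bar distributes onto a product, e.g. $\overline{P_{ri}(1-P_{ri})}$, $\overline{P_{ri}P_{rj}}$) from terms involving two distinct TCL $r\ne r'$ (where independence gives $\overline{P_{ri}}\cdot\overline{P_{r'j}}$). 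The mean follows immediately: $\mathbb E[X_i(t+1)|\vectr X(t)] = \sum_r X_r(t)\overline{P_{ri}}$. For the variance I would use the law of total variance, $\mathrm{var}(x_i(t+1)|\vectr x(t)) = \mathbb E_\alpha[\mathrm{var}(x_i(t+1)|\vectr z,\alpha)] + \mathrm{var}_\alpha(\mathbb E[x_i(t+1)|\vectr z,\alpha])$; the first term yields the $\tfrac1{n_p}\sum_r X_r\overline{P_{ri}(1-P_{ri})}$ contribution after normalization, and the second, being the variance of a sum of i.i.d. terms, produces the $\tfrac1{n_p-1}$ corrections $\big(\sum_r X_r\overline{P_{ri}}\big)^2 - \sum_r X_r\overline{P_{ri}}^{\,2}$ once one accounts for the sampling-without-replacement bookkeeping that converts $n_p$ samples of $n_p$ parameters into the stated coefficients. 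The covariance is handled analogously, combining the generalized-multinomial within-configuration covariance $-\tfrac1{n_p}\sum_r X_r\overline{P_{ri}P_{rj}}$ with the cross term from $\mathrm{cov}_\alpha$ of the two conditional means.

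I expect the main obstacle to be the bookkeeping that turns the equal-weight average over $\vectr z(t)\rightarrow\vectr x(t)$ together with the $\alpha$-averaging into exactly the coefficients $\tfrac1{n_p}$ and $\tfrac1{n_p-1}$ appearing in the statement — in particular correctly separating diagonal ($r=r'$) from off-diagonal ($r\ne r'$) contributions and tracking why the "explained variance / covariance of the means" pieces carry $\tfrac1{n_p-1}$ rather than $\tfrac1{n_p}$. Everything else is a direct, if tedious, application of the generalized-multinomial moment formulas in Table \ref{tab:discrete_distributions1} combined with the definition of the barred quantities; no new conceptual device beyond the law of total variance and the i.i.d.\ sampling model for $\alpha$ is needed.
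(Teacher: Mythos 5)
Your outline follows the paper's own route: condition on a configuration $\vectr z(t)\rightarrow\vectr x(t)$ to obtain the generalized-multinomial structure, apply the equal-weight average \eqref{eq:total_prob2}, and compute first and second moments; the law-of-total-variance packaging is just a reorganization of the paper's direct computation of $\mathbb E[x_i^2(t+1)|\vectr x(t)]$. There is, however, one point where your framing, taken literally, gives the wrong answer. The parameters are not i.i.d.\ draws from $f_\alpha$: the density $f_\alpha$ is the empirical density of the fixed list $\{\alpha_1,\dots,\alpha_{n_p}\}$, and the barred quantities arise deterministically because the uniform average over $\vectr z\rightarrow\vectr x$ assigns this fixed list to the bins uniformly at random, i.e., by sampling without replacement. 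Consequently your claim that for two distinct TCL ``independence gives $\overline{P_{ri}}\cdot\overline{P_{r'j}}$'' is false: for $j\ne u$ the joint law of the two bin assignments is $\mathsf P(z_j=r,z_u=s)=\frac{j_rj_s}{n_p(n_p-1)}$ for $r\ne s$ and $\frac{j_r(j_r-1)}{n_p(n_p-1)}$ for $r=s$, so the cross-moment does not factor into a product of bars; it is exactly this failure of factorization that produces every $\frac{1}{n_p-1}$ term in the statement. If the cross terms factored as you write, the correction terms of the theorem would vanish and the variance would collapse to the homogeneous-looking $\frac{1}{n_p}\sum_{r}X_r\overline{P_{ri}(1-P_{ri})}$.

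Relatedly, the step you defer as ``bookkeeping'' is the entire substance of the proof. The paper carries it out by counting, for a label with occupancies $j_1,\dots,j_{2n}$, how many configurations place TCL $j$ in bin $r$ (namely $\beta_{rj}=(n_p-1)!/(j_1!\cdots(j_r-1)!\cdots j_{2n}!)$, whose ratio to $\#\{\vectr z\rightarrow\vectr x\}$ is $j_r/n_p$) and how many place a pair of distinct TCL in bins $r$ and $s$ (the pair counts $\gamma_{rs}$ realizing the joint probabilities above); these ratios are what convert the double sum over TCL index and bin index into the bin-weighted empirical averages $\sum_r X_r\,\overline{(\cdot)}$ with the stated $1/n_p$ and $1/(n_p-1)$ coefficients. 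Until that computation is carried out --- including the separation of the diagonal contribution $j=u$, which is what yields $\overline{P_{ri}P_{rj}}$ rather than $\overline{P_{ri}}\,\overline{P_{rj}}$ in the covariance --- the variance and covariance formulas have not actually been established.
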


Further, let us mention that the asymptotic properties obtained as the population size grows, 
as discussed in Section \ref{sec:stoch_proper}, still hold as long as the distribution of the parameters set $f_\alpha(\cdot)$ is given and fixed.  
%

With focus on the heterogeneity in the output equation, 
we can similarly replace the ensemble of parameter instances $P_{rate,ON}$ by the average quantity $\bar P_{rate,ON}$, 
namely the mean rated power of the TCL population in the ON mode, 
which is computed as the expected value of $P_{rate,ON}$ with respect to the parameter set:
$\bar P_{rate,ON} = \mathbb E_{\alpha}\left[ P_{rate,ON}(\alpha)\right]$.
While (as discussed above) we cannot analytically quantify the error introduced by the approximate bisimulation used for the abstraction of the temperature evolution in the population, 
we can still quantify the error related to the heterogeneity in the output equation: 
this will be done shortly in Theorem \ref{thm:cluster_error}.

\subsection{Abstraction of a Heterogeneous Population of TCL via Clustering}
\label{sec:Het:cluster}
We propose an alternative method to reduce a heterogeneous population of TCL into a finite number of homogeneous populations. 
While more elaborate than the preceding approach, it allows for the quantification of the error under the following Assumption.
\begin{assumption}
Assume that the heterogeneity parameter $\alpha = (\theta_s,\delta,\theta_a,C,R,\sigma,P_{rate},P_{rate,ON})$ belongs to a bounded set $\Gamma_a$, 
and that the parametrized transition probability matrix $P_\alpha$ satisfies the following inequality expressing a condition on its continuity w.r.t. $\alpha$:
\begin{align}
\label{eq:Mat_pert}
\|P_\alpha-P_{\alpha'}\|_\infty\le h_a\|\alpha-\alpha'\|\quad\forall \alpha,\alpha'\in\Gamma_a.
\end{align}
\end{assumption}
Consider an heterogeneous range for a given parameter: the approach is to partition the uncertainty range and ``cluster together'' the TCL in the given population, 
according to the partition they belong to, and further considering them as homogeneous within their cluster.   
More precisely, 
Select a finite partition of the set $\Gamma_a = \cup_i\varGamma_i$, characterized by a diameter $\upsilon_a$, namely 
$
\|\alpha-\alpha'\|\le \upsilon_a, \forall \alpha,\alpha'\in\varGamma_i,\forall i. 
$
Associate arbitrary representative points $\alpha_i\in\varGamma_i$ to the partition sets.
Finally, replace the transition matrix $P_\alpha$ and $P_{rate,ON}$ by $\sum_iP_{\alpha_i}\mathbb{I}_{\varGamma_i}(\alpha)$ and $\sum_i P_{rate,ON}(\alpha_i)\mathbb{I}_{\varGamma_i}(\alpha)$, respectively. 
The error made by this procedure is quantified in the following statement. 

\begin{theorem}
\label{thm:cluster_error}
Given a heterogeneous population of TCL, 
suppose we cluster the heterogeneity parameter $\alpha\in\varGamma_i$, 
assume homogeneity within the introduced clusters, 
and model each cluster based on the results of Section \ref{sec:formal_abst} with outputs $y_{abs,i}(N)$.
Let us define the approximate power consumption of the heterogeneous population as the sum of clusters outputs, as follows: 
$y_{abs}(N) = \sum_{i}y_{abs,i}(N)$.
The abstraction error can be upper-bounded by
\begin{align}
\big|\mathbb E[y(N)|\vectr s_0] - \mathbb E[y_{abs}(N)]\big|
& \le \max_{\alpha} n_p(N-1)P_{rate,ON}(\alpha)\left[\frac{(N-2)}{2}\epsilon(\alpha)+\frac{2a(\alpha)}{\sigma(\alpha)\sqrt{2\pi}}\upsilon\right]\nonumber\\
& + n_p\left[ \bar P_{rate,ON}(N-1)h_a+1\right]\upsilon_a,\label{eq:cluster_error}
\end{align}
for all $\vectr s_0\in(\mathbb Z_1\times[\theta_{-\mathsf m},\theta_{\mathsf m}])^{n_p}$. 
The parameters $\epsilon(\cdot),\gamma(\cdot)$, and $\lambda(\cdot)$ are computed as in Theorem \ref{thm:error_hom} and depend on the value of $\alpha$. 
Finally, let us introduce the quantity $\bar P_{rate,ON} = \sum_i\frac{n_i}{n_p}P_{rate,ON}(\alpha_i) = \mathbb E_\alpha\left[P_{rate,ON}(\alpha) \right]$,
where $n_i$ is the population size of the $i^{th}$ cluster, so that $\sum_{i}n_i = n_p$.
\end{theorem}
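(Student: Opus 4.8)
The plan is to decompose the error into two contributions that can be bounded independently: one coming from discretizing the temperature dynamics of each cluster (as in the homogeneous case), and one coming from replacing the true heterogeneous parameters $\alpha$ by the cluster representatives $\alpha_i$. To this end I would write, for the single TCL $j$ with parameter $\alpha$ lying in cluster $\varGamma_i$,
\begin{align*}
\big|\mathbb E[m_j(N)P_{rate,ON}(\alpha)|s_0^{(j)}] - \mathbb E[m_{j,abs}(N)P_{rate,ON}(\alpha_i)]\big|
&\le \big|\mathbb E[m_j(N)|s_0^{(j)}] - \mathbb E[m_{j,abs}(N)]\big|P_{rate,ON}(\alpha)\\
&\quad + \big|\mathbb E[m_{j,abs}(N)]\big|\,|P_{rate,ON}(\alpha)-P_{rate,ON}(\alpha_i)|,
\end{align*}
insert a term $\mathbb E[\tilde m_{j}(N)]$ corresponding to the TCL with the representative parameter $\alpha_i$ but exact (continuous) dynamics, and then split the first factor further into (a) the abstraction error for a homogeneous subpopulation with parameter $\alpha_i$, handled by Theorem \ref{thm:error_hom}, and (b) the error $|\mathbb E[m_j(N)|s_0^{(j)}]-\mathbb E[\tilde m_j(N)]|$ caused by perturbing the continuous dynamics from $\alpha$ to $\alpha_i$.

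For term (b) I would use the value-function characterization \eqref{eq:recur_prob}: the expected mode is $\mathcal V_1$ computed with kernel $t_s(\cdot|\cdot;\alpha)$ versus $\mathcal V_1$ with kernel $t_s(\cdot|\cdot;\alpha_i)$. Writing the recursion difference and using that $\mathcal V_k\in[0,1]$, one reduces to bounding $\int |t_s(s'|s;\alpha)-t_s(s'|s;\alpha_i)|ds'$ at each step, which telescopes over the $N-1$ steps; this total-variation distance between the two one-step kernels is controlled by the Lipschitz-type assumption \eqref{eq:Mat_pert} after discretization, giving a contribution of order $(N-1)h_a\upsilon_a$ per TCL, and hence $n_p(N-1)\bar P_{rate,ON}h_a\upsilon_a$ once multiplied by the (average) rated power and summed over the population. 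The $P_{rate,ON}$-heterogeneity term is immediate: $|\mathbb E[m_{j,abs}(N)]|\le 1$ and $|P_{rate,ON}(\alpha)-P_{rate,ON}(\alpha_i)|\le$ (Lipschitz constant of $P_{rate,ON}$ in $\alpha$) $\times\,\upsilon_a$, or more directly absorbed into the diameter bound, contributing the $n_p\upsilon_a$ piece of \eqref{eq:cluster_error}.

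For term (a), within cluster $i$ Theorem \ref{thm:error_hom} gives a bound of the form $n_i P_{rate,ON}(\alpha_i)(N-1)[\frac{N-2}{2}\epsilon(\alpha_i)+\frac{2a(\alpha_i)}{\sigma(\alpha_i)\sqrt{2\pi}}\upsilon]$; summing over clusters and bounding each cluster's prefactor by its maximum over $\alpha$, together with $\sum_i n_i=n_p$, yields the first line of \eqref{eq:cluster_error}. Finally I would collect the three pieces by the triangle inequality over the full population, identify $\sum_i \frac{n_i}{n_p}P_{rate,ON}(\alpha_i)=\bar P_{rate,ON}$, and rearrange into the stated form.

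I expect the main obstacle to be term (b): making precise that the perturbation bound \eqref{eq:Mat_pert} on the \emph{discretized} transition matrices $P_\alpha$ translates into a controlled perturbation of the \emph{continuous}-dynamics value functions $\mathcal V_1$, since \eqref{eq:Mat_pert} is stated for the Markov-chain abstraction rather than for the kernels $t_s(\cdot|\cdot;\alpha)$ directly. The cleanest route is probably to carry out the entire comparison at the level of the abstracted chains (comparing the chain with matrix $P_\alpha$ to the chain with matrix $P_{\alpha_i}$ via a telescoping sum $\|P_\alpha^{N-1}-P_{\alpha_i}^{N-1}\|_\infty\le (N-1)\|P_\alpha-P_{\alpha_i}\|_\infty$, using that stochastic matrices are nonexpansive in $\|\cdot\|_\infty$), and then fold the continuous-to-discrete gap for parameter $\alpha_i$ into term (a) via Theorem \ref{thm:error_hom}. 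Care is also needed with the unbounded temperature tails, but this is already dealt with by the truncation analysis of Theorems \ref{thm:reach_asymp}--\ref{thm:VF_limit} and the hypothesis $\vectr s_0\in(\mathbb Z_1\times[\theta_{-\mathsf m},\theta_{\mathsf m}])^{n_p}$.
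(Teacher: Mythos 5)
Your final route is exactly the paper's proof: the paper bounds the single-TCL abstraction error via Theorem \ref{thm:error_1} (absorbed into the $\max_\alpha$ term), handles the parameter perturbation entirely at the level of the abstracted chains through the recursion $\bar{\mathcal V}_k(\alpha)=P(\alpha)\bar{\mathcal V}_{k+1}(\alpha)$ --- yielding $\|\bar{\mathcal V}_1(\alpha)-\bar{\mathcal V}_1(\alpha')\|_\infty\le(N-1)h_a\|\alpha-\alpha'\|$ by the same nonexpansiveness/telescoping argument you describe --- and splits off the $P_{rate,ON}$ heterogeneity by the product triangle inequality, bounding $|P_{rate,ON}(\alpha)-P_{rate,ON}(\xi(\alpha))|\le\upsilon_a$ directly from the cluster diameter. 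The obstacle you flag (that \eqref{eq:Mat_pert} lives on the discretized matrices rather than on the kernels $t_s$) is real, and your proposed resolution --- carrying out the $\alpha$-perturbation comparison on the abstracted chains instead of on the continuous dynamics --- is precisely how the paper sidesteps it.
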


\smallskip

Notice that the first part of the error in \eqref{eq:cluster_error} is due to the abstraction of a single TCL by state-space partitioning,
while the second part is related to the clustering procedure described above. 
Further, notice that all terms in the bound above can be reduced by selecting finer temperature partitions (smaller bins) or smaller clusters diameter for the parameter sets.

The second part of the error in \eqref{eq:cluster_error} is computed based on the Lipschitz continuity of the transition probability matrix $P_\alpha$ as per Assumption \ref{eq:Mat_pert}. 
This can be evaluated over the transition probability matrices obtained by abstracting the heterogeneous TCL dynamics (characterized by the conditional density functions $t_s$) as Markov chains.  
Alternatively, 
we could formulate this error bound based on Lipschitz continuity of the conditional density function $t_s$ with respect to the parameters set $\alpha$ by using the explicit relation \eqref{eq:trans_prob} for the transition probabilities.
Then the constant $h_a$ is computable as a function of the Lipschitz constant of the conditional density function of the process. 
As an example, the constant $h_a$ for the case of a Gaussian process noise and heterogeneity term residing exclusively in thermal capacitance (that is, in the parameter $a$) is computed as follows: $h_a = \dfrac{\mathcal L+\lambda}{\sigma\sqrt{2\pi}}$.  

As a final note, the result in Theorem \ref{thm:cluster_error} is applicable to the setup in Section \ref{sec:Het_avg} when the heterogeneity lies in the parameter $P_{rate,ON}$, 
by considering a single cluster. 
 
\section{Abstraction and Control of a Population of Non-Autonomous TCL}
\label{sec:controlled} 

One can imagine a number of different strategies for controlling the total power consumption of a population of TCL. 
With focus on the dynamics of a single TCL, 
one strategy could be to vary the rate of the energy transfer $P_{rate}$, for instance by circulating cold/hot water through the load with higher or lower speed.
Another approach could be to act on the thermal resistance $R$, for instance opening or closing doors and windows at the load.
Yet another strategy could be to apply changes to the set-point $\theta_s$, as suggested in \cite{HomTCL}.  

Let us observe that the first two actions would modify the dynamics of \eqref{eq:tcl_dyn}, 
whereas the third control action would affect the relation in \eqref{eq:switch}. 
Upon abstracting the TCL model as a finite-state Markov chain, 
a control action would result in a modification of the elements of the transition probability matrix.
With reference to \eqref{eq:structure_P}, 
the entries of the matrices $Q_{11},Q_{22},Q_{31},Q_{42}$ are computed based on \eqref{eq:tcl_dyn},
while the size of these matrices are determined based on \eqref{eq:switch}.
Since the set-point $\theta_s$ affects only equation \eqref{eq:switch},
the set-point regulation changes the structure of the probability matrix in \eqref{eq:structure_P}
while other approaches affect the value of its non-zero elements. 
It follows that the set-point regulation has the advantage of a single computation of marginals,
while the other discussed methods would require such a computation as a function of the allowed control inputs.  
%

In order to focus the discussion, 
we consider more challenging case where the control input is taken to be the set-point $\theta_s$ of the TCL. 
We intend to apply the control input to all TCL uniformly,
(cf. Figure \ref{fig:control_MPC}) 
in order to retain the population homogeneity and since this does not require to differentiate among the states of different TCL.  
This is unlike in \cite{HetTCL},
which consider the control signal as an external input that is applied based on the knowledge of states of single TCL, 
and which practically requires adding thermometers (with relatively high accuracy) to each TCL. 
More precisely, 
\cite{HetTCL} assumes full knowledge of the state vector $\vectr X(t)$ and employs a Model Predictive Control architecture to design the control signal. 
Moving forward, 
\cite{HetTCLCallaway} considers different options for the configuration of the closed loop control:  
either states are completely measured, 
or a portion of state information is required,
or else states are estimated by using an Extended Kalman Filter (EKF). 
The minimum required infrastructure in \cite{HetTCLCallaway} ranges from a 
TCL temperature sensor and a two-way data connection for transmitting the state information and control signal, 
to a one-way data connection for sending the control signal to the TCL. 
The presence of a local decision maker is essential in all the scenarios: 
each TCL receives a control signal at each time step, 
determines its current state, and generates a local control action (for instance, by randomization).
The performance of EKF for 
state estimation seems to be bound to computational limitations when the number of states becomes large.  
%

In the following we attempt to mitigate the above limitations by showing that the knowledge of the actual values of the TCL states or of vector $\vectr X(t)$ in the aggregated model are not necessary.
Given the model parameters, all is needed is an online measurement of the total power consumption of the TCL population, 
which allows estimating the states in $\vectr X(t)$ and using the set-point $\theta_s$ to track a given reference signal.
The control action comprises a simple signal for the set-point that is applied to all TCL uniformly: 
no local decision maker is required.  

\subsection{State Estimation and One-Step Regulation}
\label{sec:est_Regul}
Suppose we have a homogeneous population of TCL with known parameters.
We assume that the control input is discrete and take values from a finite set, 
$ \theta_s(t) \in \{\theta_{-\mathsf l},\theta_{-\mathsf l+1},\cdots,\theta_{\mathsf l-1},\theta_{\mathsf l}\}, \forall t \in \mathbb Z$. 
The parameter $\mathsf l$ is arbitrary and has here been chosen to align with the abstraction parameter in Figure \ref{fig:part} 
and with the scheme in \eqref{eq:partition}.  
Based on \eqref{eq:dyn_mod},
we set up the following discrete-time stochastic switched system: 
\begin{align*}
\vectr X(t+1) = F_{\sigma(t)}\vectr X(t)+\vectr W(t), 
\end{align*}
where $\forall t \in \mathbb Z$ the state matrix 
$F_{\sigma(t)}\in\{P^T(\theta_{-\mathsf l}),P^T(\theta_{-\mathsf l+1}),\cdots,P^T(\theta_{\mathsf l-1}),P^T(\theta_{\mathsf l})\}$ (cf. \eqref{eq:dyn_mod}),
and the switching signal $\sigma(\cdot):\mathbb Z\rightarrow \mathbb Z_{2\mathsf l}$ is a map specifying the set-point $\theta_s$, 
and hence the TCL dynamics, as a function of time. 
The process noise $\vectr W(t)$ is normal with zero mean and a state-dependent covariance matrix $\varSigma(\vectr X(t))$ in \eqref{eq:cov_mat}. 
The total power consumption of the TCL population is measured as 
\begin{equation*}
y_{meas} (t) = H\vectr X(t)+ v(t),
\end{equation*}
where $v(t)\sim\mathcal N(0,R_v)$ is a measurement noise characterized by $\sqrt{R_v}$, 
the standard deviation of the real-time measurement in the power meter instrument. 

Since the process noise $\vectr W$ is state-dependent, 
the state of the system can be estimated by modifying the classical Kalman Filter with the following time update: 
\begin{align*}
& \hat{\vectr X}^{-}(t+1) = F_{\sigma(t)}\hat{\vectr X}(t),\\
& P^-(t+1) = F_{\sigma(t)}\mathds P(t) F_{\sigma(t)}^T+\varSigma(\hat{\vectr X}(t)),
\end{align*}
and the following measurement update: 
\begin{align*}
& K_{t+1} = P^-(t+1)H^T\left[H P^-(t+1)H^T+R_v \right]^{-1},\\
& \mathds P(t+1) = [I-K_{t+1}H]P^-(t+1),\\
& \hat{\vectr X}(t+1) = \hat{\vectr X}^{-}(t+1)+K_{t+1}[y_{meas}(t+1)-H \hat{\vectr X}^{-}(t+1)].
\end{align*}

When the state estimates $\hat{\vectr X}$ are available,
we formulate the following optimization problem based on a one-step output prediction, 
in order to synthesize the control input at the next step:  
\begin{align*}
& \min_{\sigma(t+1)\in\mathbb Z_{2\mathsf l}} |y_{est}(t+2)-y_{des}(t+2)|, \text{   s.t.}\\
& \hat{\vectr X}(t+2) = F_{\sigma(t+1)}\hat{\vectr X}(t+1)\\
& y_{est}(t+2) = H\hat{\vectr X}(t+2),
\end{align*}
where $y_{des}(\cdot)$ is a desired reference signal and $\hat{\vectr X}(t+1)$ is provided by the Kalman Filter above.   
The obtained optimal value for $\sigma(t+1)$ provides the set-point $\theta_s(t+1)$, 
which is applied to the entire TCL population at the following $(t+1)^{\text{th}}$ iteration. 
Figure \ref{fig:control_MPC} illustrates the closed-loop configuration of the above scheme for state estimation and one-step regulation of the power consumption.
For clarity we have summarized the interpretation of the different notations used for power consumption in Table \ref{tab:out}.  

\begin{figure}
\scalebox{0.7}
{
\begin{pspicture}(-0.5,-4.5)(21.5,4.5)
\psframe[linewidth=0.04,dimen=outer](5.8,3.3)(2,0.7)
\psframe[linewidth=0.04,dimen=outer](18.5,4.2)(8.6,-0.3)
\psframe[linewidth=0.04,dimen=outer](17.1,-1.3)(10.2,-4)
\psframe[linewidth=0.04,dimen=outer](4.6,-2)(3.2,-3)

\psline[linewidth=0.04cm,arrowsize=0.08cm 2.0,arrowlength=1.4,arrowinset=0.4]{->}(5.75,2)(8.6,2)
\psline[linewidth=0.04cm,arrowsize=0.05291667cm 2.0,arrowlength=1.4,arrowinset=0.4]{->}(10.18,-2.5)(4.6,-2.5)

\psline[linewidth=0.04,arrowsize=0.05291667cm 2.0,arrowlength=1.4,arrowinset=0.4]{->}(18.5,2)(21,2)(21,-2.5)(17.1,-2.5)
\psline[linewidth=0.04,arrowsize=0.05291667cm 2.0,arrowlength=1.4,arrowinset=0.4]{->}(3.2,-2.5)(0,-2.5)(0,2)(2,2)

\rput(3.7,3.6){Population}
\rput(13.4,4.5){Kalman Filter with state dependent process noise}
\rput(13.5,-1){One-step regulation}
\rput(3.95,-2.5){$Z^{-1}$}
\rput(19.8,2.3){$\hat{\vectr X}(t+1)$}
\rput(0.9,2.3){$\theta_s(t)$}
\rput(8,-2.2){$\theta_s(t+1)$}

\psline[linewidth=0.04cm,arrowsize=0.05291667cm 2.0,arrowlength=1.4,arrowinset=0.4]{->}(7.7,0.185)(8.6,0.185)
\psline[linewidth=0.04cm,arrowsize=0.05291667cm 2.0,arrowlength=1.4,arrowinset=0.4]{->}(18,-3.1)(17.1,-3.1)

\rput(7.3,2.3){$y_{meas}(t+1)$}
\rput(19,-3){$y_{des}(t+2)$}
\rput(7.1,1.4){$\hat{\vectr X}(0)$}
\rput(7.1,0.8){$R_v$}
\rput(6.7,0.3){population}
\rput(6.7,-0.1){parameters}

\psline[linewidth=0.04cm,arrowsize=0.05291667cm 2.0,arrowlength=1.4,arrowinset=0.4]{->}(7.7,0.765)(8.6,0.765)
\psline[linewidth=0.04cm,arrowsize=0.05291667cm 2.0,arrowlength=1.4,arrowinset=0.4]{->}(7.7,1.325)(8.6,1.325)
\psline[linewidth=0.04cm,arrowsize=0.05291667cm 2.0,arrowlength=1.4,arrowinset=0.4]{->}(18,-3.7)(17.1,-3.7)

\rput(3.8,2.8){$TCL_1(\theta_1,m_1)$}
\rput(3.8,2.3){$TCL_2(\theta_2,m_2)$}
\rput(3.4,1.95){$\vdots$}
\rput(4,1.3){$TCL_{n_p}(\theta_{n_p},m_{n_p})$}

\rput(19,-3.6){population}
\rput(19,-4){parameters}

\rput(9.9,3.7){Time update:}
\rput(11.3,3.2){$\hat{\vectr X}^{-}(t+1) = F_{\sigma(t)}\hat{\vectr X}(t)$}
\rput(12.6,2.6){$P^-(t+1) = F_{\sigma(t)}\mathds P(t) F_{\sigma(t)}^T+\varSigma(\hat{\vectr X}(t))$}

\rput(10.55,2){Measurement update:}
\rput(12.7,1.5){$K_{t+1} = P^-(t+1)H^T\left[H P^-(t+1)H^T+R_v \right]^{-1}$}
\rput(11.6,0.9){$\mathds P(t+1) = [I-K_{t+1}H]P^-(t+1)$}
\rput(13.6,0.3){$\hat{\vectr X}(t+1) = \hat{\vectr X}^{-}(t+1)+K_{t+1}[y_{meas}(t+1)-H \hat{\vectr X}^{-}(t+1)]$}

\rput(13.7,-1.8){$ \min_{\sigma(t+1)\in\mathbb Z_{2\mathsf l}} |y_{est}(t+2)-y_{des}(t+2)|$}
\rput(11.5,-2.4){subject to:}
\rput(13.3,-3.1){$\hat{\vectr X}(t+2) = F_{\sigma(t+1)}\hat{\vectr X}(t+1)$}
\rput(13.05,-3.6){$y_{est}(t+2) = H\hat{\vectr X}(t+2)$}
\end{pspicture}
}
\caption{State estimation and one-step regulation for the closed-loop control of the power consumption.}
\label{fig:control_MPC}
\end{figure}
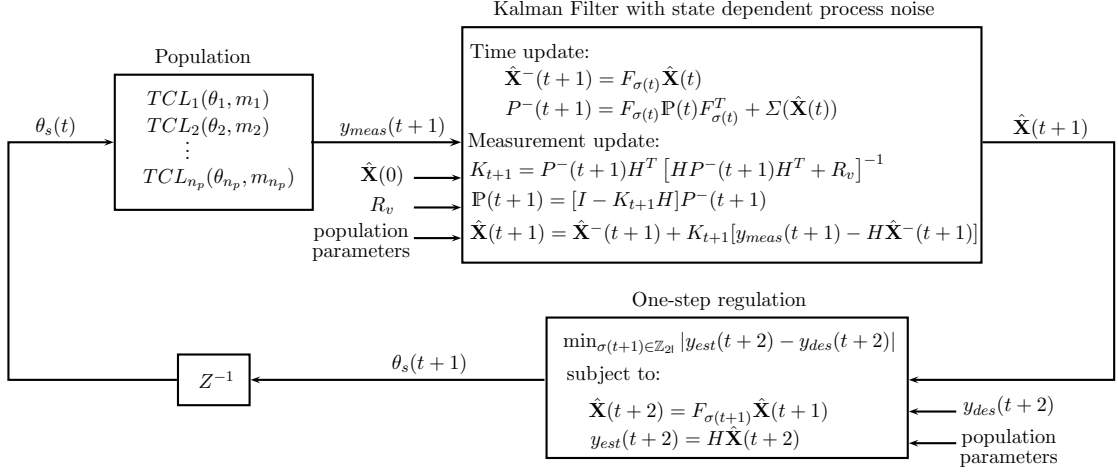

\begin{table}
\centering
\caption{Notations introduced for the output signals (total power consumption of TCL population)}
\scalebox{1}{ %
\begin{tabular}{| l l |}
\hline
signal & Interpretation\\
\hline
$y(t)$ & actual power consumption of the TCL population\\
$y_{meas}(t)$ & measured power consumption of the population, input to the KF \\
$y_{est}(t)$ & estimated power consumption, output of KF\\
$y_{des}(t)$ &  desired power consumption, given reference signal for power tracking\\
$y_{abs}(t)$ & output of the linear stochastic model, abstraction of TCL population\\
$y_{red}(t)$ & output of reduced-order model of abstraction of TCL population\\
\hline
\end{tabular}
}
\label{tab:out}
\end{table}

\subsection{Regulation via Stochastic Model Predictive Control (SMPC)}
\label{sec:SMPC}
We can perform power tracking by formulating and solving the following SMPC problem \cite{HCL09}:
\begin{align*}
& \min_{\sigma(\tau)} J_t = \mathbb E\left[\sum_{\tau=t+1}^{T}[y_{abs}(\tau)-y_{des}(\tau)]^2+\kappa^T\vectr X(T)\bigg\vert \vectr X(t)\right], \text{   s.t.}\\
&\vectr X(\tau+1) = F_{\sigma(\tau)}\vectr X(\tau)+\vectr W(\tau),\quad
y_{abs}(\tau) = H\vectr X(\tau),\\
& \sigma(\tau)\in \mathbb Z_{2\mathsf l},\quad\forall \tau\in\{t,t+1,\cdots,T-1\}.
\end{align*}

The cost function comprises a running cost for tracking and a terminal cost. 
The terminal cost is assumed to be a linear combination (with weighting vector $\kappa$) of the model states at final time $T$,
and practically accounts for possible penalty weights over the number of TCL within the temperature intervals. 
The expectation is taken over the underlying probability space for the trajectories of the process over the time interval $[t+1,T]$. 

The dynamics are nonlinear due to the switching nature of the control signal. 
The average evolution of the states and output of the system can be expressed by the following deterministic difference equation:  
\begin{align*}
\mathbb E[\vectr X(\tau+1)] = F_{\sigma(\tau)}\mathbb E[\vectr X(\tau)],\quad
\mathbb E[y_{abs}(\tau)] = H\mathbb E[\vectr X(\tau)].
\end{align*}
The associated state transition matrix $\Phi_{\sigma}(T,t) = F_{\sigma(T-1)}F_{\sigma(T-2)}\cdots F_{\sigma(t)}$
provides a closed form for the average evolution over the interval $[t,T]$:
\begin{align*}
\mathbb E[\vectr X(T)] = \Phi_{\sigma}(T,t)\mathbb E[\vectr X(t)],\quad
\mathbb E[y_{abs}(T)] = H\Phi_{\sigma}(T,t)\mathbb E[\vectr X(t)].
\end{align*}
%
Thanks to the linearly state-dependent covariance matrix, 
we can establish the following result. 
\begin{theorem}
\label{thm:cost_function}
The cost function of the SMPC problem can be computed explicitly as
\begin{align}
\label{eq:cost_function}
J_t = \sum_{\tau=t+1}^{T}\left[H\Phi_{\sigma}(\tau,t)\vectr X(t)-y_{des}(\tau)\right]^2 + \Psi_{\sigma}(T,t)\vectr X(t),
\end{align}
where the matrix
\begin{align*}
\Psi_{\sigma}(T,t) = \kappa^T\Phi_{\sigma}(T,t)+\frac{1}{n_p}\sum_{\tau_1=t}^{T}\sum_{\tau_2=\tau_1+1}^{T}\mathscr R(H\Phi_{\sigma}(\tau_2,\tau_1+1),F_{\sigma(\tau_1)})\Phi_{\sigma}(\tau_1,t), 
\end{align*}
and where 
$\mathscr R:\mathbb R^{1\times 2n}\times\mathbb R^{2n\times 2n}\rightarrow \mathbb R^{1\times 2n}$
is a matrix-valued map with
$\mathscr R(C,D) = C^{\circ 2}D - (C D)^{\circ 2}$,
where the operator $\circ 2$ is the \emph{Hadamard} square of the matrix (element-wise square).
\end{theorem}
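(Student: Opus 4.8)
The plan is to split the cost $J_t$ into a deterministic \emph{mean-tracking} part and a \emph{variance} part via the bias--variance identity, and then match each part with the two summands of \eqref{eq:cost_function}.

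\textbf{Step 1: propagation of the mean.} Iterating $\vectr X(\tau+1)=F_{\sigma(\tau)}\vectr X(\tau)+\vectr W(\tau)$ from time $t$ gives
\begin{align*}
\vectr X(\tau)=\Phi_{\sigma}(\tau,t)\vectr X(t)+\sum_{\tau_1=t}^{\tau-1}\Phi_{\sigma}(\tau,\tau_1+1)\vectr W(\tau_1).
\end{align*}
Since each $\vectr W(\tau_1)$ has conditional mean zero (Theorem \ref{thm:posit_def}), the tower property yields $\mathbb E[\vectr X(\tau)|\vectr X(t)]=\Phi_{\sigma}(\tau,t)\vectr X(t)$, $\mathbb E[y_{abs}(\tau)|\vectr X(t)]=H\Phi_{\sigma}(\tau,t)\vectr X(t)$, and $\mathbb E[\kappa^T\vectr X(T)|\vectr X(t)]=\kappa^T\Phi_{\sigma}(T,t)\vectr X(t)$, the last being the first summand of $\Psi_{\sigma}(T,t)$. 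As $y_{des}(\tau)$ is deterministic,
\begin{align*}
\mathbb E\!\left[(y_{abs}(\tau)-y_{des}(\tau))^2\,\big|\,\vectr X(t)\right]=\left(H\Phi_{\sigma}(\tau,t)\vectr X(t)-y_{des}(\tau)\right)^2+var\!\left(y_{abs}(\tau)|\vectr X(t)\right),
\end{align*}
so summing over $\tau$ already produces the first term of \eqref{eq:cost_function}, and it remains to evaluate the accumulated variances.

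\textbf{Step 2: the variance term.} Because the law of $\vectr W(\tau_1)$ depends only on $\vectr X(\tau_1)$ and is conditionally zero-mean, $\{\vectr W(\tau_1)\}$ is a martingale-difference sequence: conditioning on the filtration up to $\max(\tau_1,\tau_2)$ shows $\mathbb E[\vectr W(\tau_1)\vectr W(\tau_2)^T|\vectr X(t)]=0$ for $\tau_1\ne\tau_2$, while $\mathbb E[\vectr W(\tau_1)\vectr W(\tau_1)^T|\vectr X(t)]=\mathbb E[\varSigma(\vectr X(\tau_1))|\vectr X(t)]=\varSigma(\Phi_{\sigma}(\tau_1,t)\vectr X(t))$, using linearity of $\varSigma(\cdot)$ in its argument (see \eqref{eq:cov_mat}). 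Hence, with $C_{\tau,\tau_1}:=H\Phi_{\sigma}(\tau,\tau_1+1)$,
\begin{align*}
var\!\left(y_{abs}(\tau)|\vectr X(t)\right)=\sum_{\tau_1=t}^{\tau-1}C_{\tau,\tau_1}\,\varSigma\!\left(\Phi_{\sigma}(\tau_1,t)\vectr X(t)\right)C_{\tau,\tau_1}^{T}.
\end{align*}
Writing $\varSigma(\vectr X)=\frac{1}{n_p}\sum_{r}X_r\big(\mathrm{diag}(\vectr p_r)-\vectr p_r\vectr p_r^{T}\big)$, with $\vectr p_r$ the $r$-th column of the transition matrix $F_{\sigma(\tau_1)}$ active at that step, a direct expansion gives, for any row vector $C$,
\begin{align*}
C\big(\mathrm{diag}(\vectr p_r)-\vectr p_r\vectr p_r^{T}\big)C^{T}=\sum_i C_i^2P_{ri}-\Big(\sum_i C_iP_{ri}\Big)^2=\big(C^{\circ 2}F_{\sigma(\tau_1)}\big)_r-\big((CF_{\sigma(\tau_1)})^{\circ 2}\big)_r=\mathscr R(C,F_{\sigma(\tau_1)})_r,
\end{align*}
so $C\varSigma(\vectr X)C^{T}=\tfrac{1}{n_p}\mathscr R(C,F_{\sigma(\tau_1)})\vectr X$, which is linear in $\vectr X$.

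\textbf{Step 3: assembly and reindexing.} Substituting the last identity and summing over $\tau$,
\begin{align*}
\sum_{\tau=t+1}^{T}var\!\left(y_{abs}(\tau)|\vectr X(t)\right)=\frac{1}{n_p}\sum_{\tau=t+1}^{T}\sum_{\tau_1=t}^{\tau-1}\mathscr R\!\big(H\Phi_{\sigma}(\tau,\tau_1+1),F_{\sigma(\tau_1)}\big)\Phi_{\sigma}(\tau_1,t)\,\vectr X(t).
\end{align*}
Swapping the order of summation (for fixed $\tau_1$, $\tau$ runs over $\tau_1+1,\dots,T$, and the outer index can be written up to $T$ since the $\tau_1=T$ slice is vacuous) and renaming $\tau$ as $\tau_2$ reproduces exactly the double sum in $\Psi_{\sigma}(T,t)$; adding the mean-tracking term of Step 1 and the terminal term $\kappa^T\Phi_{\sigma}(T,t)\vectr X(t)$ yields \eqref{eq:cost_function}. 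I expect the main obstacle to be Step 2: setting up the filtration carefully so that the tower property legitimately replaces $\mathbb E[\varSigma(\vectr X(\tau_1))|\vectr X(t)]$ by $\varSigma(\Phi_{\sigma}(\tau_1,t)\vectr X(t))$ and annihilates the cross-covariances, together with pinning down the Hadamard-square identity and the summation swap; the rest is routine linear algebra.
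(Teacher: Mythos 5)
Your proof is correct, and it reaches \eqref{eq:cost_function} by a genuinely different route from the paper. The paper's proof is a backward dynamic-programming induction: it defines $J_{\bar\tau}$ through the one-step recursion $J_{\bar\tau}=\mathbb E\left[(y_{abs}(\bar\tau+1)-y_{des}(\bar\tau+1))^2+J_{\bar\tau+1}\mid \vectr X(\bar\tau)\right]$ with $J_T=\kappa^T\vectr X(T)$, shows inductively that $J_{\bar\tau}$ retains the claimed quadratic-plus-linear form, and extracts a recursion for $\Psi_{\sigma}(T,\bar\tau)$ that it then solves in closed form. You instead unroll the dynamics forward, apply the bias--variance identity to each tracking term, and evaluate the accumulated output variance directly, using the martingale-difference property of $\vectr W$ to annihilate the cross-covariances and the linearity of $\varSigma(\cdot)$ in its argument to replace $\mathbb E[\varSigma(\vectr X(\tau_1))\mid\vectr X(t)]$ by $\varSigma(\Phi_{\sigma}(\tau_1,t)\vectr X(t))$. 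Both arguments hinge on the same algebraic identity $C\varSigma(\vectr X)C^T=\frac{1}{n_p}\mathscr R(C,F)\vectr X$, which the paper isolates as Lemma \ref{lem:quadratic} and which you rederive correctly in your Step 2 (your identification of $\vectr p_r$ with the $r$-th column of $F_{\sigma(\tau_1)}=P^T$, i.e.\ the $r$-th row of $P$, matches the paper's $\varPhi_r$). Your forward computation makes the structure of the answer transparent --- the $(\tau_1,\tau_2)$ term of the double sum is exactly the contribution of the noise injected at time $\tau_1$ to the output variance at time $\tau_2$ --- and it avoids having to guess the functional form of $\Psi_{\sigma}$ before verifying it; the price is that you must justify the orthogonality of the noise increments and the interchange of $\mathbb E$ with $\varSigma$, which the backward recursion sidesteps by only ever taking one-step conditional expectations. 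Your handling of both points, and of the final summation swap (with the $\tau_1=T$ slice being vacuous), is sound.
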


\smallskip

The obtained explicit cost function is the sum of
a quadratic cost for the deterministic average evolution of the system state 
and of a linear cost related to the covariance of the process noise.  
\begin{example}
The SMPC formulation can accommodate problems where the population participates in the energy and ancillary services market to minimize its own energy costs.
In the real-time energy market the Locational Marginal Pricing algorithms result in the profile of energy price for time intervals of 5-minutes \cite{WMBG12}. 
Given that profile,
the population can save money by minimizing the total cost of its energy usage within the given time frame,
i.e. consuming less energy when the price is high and more energy when the price is low, under some constraints,
in the next 24 hours.
Suppose the final time $T$ is selected such that $T = 24/h$, where $h$ is the length of the sampling time ($5$ minutes),
and let the sequence $\{\lambda_\tau,\tau=t+1,t+2,\cdots,T\}$ be the profile of the energy price provided by the energy market.
The total energy consumption of the population is then $\sum_{\tau=t+1}^{T}\lambda_\tau y_{abs}(\tau)h$.
The following optimization problem can be solved, 
given the model dynamics, 
in order to minimize the expected value of the energy consumption as follows:  
\begin{align*}
\min_{\sigma(\tau)} \mathbb E\left[\sum_{\tau=t+1}^{T}\lambda_\tau y_{abs}(\tau)h\bigg\vert \vectr X(t)\right]
 = \min_{\sigma(\tau)}\sum_{\tau=t+1}^{T}\lambda_\tau hH\Phi_{\sigma}(\tau,t)\vectr X(t).
\end{align*}
\end{example}
\begin{remark}
Notice that for both formulations of the power tracking problem, the reference signal $y_{des}(\cdot)$ is assumed to be given.   
This can be in practice obtained when the TCL population is controlled by the energy market:  
one can think that a power utility company observes the power demand of the network and generates a predicted reference signal for the population, 
in order to obtain a total flat power consumption of the network together with the population.
\end{remark}
%

%

\section{Numerical Case Study and Benchmarks}
\label{sec:benchmarks}
In this section we compare the performance of our abstraction with that developed in \cite{HetTCL}, 
which as discussed obtains an aggregated model with dynamics that are deterministic, 
and in fact shown to be a special (limiting) case of the model in this work (cf. Remark \ref{rem:conn} and Theorem \ref{thm:shif_hevi}). 
We further elucidate the extension to the case of heterogeneous populations (with a comparison of the two proposed approaches), 
and the application of model-order reduction to the aggregated model. 
Finally, we synthesize global controls over the temperature set-point to perform tracking of a the total power consumption of the population. 

For all simulations we consider a population size of $n_p = 500$ TCL,    
however recall that our abstraction is proved to work as desired for any value $n_p$ of the population size. 
We have run $50$ Monte Carlo simulations for the TCL population based on the dynamics in \eqref{eq:tcl_dyn}-\eqref{eq:switch} aggregated explicitly,  
and computed the average total power consumption. 

\subsection{Aggregation of an Homogeneous Population of TCL}
Each TCL is characterized by parameters that take value in Table \ref{tab:parametrs}.
All TCL are initialized in the OFF mode ($m(0) = 0$) and with a temperature at the set-point ($\theta(0) = \theta_s$).
Unlike the deterministic dynamics considered in \cite{HetTCL},
the model in \eqref{eq:tcl_dyn} includes the process noise: 
we select initially a small value for the standard deviation as $\sigma = 0.001\sqrt{h} = 0.0032$.
\begin{table}
\centering
\caption{Parameters for the case study of a homogeneous population of TCL, as from \cite{HomTCL}}
\scalebox{1}{ %
\begin{tabular}{| l l l |}
\hline
Parameter & Interpretation & Value\\
\hline
$\theta_s$ & temperature set-point & $20\,[^\circ C]$\\
$\delta$ & dead-band width & $0.5\,[^\circ C]$ \\
$\theta_a$ & ambient temperature & $32\,[^\circ C]$ \\
$R$ & thermal resistance & $2\,[^\circ C/kW]$ \\
$C$ & thermal capacitance & $10\,[kWh/^\circ C]$\\
$P_{rate}$ & power & $14\,[kW]$\\
$\eta$ & coefficient of performance & $2.5$\\
$h$& time step & $10\,[sec]$\\
\hline
\end{tabular}
}
\label{tab:parametrs}
\end{table}

The abstraction in \cite{HetTCL} is obtained by partitioning the dead-band exclusively 
and by ``moving the probability mass'' outside of this interval to the next bin in the opposite mode. 
Recall that in the new approach put forward in this work we need to provide a partition not only for the dead-band but for the allowed range of temperatures (cf. Fig. \ref{fig:part}).     
Sample trajectories of the TCL population are presented in Figure \ref{fig:hom_sim_popul}: 
the second plot, obtained for a larger value of noise level, 
confirms that we need to partition the whole temperature range, 
rather than exclusively the dead-band. 
\begin{figure}
\centering
\subfigure{
\includegraphics[scale=0.5]{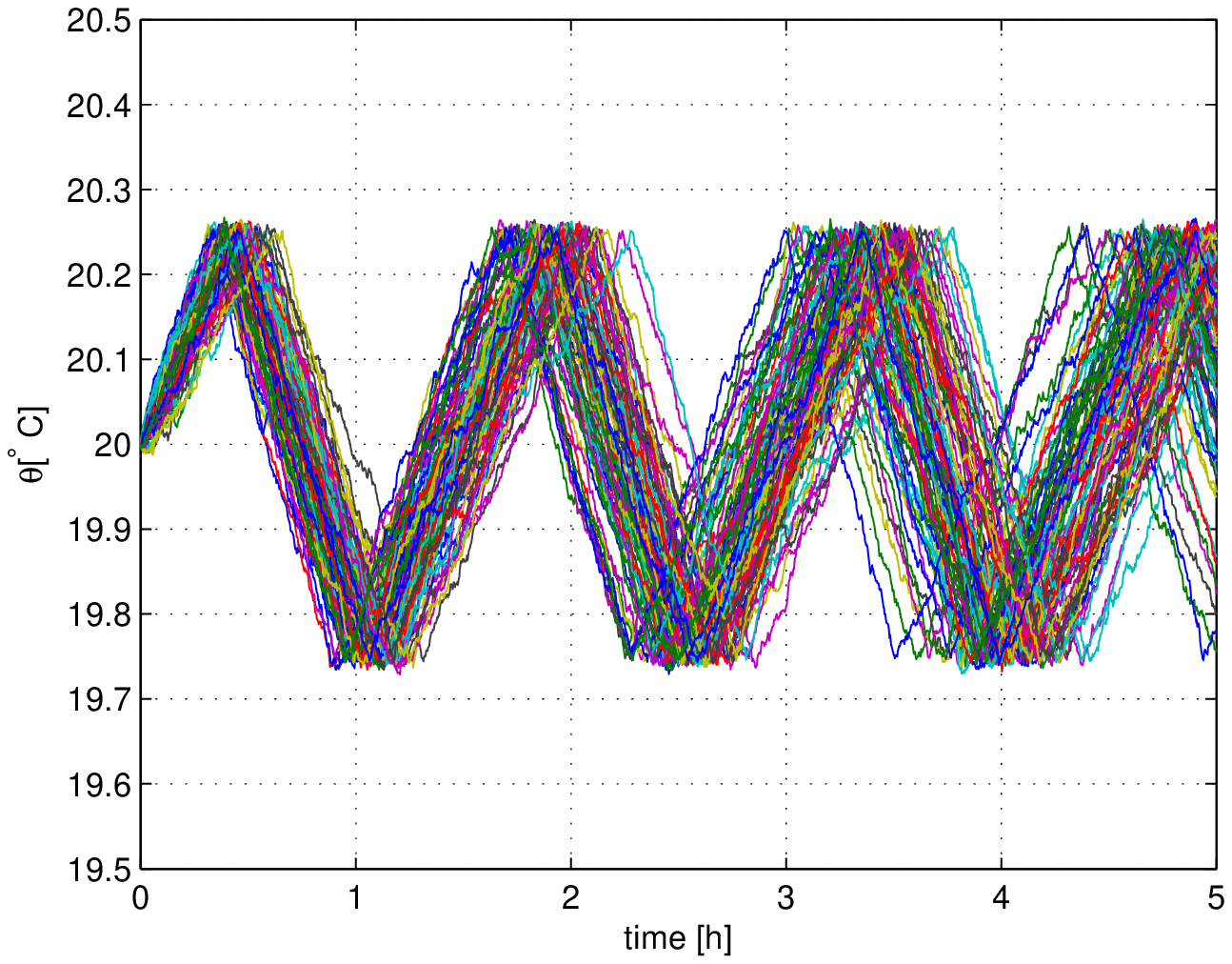}
}
\subfigure{
\includegraphics[scale=0.5]{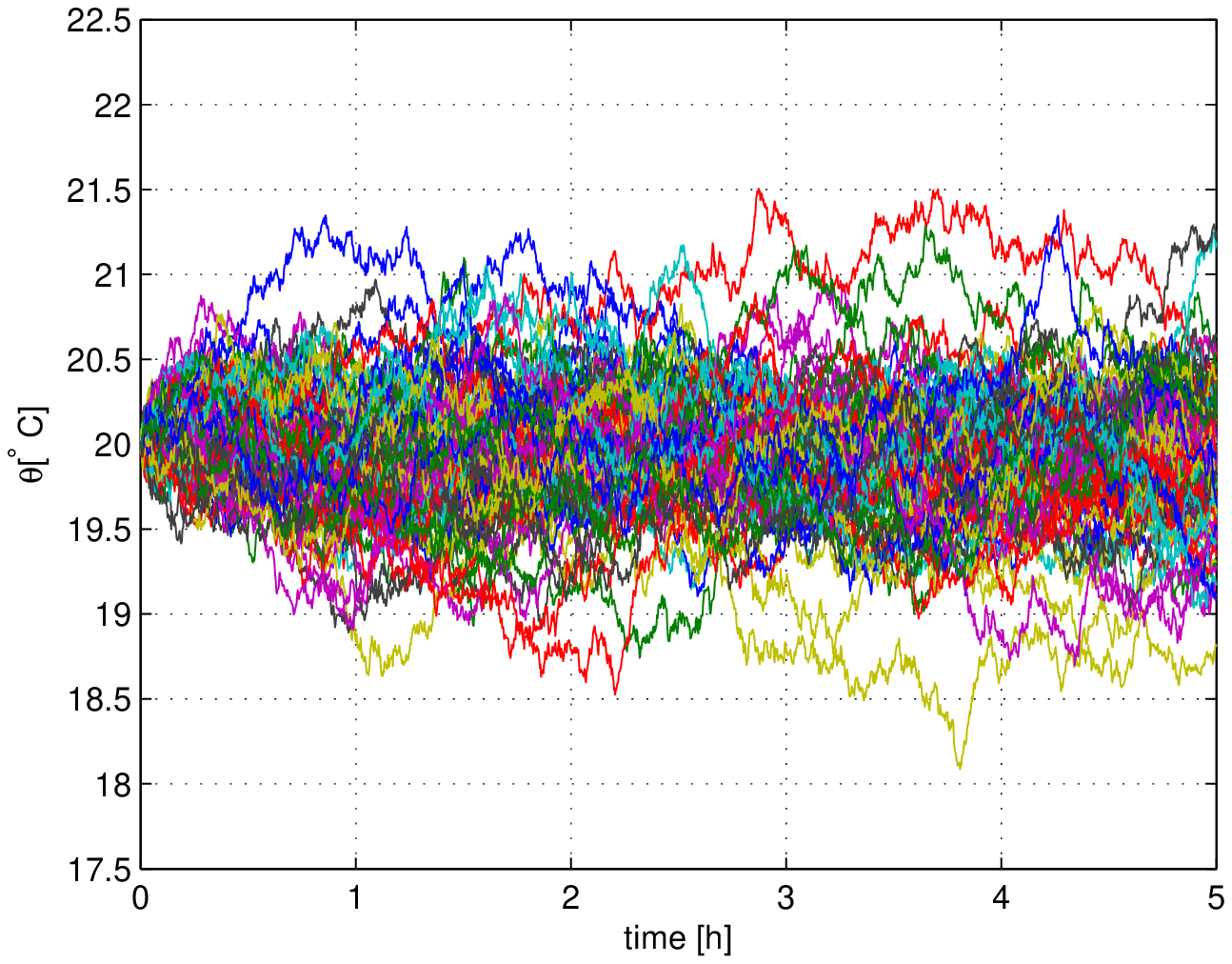}
}
\caption{Sample trajectories of the TCL population for two different values of the standard deviation of the process noise ($\sigma = 0.0032$ and $\sigma = 0.032$).}
\label{fig:hom_sim_popul}
\end{figure}

The abstraction in \cite{HetTCL} depends on a parameter $n_d$, 
denoting the number of bins:
we select $n_d = 70$, which leads to a total of $140$ states. 
The selection of $n_d$ has followed empirical tuning targeted toward optimal performance -- however, 
in general there seems to be no clear correspondence between the choice of $n_d$ and the overall precision of the abstraction procedure in \cite{HetTCL}.  

For the formal abstraction proposed in this work, 
we construct the partition as in \eqref{eq:partition} with parameters $\mathsf l=70, \mathsf m=350$, which leads to $2n = 1404$ abstract states.  
Here notice that the presence of a small standard deviation $\sigma$ for the process noise (not included in the dynamics of  \cite{HetTCL}) requires a smaller partition size to finely resolve the probability of jumps between adjacent bins. Let us emphasize again that an increase in $n_d$ for the method in \cite{HetTCL} does not lead to an improvement of the outcomes. 

The results obtained for a small noise level $\sigma = 0.0032$ are presented in Figure \ref{fig:determ_homog}. 
The aggregate power consumption has an oscillatory decay since all thermostats are started in a single state bin (they share the same initial condition).
This outcome matches that presented in \cite{HetTCL}: 
the deterministic abstraction in \cite{HetTCL} produces precise results for the first few (2-3) oscillations, 
after which the disagreement over the aggregate power between the models increases.

\begin{figure}
\centering
\subfigure{
\includegraphics[scale=0.5]{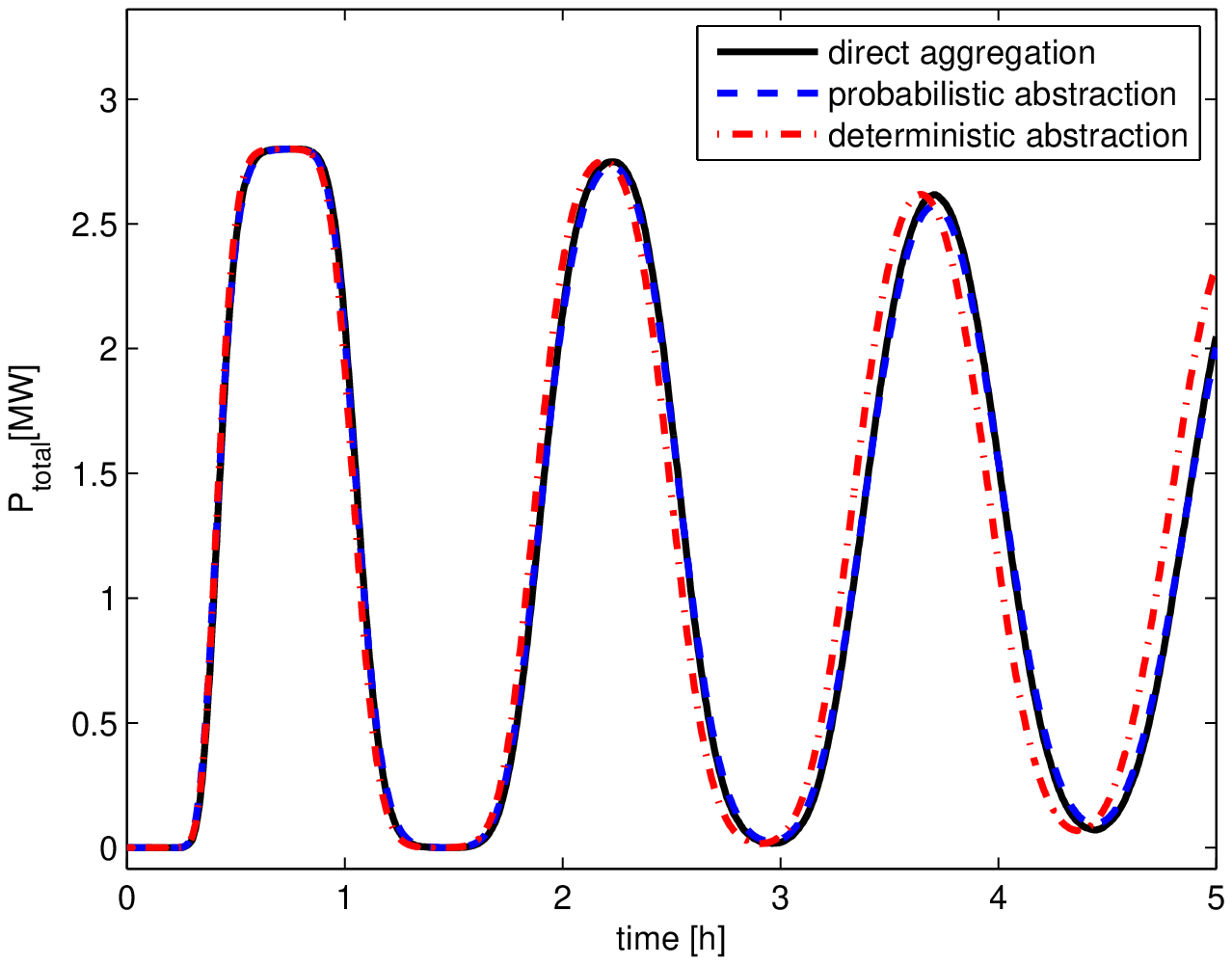}
\label{fig:subfig1}
}
\subfigure{
\includegraphics[scale=0.5]{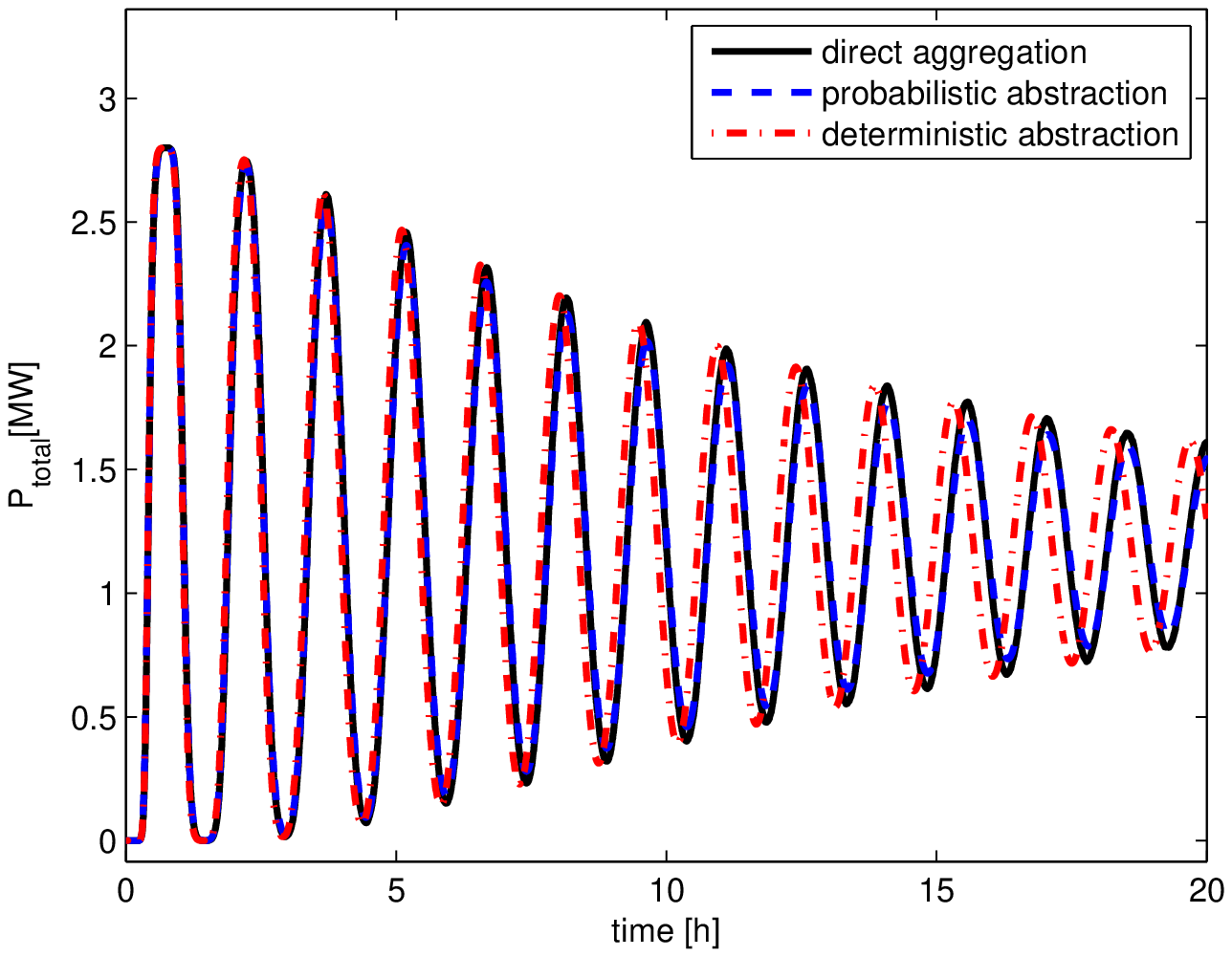}
\label{fig:subfig2}
}
\caption{Homogeneous population of TCL. Comparison of the deterministic abstraction from \cite{HetTCL} with the formal stochastic abstraction, 
for a small process noise $\sigma = 0.0032$.}
\label{fig:determ_homog}
\end{figure}

Let us now select a standard deviation for the process noise to take a larger value $\sigma = 0.01\sqrt{h} = 0.032$, 
all other parameters being the same as before.    
We now employ $n_d = 5$ (by empirical optimal tuning), and $\mathsf l = 7$, and $\mathsf m = 35$,
which leads to $10$ and $144$ abstract states, respectively. 
Figure \ref{fig:stoch_homog} presents the results of the experiment.
It is clear that the model abstraction in \cite{HetTCL} is not able to generate a good trajectory for the aggregate power,
whereas the output of the formal abstraction proposed in this work nicely matches that of the average aggregated power consumption.  
Let us again remark that increasing number of bins $n_d$ does not seem to improve the performance of the deterministic abstraction in \cite{HetTCL}, 
but rather renders the oscillations more evident.
On the contrary, our approach allows a quantification of an explicit bound on the error made: 
for instance, the error on the normalized power consumption with parameters $N = 2$ and $\mathsf l = 70$ is equal to $0.226$.
As a final remark, 
let us emphasize that the outputs of both the abstract models converge to steady-state values that may be slightly different from those obtained as the average of the Monte Carlo simulations for the model aggregated directly.    
%
%
This discrepancy is due to the intrinsic errors introduced by both the abstraction procedures, 
which approximate a concrete continuous-space model (discontinuous stochastic difference equation) with discrete-space abstractions (finite-state Markov chains).  
However, whereas the abstraction in \cite{HetTCL} does not offer an explicit quantification of the error, 
the formal abstraction proposed in this work does, 
and further allows the tuning (decrease) of such error bound, 
by choice of a larger cardinality for the partitions set.  
However as a tradeoff, 
recall that increasing the number of partitions demands handling a Markov chain abstraction with a larger size. 

\begin{figure}
\centering
\includegraphics[scale=0.5]{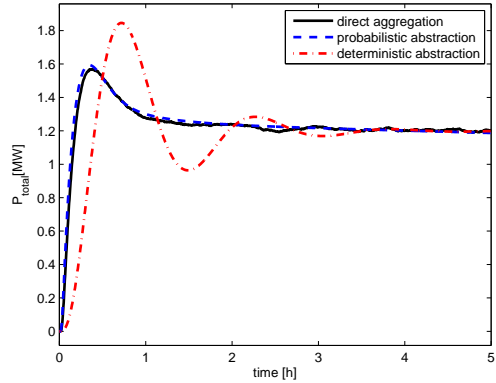}
\caption{Homogeneous population of TCL. Comparison of the deterministic abstraction from \cite{HetTCL} with the formal stochastic abstraction, 
for a larger process noise $\sigma = 0.032$.}
\label{fig:stoch_homog}
\end{figure}

\subsection{Aggregation of an Heterogeneous Population of TCL}
Let us assume that heterogeneity enters the TCL population over the thermal capacitance $C$ of each single TCL, 
which is taken to be $C \sim \mathcal U([8,12])$, 
that is described by a uniform distribution over a compact interval. 

The Monte Carlo simulations are performed with a noise level $\sigma = 0.032$,
and we have selected discretization parameters $n_d = 6$ (deterministic abstraction), 
and $\mathsf l = 10, \mathsf m = 50$ (probabilistic abstraction via averaging).
Figure \ref{fig:stoch_heter} (left) compares the results of the two abstraction methods: 
the plots are quite similar to those for the homogeneous case, 
since the allowed range for the parameter is small. 
However, let us now increase the level of heterogeneity by enlarging the domain of definition of the thermal capacitance, 
so that $C \sim \mathcal U([2,18])$.
The (empirically) best possible deterministic abstraction is obtained by selecting $n_d = 7$, 
whereas we again select $\mathsf l = 10$, $\mathsf m = 50$ for the probabilistic abstraction based on averaging.
The outcomes are presented in Figure \ref{fig:stoch_heter} (right).

\begin{figure}
\centering
\subfigure{
\includegraphics[scale=0.5]{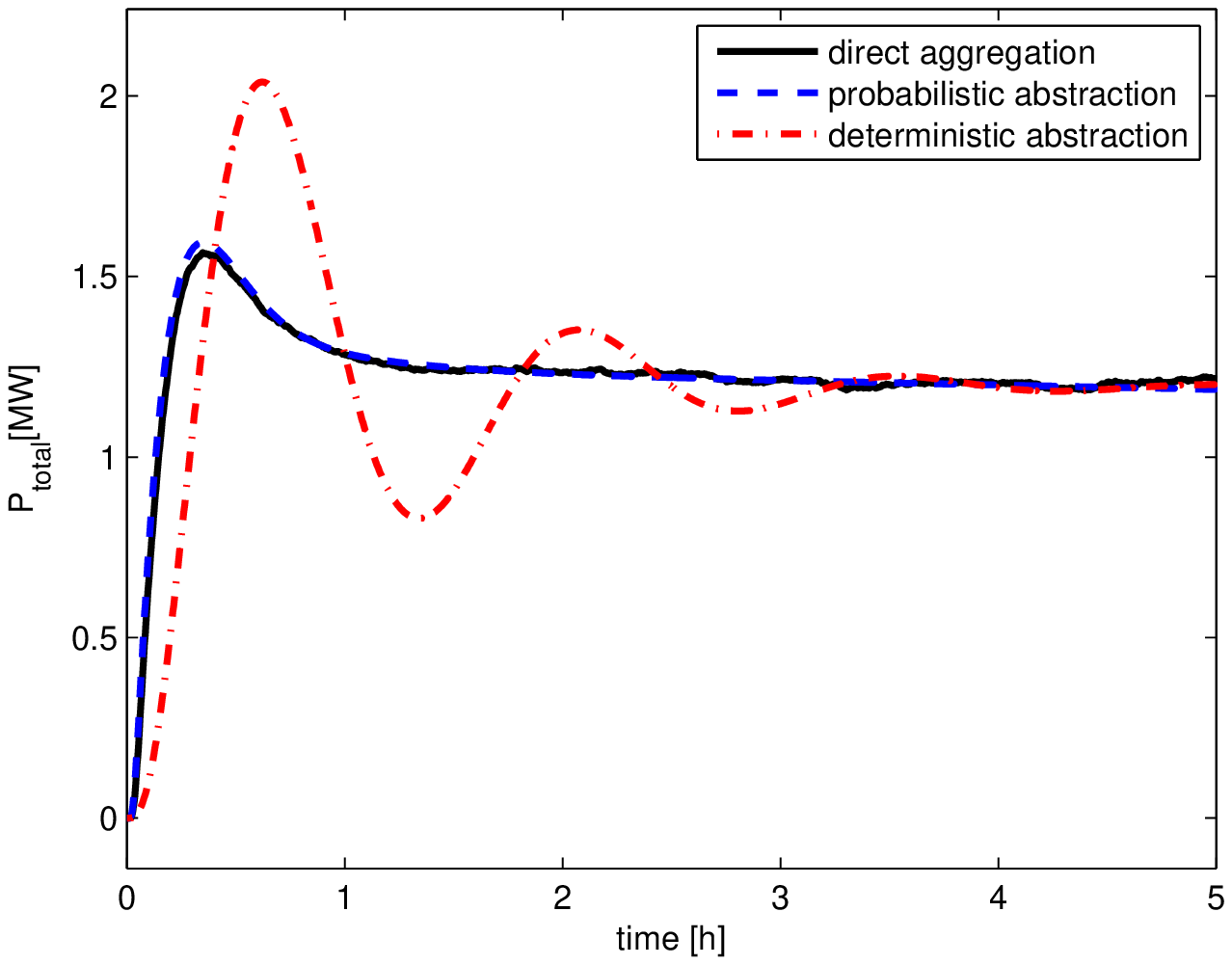}
}
\subfigure{
\includegraphics[scale=0.5]{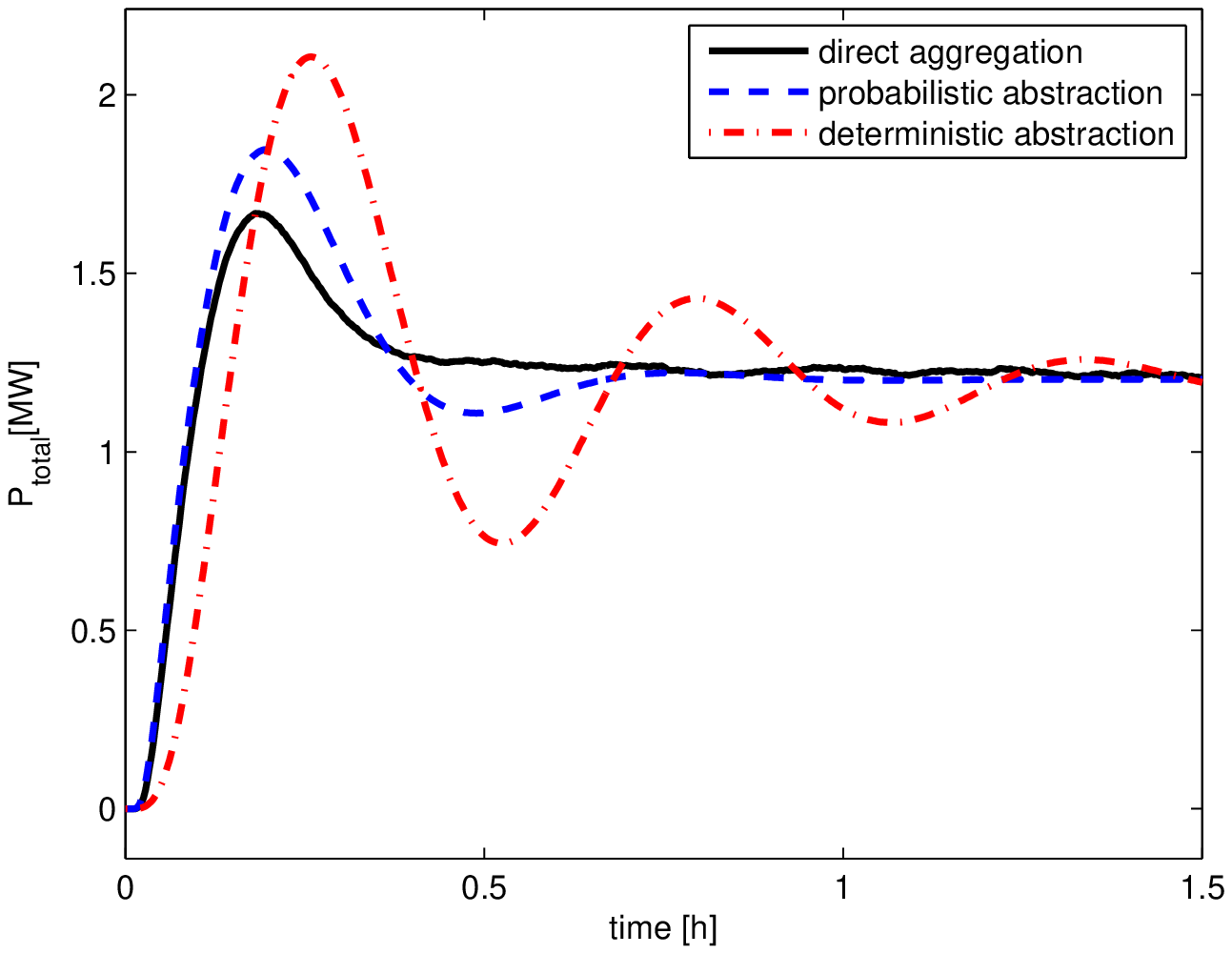}
}
\caption{Heterogeneous population of TCL. Comparison of the deterministic abstraction from \cite{HetTCL} with the formal probabilistic 
abstraction based on averaging,
for two different ranges of the thermal capacitance: $[8,12]$ (left) and $[2,18]$ (right).}
\label{fig:stoch_heter}
\end{figure}

Notice that the number of states in the linear model obtained by the probabilistic abstraction is equal to $204$, which is relatively large.
Computing the Hankel singular values of the model, we can reduce its order down to $6$ variables without corrupting its output performance.
The power consumptions provided by both the abstracted LTI model and the reduced-order models are plotted in Figure \ref{fig:model_reduc} for two different ranges of heterogeneity.
Despite a mismatch at the start of the response of the two models, 
the simulation indicates that the reduced-order model can mimic the behavior of the original one. 
The mismatch at the start of the responses is due to the initial states of the two models, which is non-essential because we use the models to estimate the states in a closed-loop configuration.

%

\begin{figure}
\centering
\subfigure{
\includegraphics[scale=0.5]{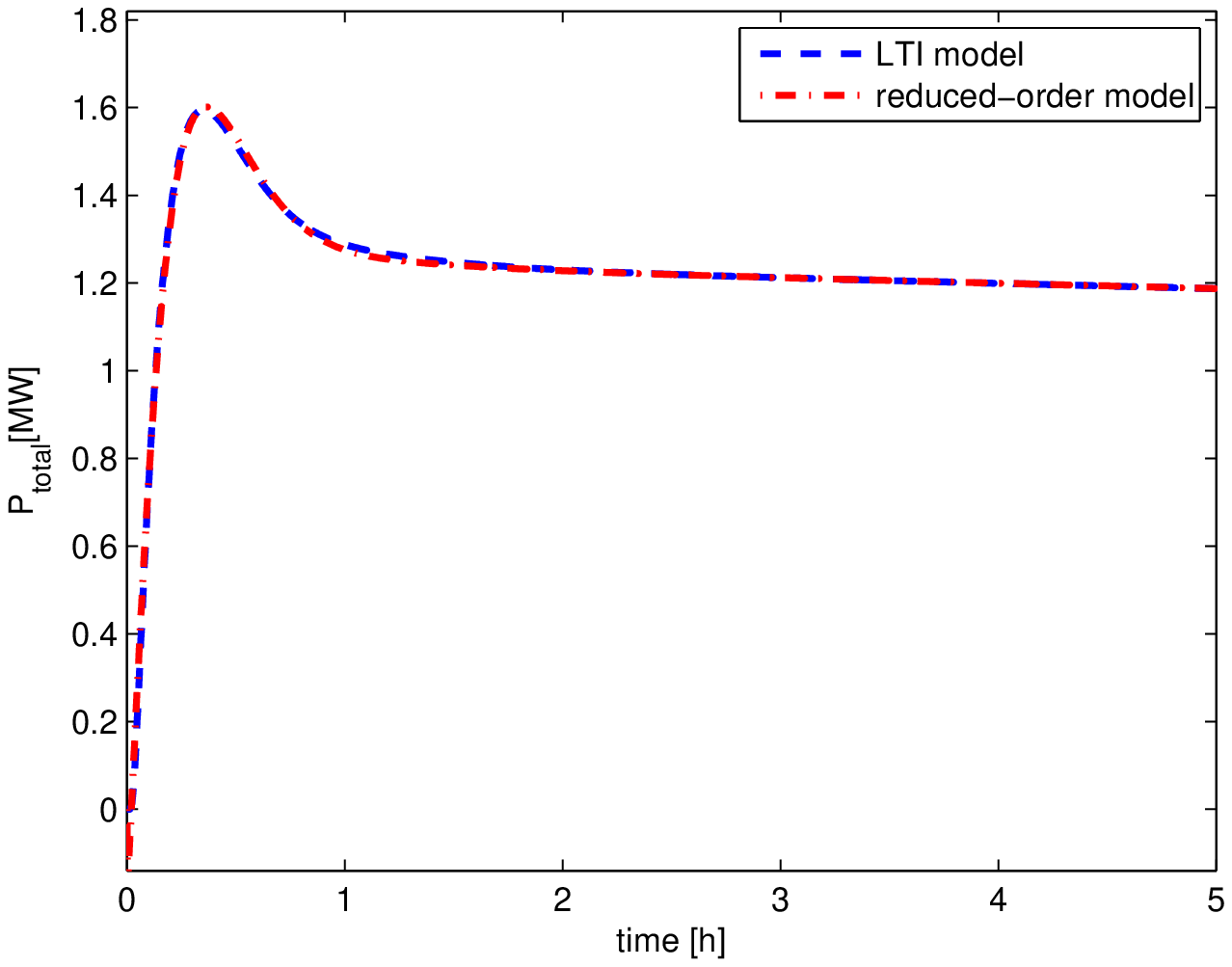}
}
\subfigure{
\includegraphics[scale=0.5]{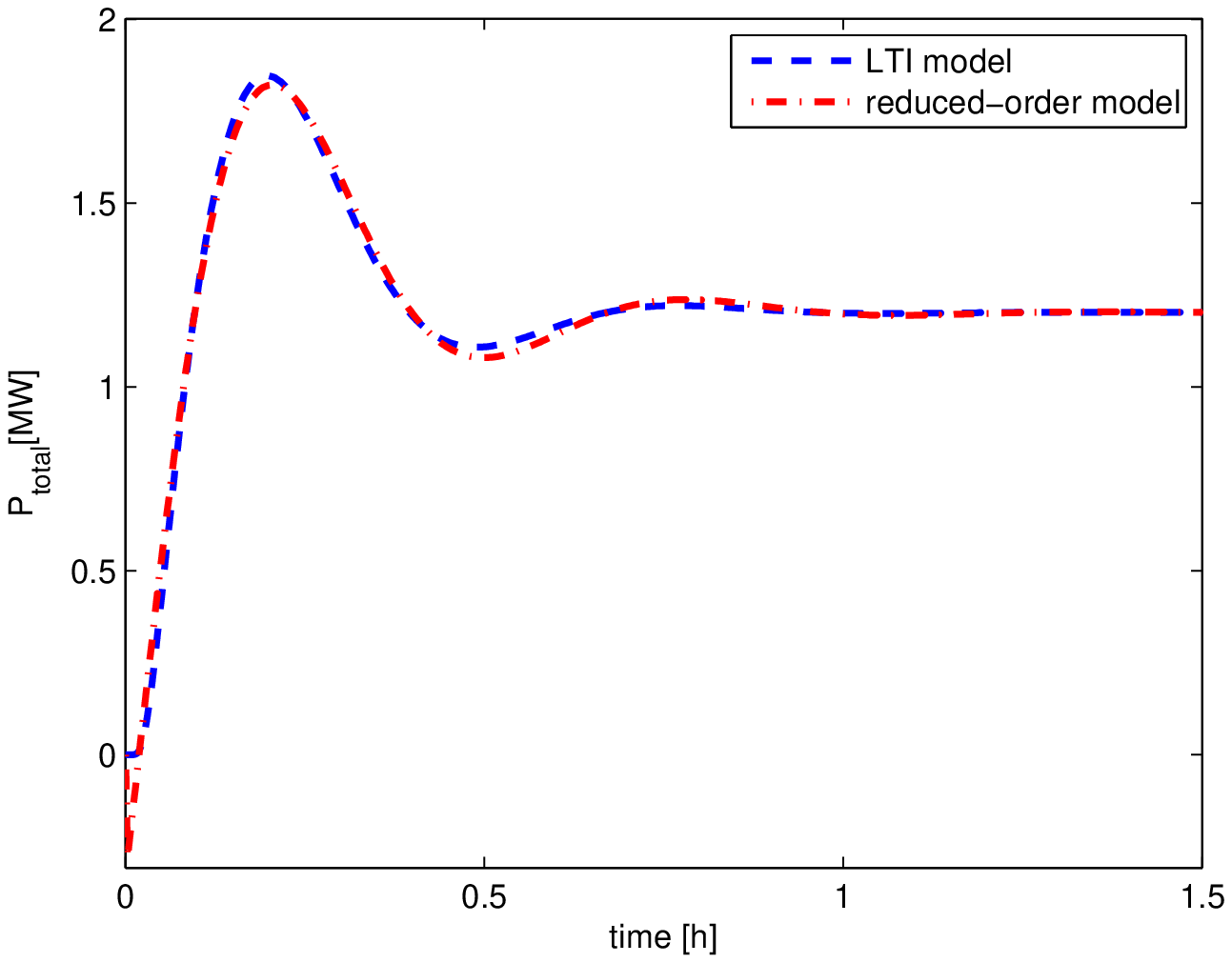}
}
\caption{
Heterogeneous population of TCL: 
comparison of the trajectories of the LTI model (probabilistic abstraction obtained via averaging)
and of the reduced-order system for two different ranges of thermal capacitance: $[8,12]$ (left) and $[2,18]$ (right).}
\label{fig:model_reduc}
\end{figure}

Figure \ref{fig:stoch_heter_clust} compares the performance of the two abstraction approaches described in Section \ref{sec:Het_avg} (via averaging) and  in Section  \ref{sec:Het:cluster} (via clustering). 
Two ranges of thermal capacitance ($[8,12]$ and $[2,18]$ respectively) characterize the heterogeneity in the population. 
For the approach of Section \ref{sec:Het:cluster} the population is clustered into $5$ and $20$ clusters, respectively.
Figure \ref{fig:stoch_heter_clust} indicates that the performance of clustering approach surpasses that of the averaging approach:
while the latter can be suitable for small heterogeneity, the former is essential for large heterogeneity in the population.  

\begin{figure}
\centering
\subfigure{
\includegraphics[scale=0.5]{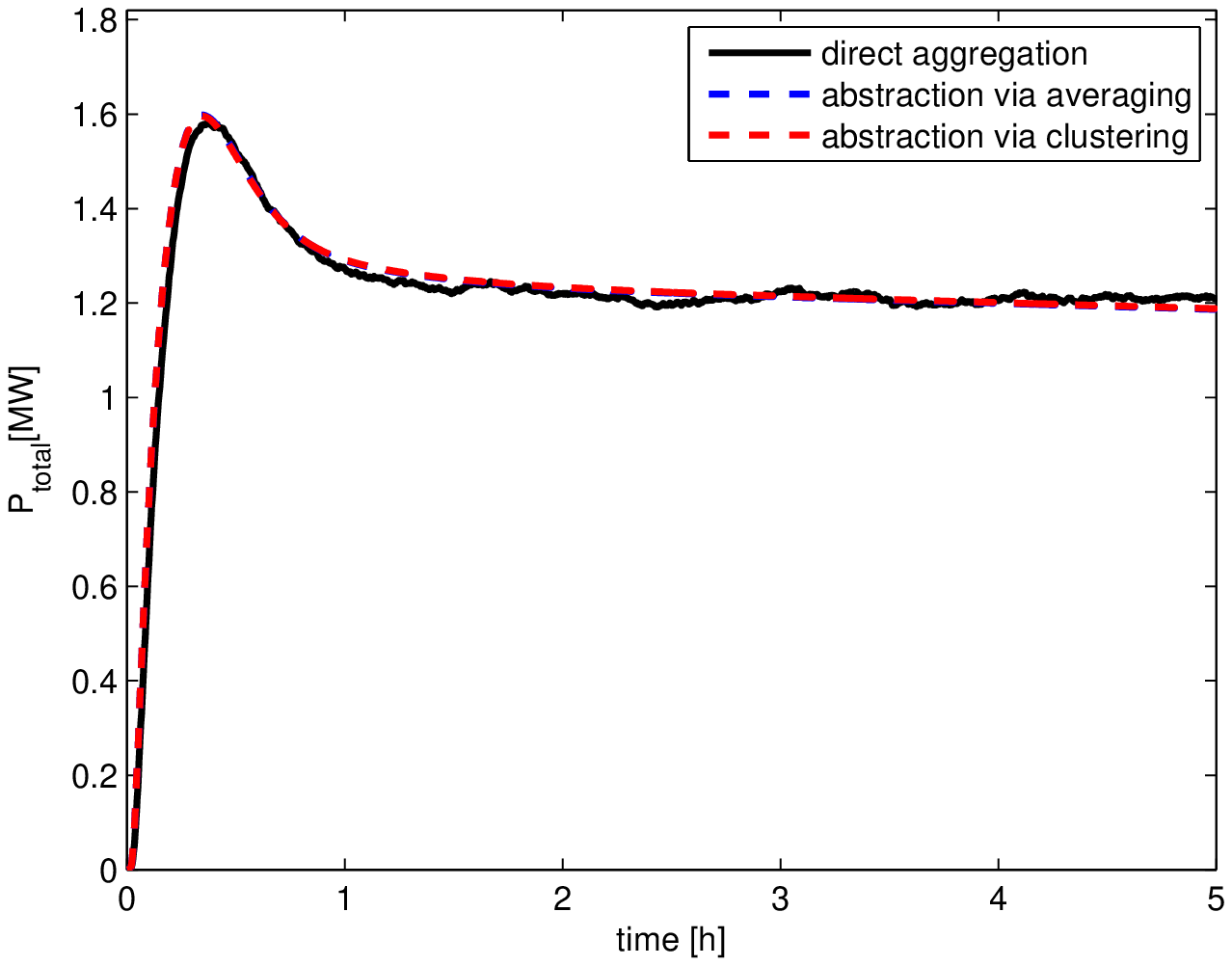}
}
\subfigure{
\includegraphics[scale=0.5]{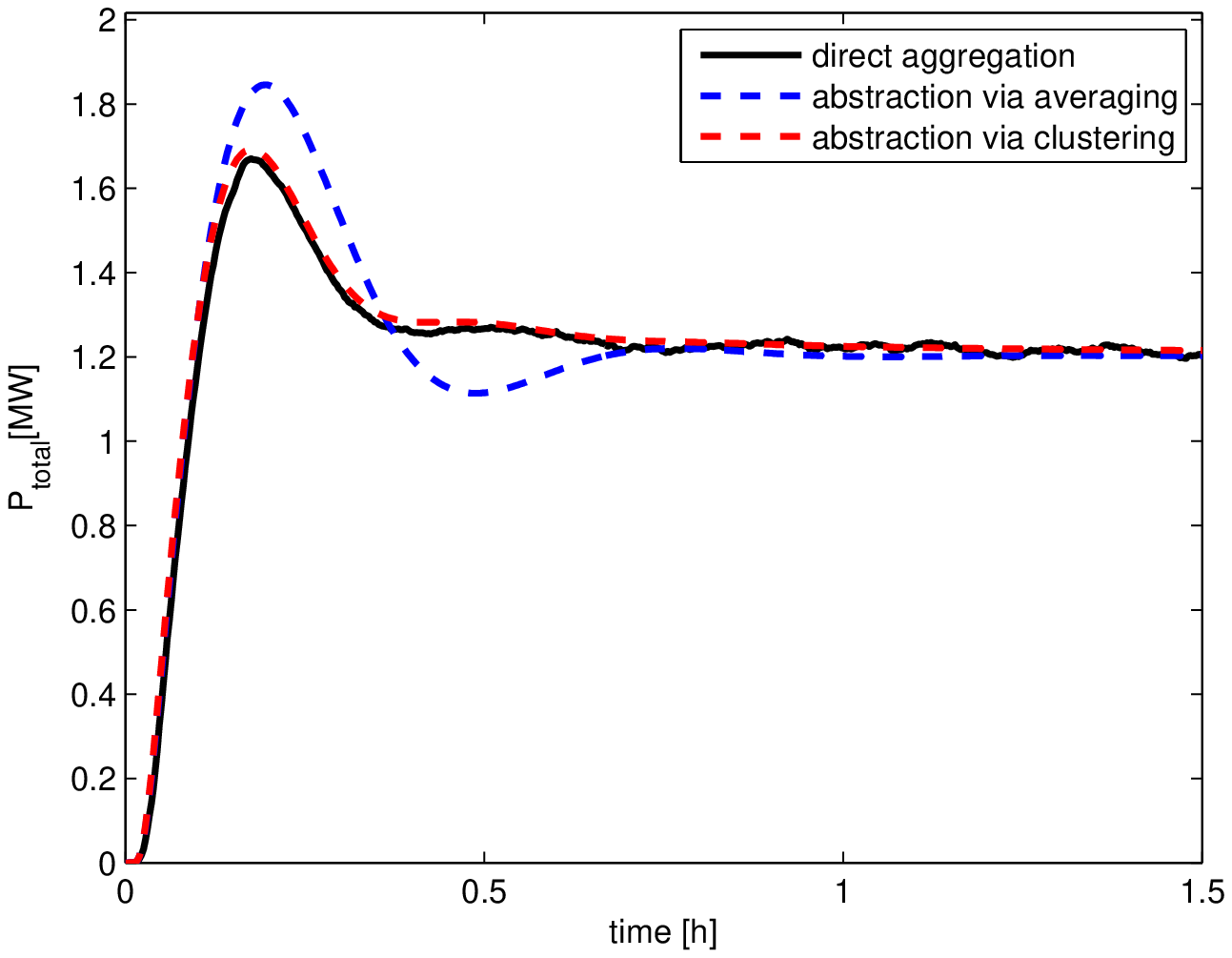}
}
\caption{
Heterogeneous population of TCL. Comparison of formal stochastic abstraction based on averaging (Section \ref{sec:Het_avg}) and clustering (Section \ref{sec:Het:cluster}),
for two different ranges of thermal capacitance: $[8,12]$ (left) and $[2,18]$ (right). The number of clusters are $5$ and $20$, respectively. 
}
\label{fig:stoch_heter_clust}
\end{figure}

\subsection{Abstraction and Control of a Population of TCL}

With focus on the abstraction proposed in this work for a homogeneous population (again of $n_p=500$ TCL),   
the one-step output prediction and regulation scheme of Section \ref{sec:est_Regul} is applied 
with the objective of tracking a randomly generated piece-wise constant reference signal. 
We have used a discretization parameters $\mathsf l = 8$, $\mathsf m=40$, 
and the standard deviation of the measurement ($\sqrt{R_v}$) has been chosen to be $0.5\%$ of the total initial power consumption. 
Figure \ref{fig:signal_track_3} displays the tracking outcome (left),
as well as the required set-point signal synthesized by the above optimization problem (right). 
Notice that the set-point variation is bounded to within a small interval, 
which practically means that the users of the TCL are physiologically unaffected by that.    


\begin{figure}
\centering
\subfigure{
\includegraphics[scale=0.5]{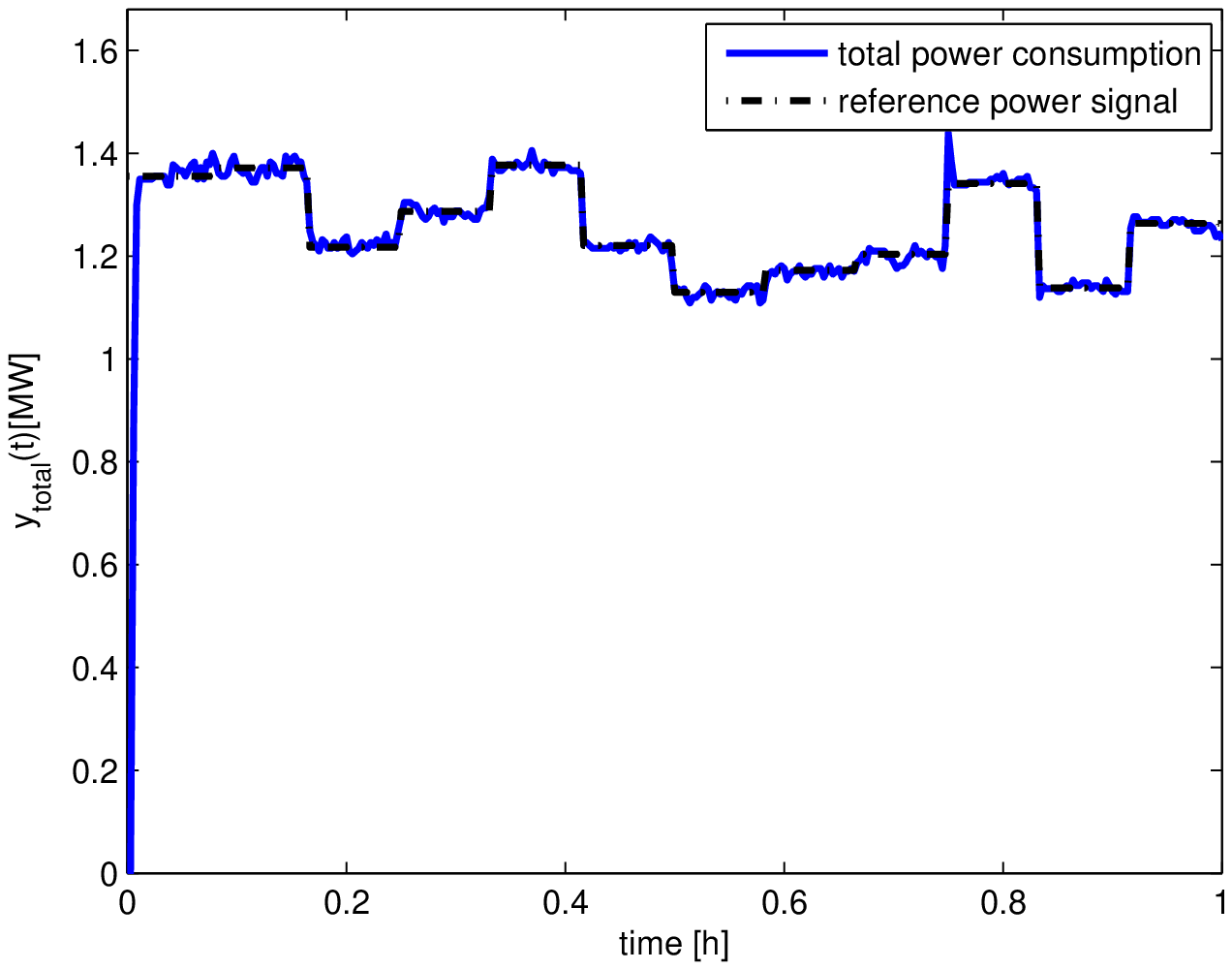}
}
\subfigure{
\includegraphics[scale=0.5]{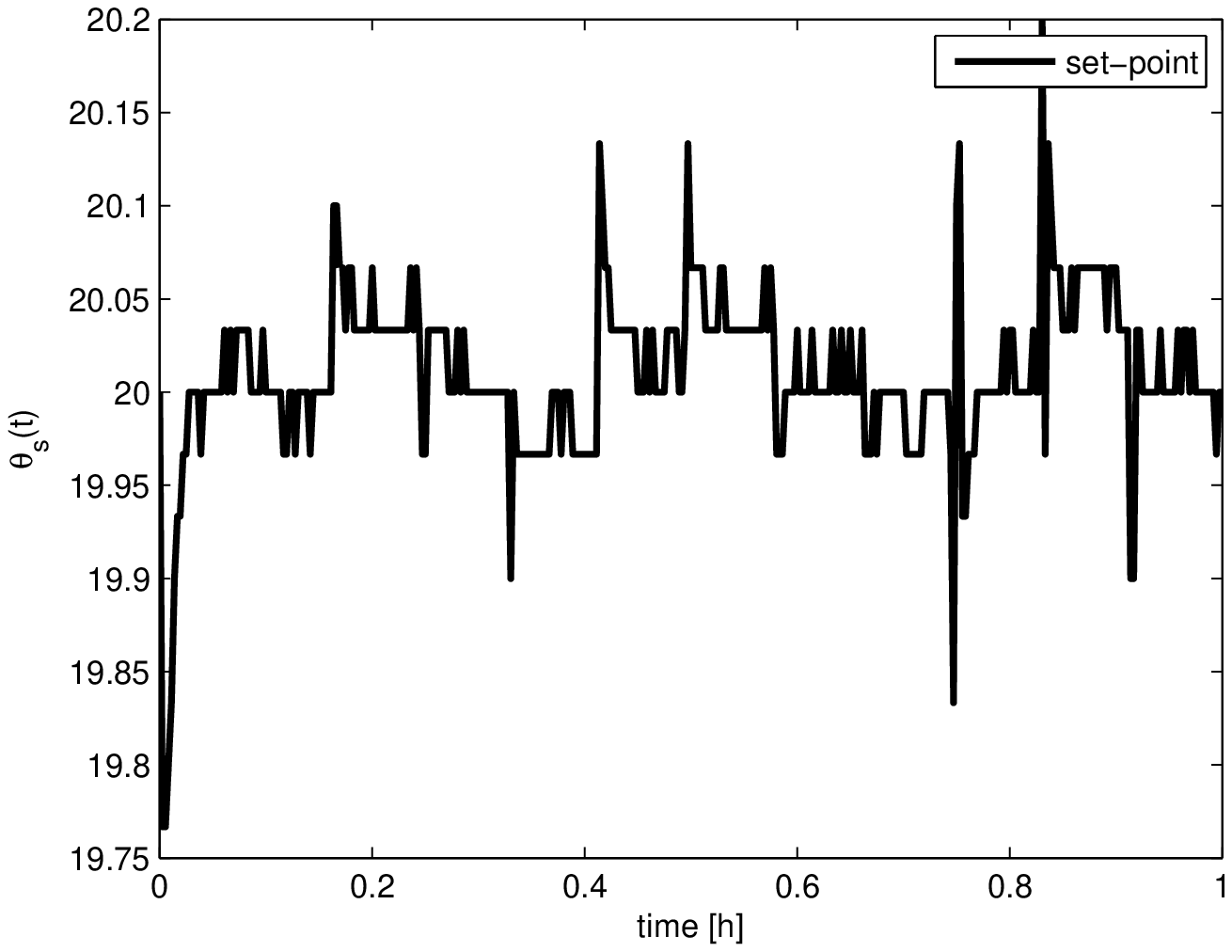}
}
\caption{Tracking of a piece-wise constant reference signal (left) by set-point control (right) in a homogeneous population of TCL abstracted by the formal probabilistic approach.
}
\label{fig:signal_track_3}
\end{figure}


A similar performance, 
as displayed in Figure \ref{fig:tracking_het_1},
is obtained in the case of a heterogeneous population (again of $500$ TCL),   
where heterogeneity is characterized by the parameter $C\in\mathcal U([2,18])$. 
The averaging approach of Section \ref{sec:Het_avg} is employed for the abstraction of the population. 
Figure \ref{fig:tracking_het_2} displays similar outcomes for a heterogeneous population abstracted by the approach of Section \ref{sec:Het:cluster} using $20$ clusters. 

\begin{figure}
\centering
\subfigure{
\includegraphics[scale=0.5]{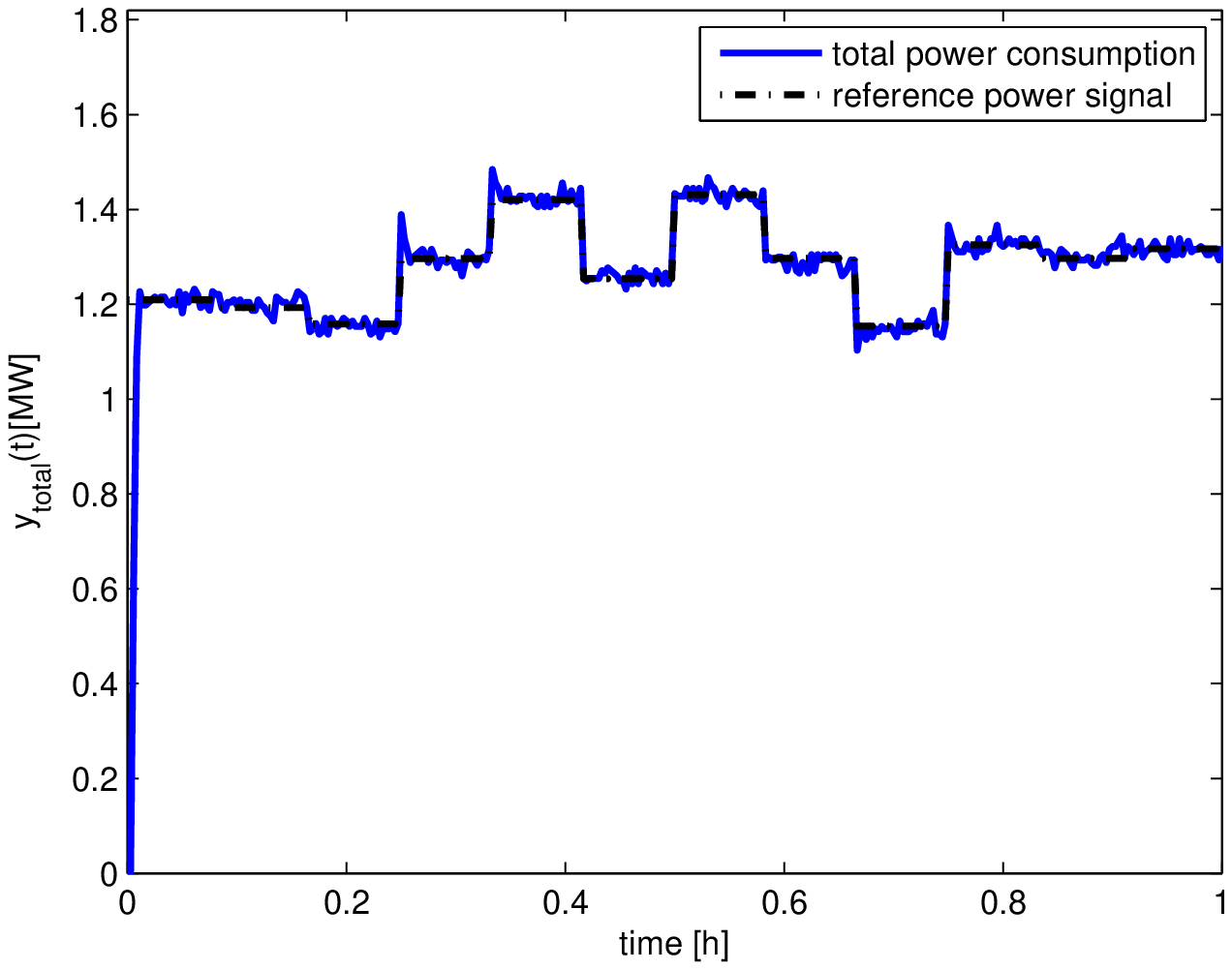}
}
\subfigure{
\includegraphics[scale=0.5]{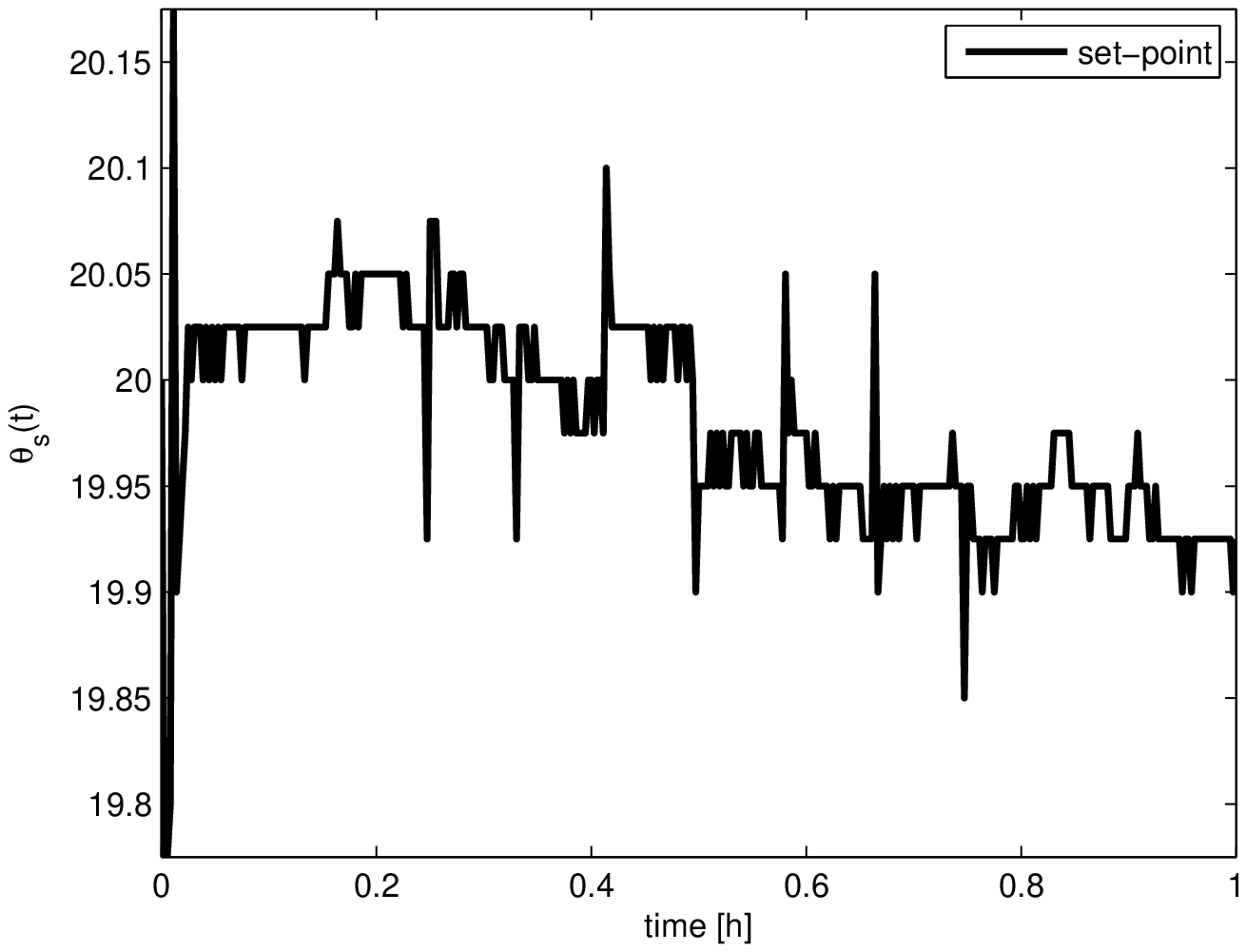}
}
\caption{Tracking of a piece-wise constant reference signal (left) by set-point control (right) for a heterogeneous population of TCL with $C\in \mathcal U([2,18])$, 
abstracted via averaging.}
\label{fig:tracking_het_1}
\end{figure}

\begin{figure}
\centering
\subfigure{
\includegraphics[scale=0.5]{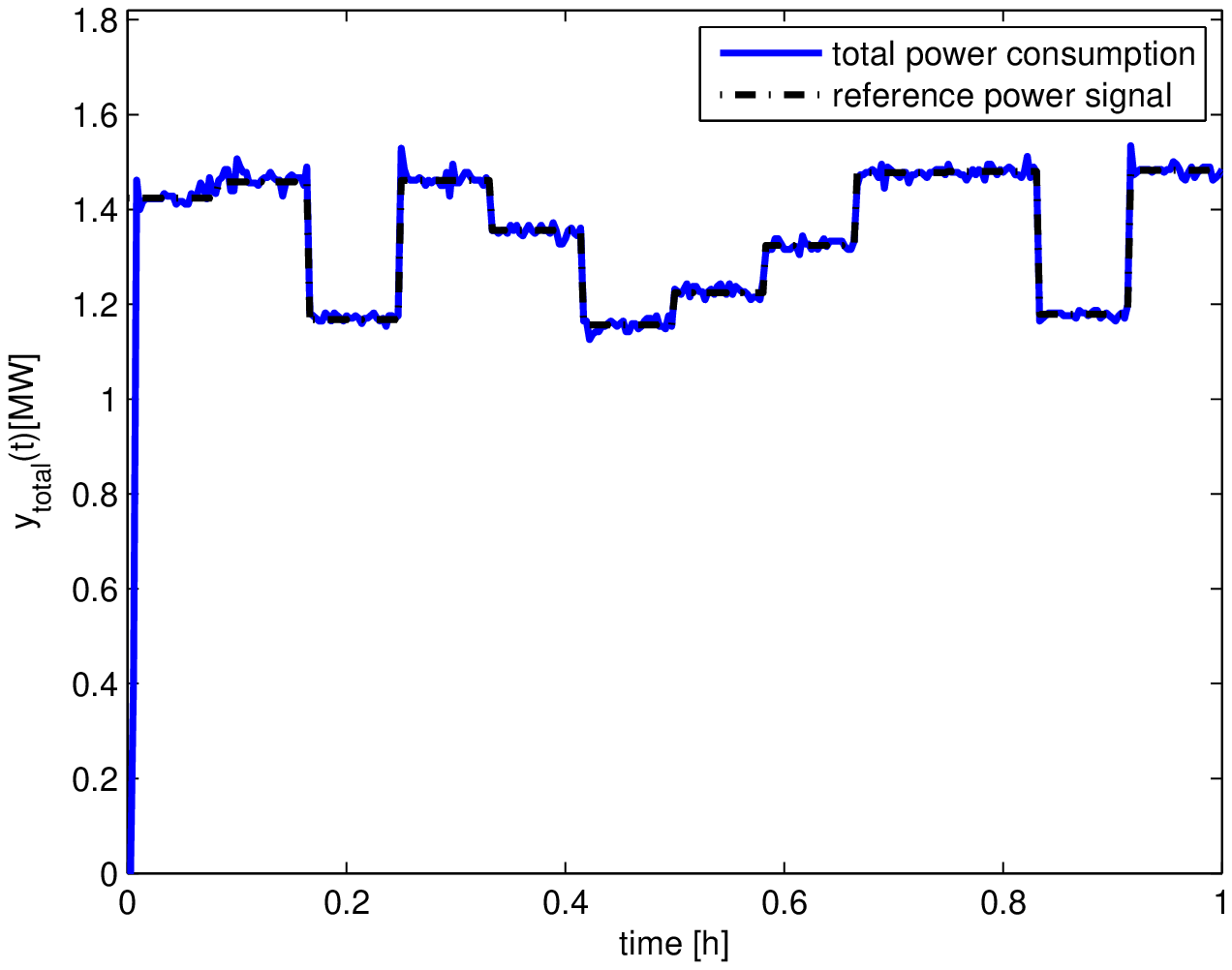}
}
\subfigure{
\includegraphics[scale=0.5]{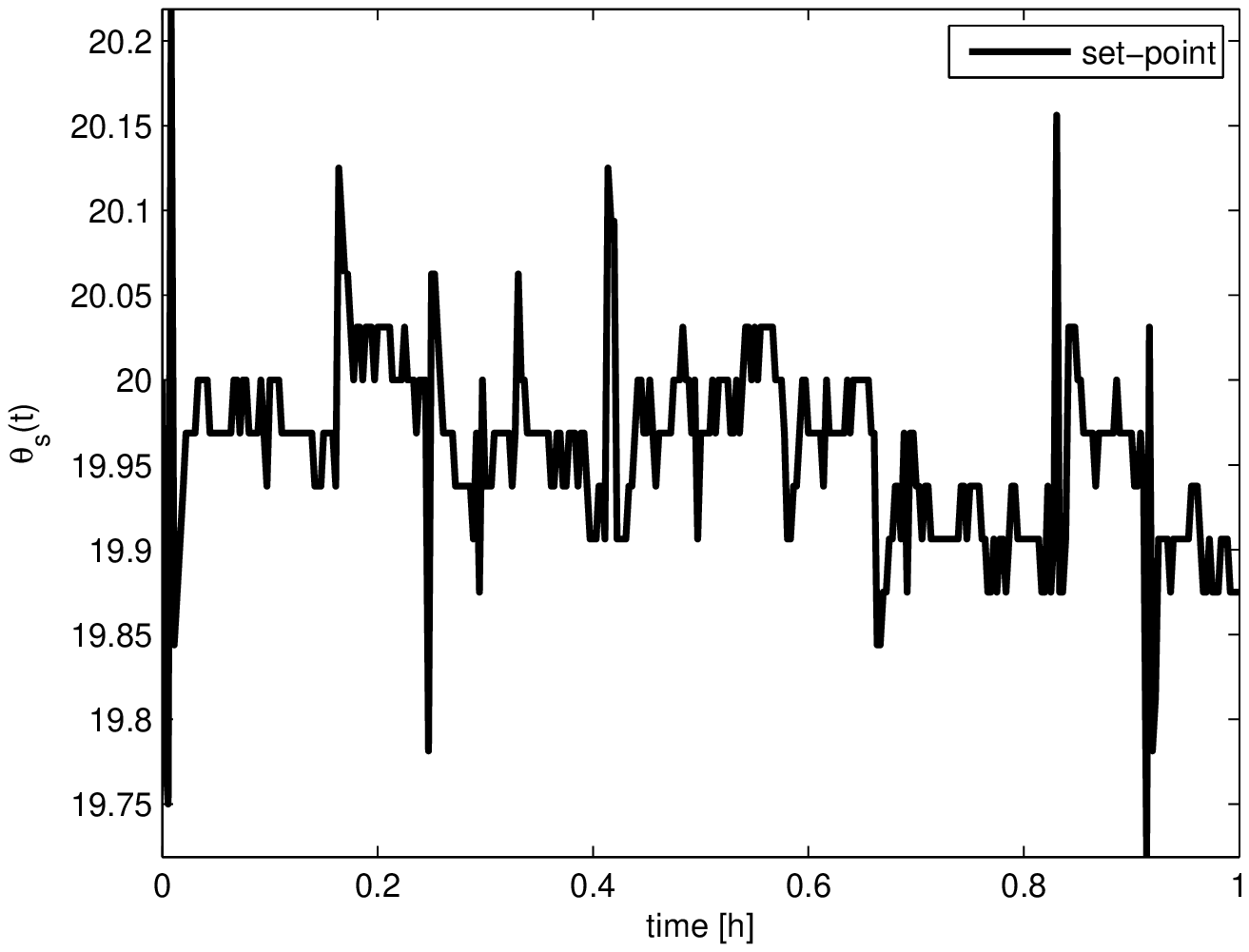}
}
\caption{Tracking of a piece-wise constant reference signal (left) by set-point control (right) for a heterogeneous population of TCL with $C\in \mathcal U([2,18])$,
abstracted via clustering ($20$ clusters).}
\label{fig:tracking_het_2}
\end{figure}


Finally, we have employed the SMPC scheme described in Section \ref{sec:SMPC}
combined with the Kalman state estimator of Section \ref{sec:est_Regul}
to track a constant reference signal over a homogeneous population of TCL.  
A prediction horizon of $T-t = 5$ steps has been selected, 
while the following constraint on the variation of the set-point has been considered:
$
\left|\frac{d\theta_s}{dt}\right|\simeq \left|\frac{\theta_s(t+1)-\theta_s(t)}{h}\right| \le \upsilon = 0.025.
$
Figure \ref{fig:tracking_hom_3} presents the power consumption of the population (left) and required set-point (right). 
The displayed response consists of a transient and of a steady-state phases. 
It takes $3$ minutes to reach the steady-state phase because of the limitations on the rate of set-point changes. 
This can be seen from the plot of the set-point control signal, 
which first decreases and then increases within the transient phase with a constant rate. 
In order to obtain a faster transient phase, 
the upper-bound for the set-point changes could be increased. 

\begin{figure}
\centering
\subfigure{
\includegraphics[scale=0.5]{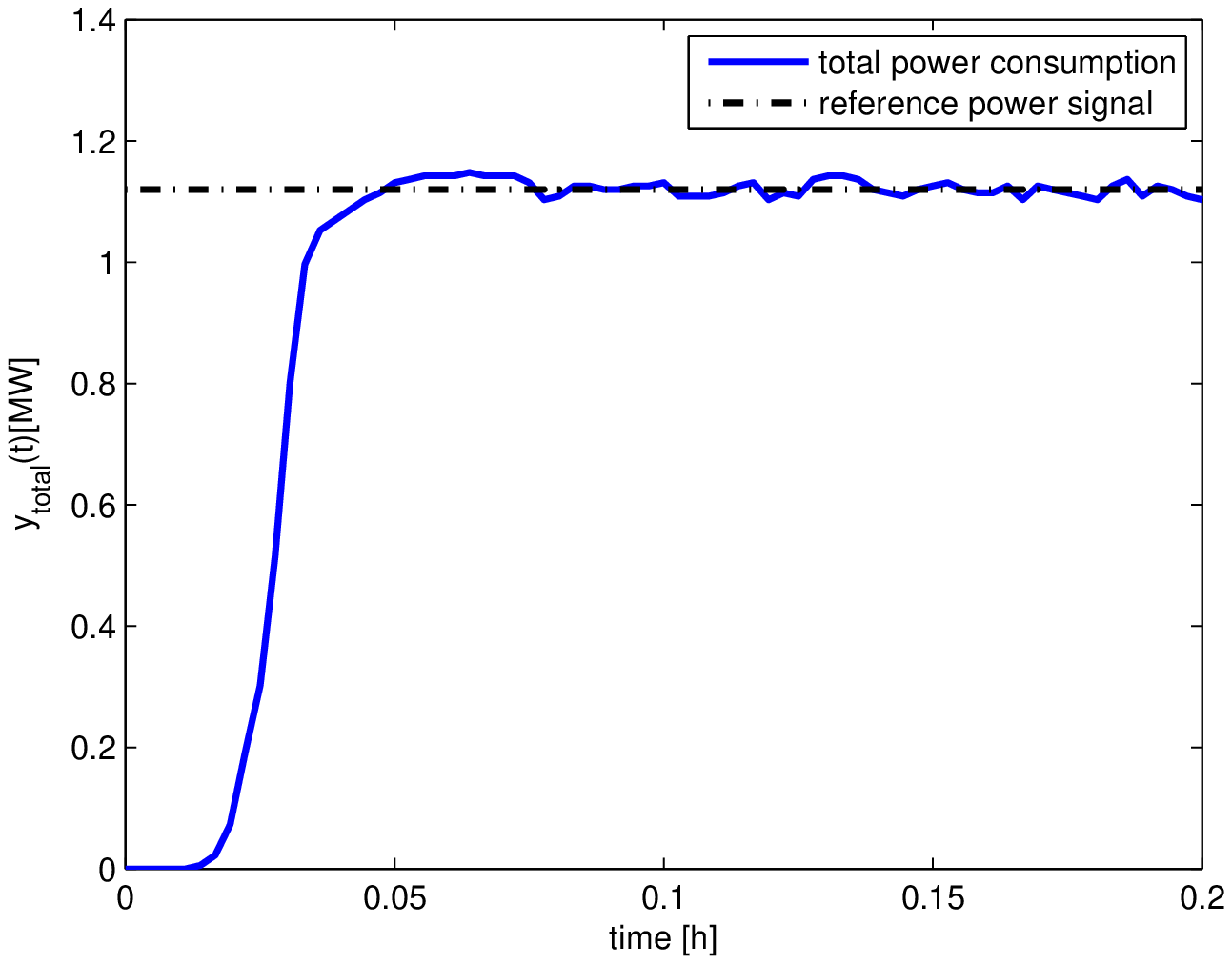}
}
\subfigure{
\includegraphics[scale=0.5]{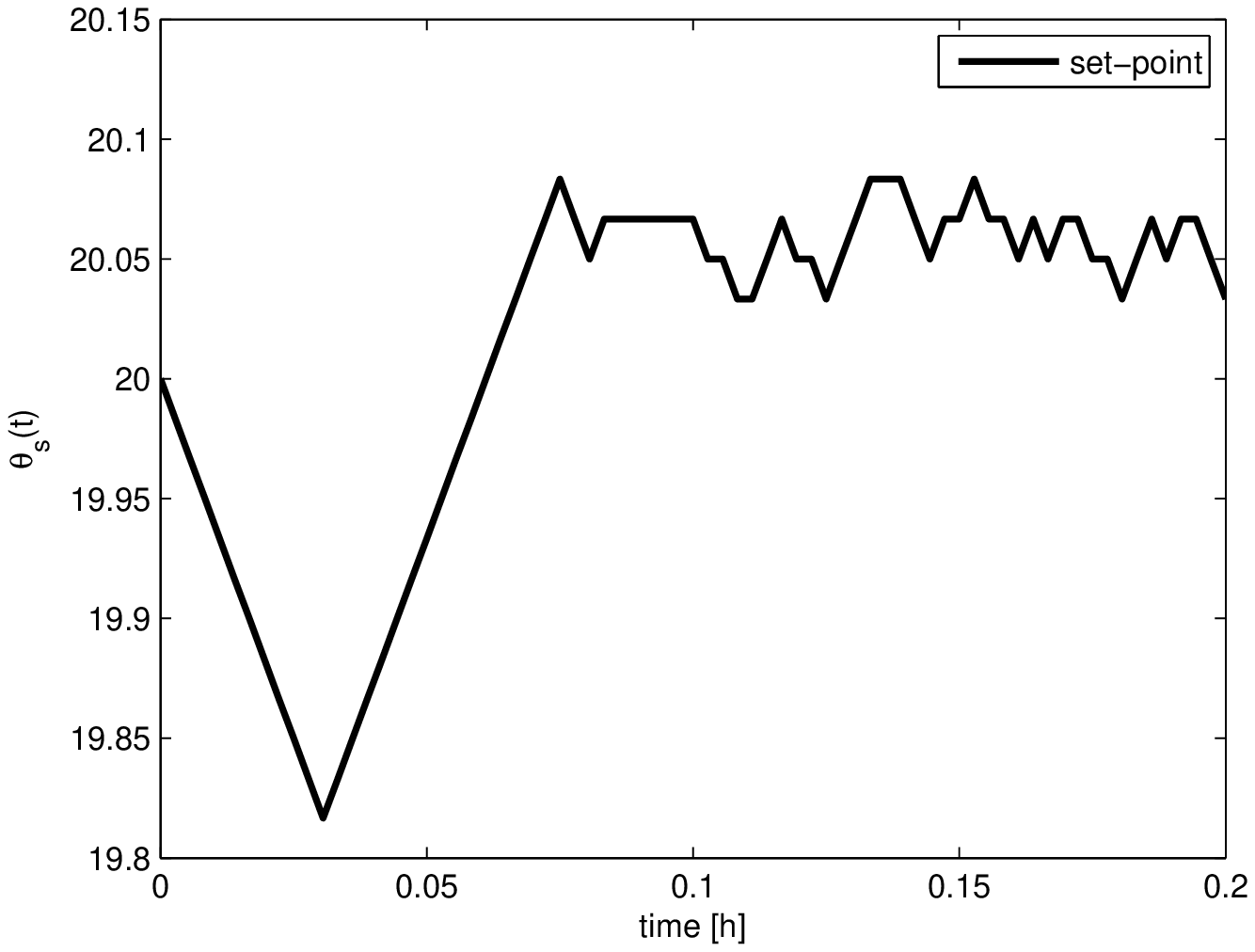}
}
\caption{Tracking of a constant reference signal (left) by set-point control (right) for a homogeneous population of TCL using the SMPC scheme.}
\label{fig:tracking_hom_3}
\end{figure}


\section{Conclusions and Future Work}

This work has put forward a formal approach for the abstraction of the dynamics of TCL and the aggregation of a population model. 
The approach starts by partitioning the state-space and constructing Markov chains for each single system.
Given the transition probability matrix of the Markov chains, it is possible to write down the state-space model of the population and further to aggregate it. 
The article has discussed approaches to deal with models heterogeneity and to perform controller synthesis over the aggregated model. 
It is worth mentioning that the error bound derived for autonomous populations can be extended to controlled populations. 


Looking forward, developing alternative approaches for the heterogeneous case, 
synthesizing new control schemes, 
and improving the error bounds are directions that are research-worthy in order to render the approach further applicable in practice.

\bibliographystyle{siam}


\clearpage
\section{Appendix}
\label{sec:appendix}

\subsection{Proofs of the Statements}

\begin{proof}[Proof of Theorem \ref{thm:pois_bino}]
Since the states of all Markov chains are known, the Markov chain $r$ jumps to the state $i$
with probability $P_{z_r(t)i}$ and fails to jump to the state $i$ with probability $(1-P_{z_r(t)i})$.
The definition of the variable $x_i$ implies that the conditional random variable $(x_i(t+1)| z(t))$ is the sum of $n_p$ independent Bernoulli trials with different success probabilities $P_{z_r(t)i}$. 
Then it follows the Poisson-binomial distribution (cf. Table \ref{tab:discrete_distributions}) with the specified mean and variance as in Table \ref{tab:discrete_distributions1}. 
\end{proof}

\smallskip

\begin{proof}[Proof of Theorem \ref{thm:shif_hevi}]
We fix the vector $\vectr X(t) = \left[X_1(t),X_2(t),\cdots,X_{2n}(t)\right]^T$ and $i$.
The sample space of the random variable $X_i(t+1)$ is the set
$\left\{0,\dfrac{1}{n_p},\dfrac{2}{n_p},\cdots,1\right\}$.
Define the probability masses of the random variable by
\begin{equation*}
\mathsf p_{jn_p} = \mathsf P\left(X_i(t+1)= \frac{j}{n_p}\bigg|\vectr X(t)\right),\quad j\in\mathbb{Z}_{n_p}.
\end{equation*}
Then $\sum_{j=0}^{n_p}\mathsf p_{jn_p} = 1$ for all $n_p\in \mathbb{N}$.
Denote the expected value of the random variable by $\mu = \sum_{r=1}^{2n}X_r(t)P_{ri}$, which is independent of $n_p$.
The variance can be computed as 
\begin{align*}
\sigma_{n_p}^2 =\sum_{j=0}^{n_p}\mathsf p_{jn_p}(\frac{j}{n_p} - \mu)^2 = \frac{1}{n_p}\sum_{r=1}^{2n}X_r(t)P_{ri}(1-P_{ri}),
\end{align*}
and converges to zero as $n_p$ goes to infinity.
Fix an $\varepsilon>0$ and consider the following implication: 
\begin{align*}
& \sum_{j=0}^{n_p}\mathsf p_{jn_p}(\frac{j}{n_p} - \mu)^2
\ge \sum_{j=0}^{j\le n_p(\mu-\varepsilon)}\mathsf p_{jn_p}(\frac{j}{n_p} - \mu)^2
\ge \varepsilon^2\sum_{j=0}^{j\le n_p(\mu-\varepsilon)}\mathsf p_{jn_p}\ge 0
\Rightarrow \lim\limits_{n_p\rightarrow\infty}\sum_{j=0}^{j\le n_p(\mu-\varepsilon)}\mathsf p_{jn_p} = 0.
\end{align*}
The same reasoning can be applied to the upper tail
\begin{align*}
& \sum_{j=0}^{n_p}p_{jn_p}(\frac{j}{n_p} - \mu)^2
\ge \sum_{j=0}^{j> n_p(\mu+\varepsilon)}\mathsf p_{jn_p}(\frac{j}{n_p} - \mu)^2
\ge \varepsilon^2\sum_{j=0}^{j> n_p(\mu+\varepsilon)}\mathsf p_{jn_p}\ge 0
\Rightarrow \lim\limits_{n_p\rightarrow\infty}\sum_{j=0}^{j> n_p(\mu+\varepsilon)}\mathsf p_{jn_p} = 0.
\end{align*}
Define the cumulative distribution function of the random variable $(X_i(t+1)|\vectr X(t))$ as 
\begin{equation*}
F_{n_p}(x) = \mathsf P(X_i(t+1)\le x|\vectr X(t)) = \sum_{j=0}^{j\le n_p x}\mathsf p_{jn_p}.
\end{equation*}
For all $x\le \mu-\varepsilon$ we have
\begin{equation*}
\lim\limits_{n_p\rightarrow\infty}F_{n_p}(x) = 
 \lim\limits_{n_p\rightarrow\infty}\sum_{j=0}^{j\le n_p x}\mathsf p_{jn_p}
 \le \lim\limits_{n_p\rightarrow\infty}\sum_{j=0}^{j\le n_p(\mu-\varepsilon)}\mathsf p_{jn_p} = 0.
\end{equation*}
Similarly, for all $x\ge \mu+\varepsilon$ we have
\begin{equation*}
\lim\limits_{n_p\rightarrow\infty}1-F_{n_p}(x) = 
\lim\limits_{n_p\rightarrow\infty}\sum_{j=0}^{j> n_p x}\mathsf p_{jn_p}
 \le \lim\limits_{n_p\rightarrow\infty}\sum_{j=0}^{j> n_p(\mu+\varepsilon)}\mathsf p_{jn_p} = 0.
\end{equation*}
Since the above reasoning holds for any $\varepsilon>0$, $F_{n_p}(x)$ converges to the Heaviside step function shifted at point $x=\mu$.
\end{proof}

\smallskip

\begin{lemma}
\label{lm:CLT}
(Lyapunov CLT, \cite{D05})
Let $\{y_j\}$ be a sequence of independent random variables, each having a finite expected value $\mu_j$ and variance $\sigma_j^2$. Define $s_{n_p}^2 = \sum_{j=1}^{n_p}\sigma_j^2$. If for some $\delta>0$, the Lyapunov's condition
\begin{equation*}
\lim\limits_{n_p\rightarrow\infty}\frac{1}{s_{n_p}^{2+\delta}}\sum_{j=1}^{n_p}\mathbb E\left[|y_j-\mu_j|^{2+\delta}\right] = 0,
\end{equation*}
is satisfied, then the variable $\sum_{j=1}^{n_p}(y_j - \mu_j)/s_{n_p}$ converges, in distribution, to a standard normal random variable, as $n_p$ goes to infinity:
\begin{equation*}
\frac{1}{s_{n_p}}\sum_{j=1}^{n_p}(y_j - \mu_j)\xrightarrow{d}\mathcal{N}(0,1).
\end{equation*}
\end{lemma}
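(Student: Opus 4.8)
The plan is to deduce the Lyapunov CLT from the Lindeberg--Feller central limit theorem, using the elementary fact that Lyapunov's moment condition implies Lindeberg's truncation condition; the Lindeberg--Feller theorem itself (\cite{D05}) may either be invoked directly or, if a self-contained argument is desired, proved through characteristic functions as sketched at the end.

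First I would set up the triangular array $X_{n_p,j} := (y_j-\mu_j)/s_{n_p}$ for $j\in\mathbb N_{n_p}$, whose rows consist of independent random variables with $\mathbb E[X_{n_p,j}]=0$ and $\mathrm{var}\bigl(\sum_{j=1}^{n_p}X_{n_p,j}\bigr)=1$; the target is $\sum_{j=1}^{n_p}X_{n_p,j}\xrightarrow{d}\mathcal N(0,1)$, which is exactly the conclusion of the lemma since $\sum_{j=1}^{n_p}X_{n_p,j} = \tfrac{1}{s_{n_p}}\sum_{j=1}^{n_p}(y_j-\mu_j)$. The key reduction is the Lindeberg condition
\[
\frac{1}{s_{n_p}^2}\sum_{j=1}^{n_p}\mathbb E\!\left[(y_j-\mu_j)^2\,\mathbb I_{\{|y_j-\mu_j|>\varepsilon s_{n_p}\}}\right]\ \longrightarrow\ 0,\qquad\forall\,\varepsilon>0,
\]
which follows from the Lyapunov hypothesis because, on the event $\{|y_j-\mu_j|>\varepsilon s_{n_p}\}$, one has $(y_j-\mu_j)^2\le |y_j-\mu_j|^{2+\delta}/(\varepsilon s_{n_p})^{\delta}$; summing and dividing by $s_{n_p}^2$ bounds the left-hand side above by $\varepsilon^{-\delta}s_{n_p}^{-(2+\delta)}\sum_{j=1}^{n_p}\mathbb E[|y_j-\mu_j|^{2+\delta}]$, which vanishes by assumption. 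As a by-product, splitting $\sigma_j^2$ over $\{|y_j-\mu_j|\le\varepsilon s_{n_p}\}$ and its complement gives $\max_{j\le n_p}\sigma_j^2/s_{n_p}^2\to0$, so no single summand can dominate.

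With the Lindeberg condition in hand, the Lindeberg--Feller theorem yields the claim. For completeness I would record the standard characteristic-function proof: writing $\varphi_{n_p,j}$ for the characteristic function of $X_{n_p,j}$ and $\sigma_{n_p,j}^2 := \mathrm{var}(X_{n_p,j})$, use the Taylor bound $\bigl|\varphi_{n_p,j}(u)-(1-\tfrac12 u^2\sigma_{n_p,j}^2)\bigr|\le \mathbb E\bigl[\min\{|uX_{n_p,j}|^2,|uX_{n_p,j}|^3\}\bigr]$ and control the product via the telescoping inequality $\bigl|\prod_j \varphi_{n_p,j}(u)-\prod_j(1-\tfrac12 u^2\sigma_{n_p,j}^2)\bigr|\le\sum_j\bigl|\varphi_{n_p,j}(u)-(1-\tfrac12 u^2\sigma_{n_p,j}^2)\bigr|$, valid for complex numbers of modulus at most one; the Lindeberg condition drives the right-hand side to zero, so $\prod_j\varphi_{n_p,j}(u)\to e^{-u^2/2}$ for every $u\in\mathbb R$, and L\'evy's continuity theorem then gives convergence in distribution to $\mathcal N(0,1)$.

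I expect the main obstacle to be the bookkeeping in this last characteristic-function step, if the argument is made self-contained: one must split each summand's expectation over $\{|X_{n_p,j}|\le\varepsilon\}$ and its complement, bound the former using the unit variance budget and the latter by the Lindeberg tail, and only then let $\varepsilon\downarrow0$ after $n_p\to\infty$. By contrast, the Lyapunov-to-Lindeberg reduction and the construction of the triangular array are entirely routine, and if one is content to cite \cite{D05} for Lindeberg--Feller there is essentially no obstacle at all.
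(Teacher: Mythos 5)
Your proof is correct, but note that the paper does not actually prove this lemma: it is stated as a classical result and cited directly from \cite{D05} (Billingsley), serving only as an auxiliary tool for the proof of Theorem \ref{thm:Gaus_appr}. What you have written is the standard textbook derivation of Lyapunov's CLT --- the reduction of Lyapunov's moment condition to Lindeberg's truncation condition via the pointwise bound $(y_j-\mu_j)^2\le |y_j-\mu_j|^{2+\delta}/(\varepsilon s_{n_p})^{\delta}$ on the tail event, followed by an appeal to the Lindeberg--Feller theorem --- and it is exactly the argument one finds in the cited reference. All the steps check out: the triangular array is correctly normalized, the Lyapunov-to-Lindeberg implication is right, the uniform negligibility $\max_j\sigma_j^2/s_{n_p}^2\to 0$ needed to apply the telescoping product inequality (so that $|1-\tfrac12 u^2\sigma_{n_p,j}^2|\le 1$ for large $n_p$) is correctly extracted, and you correctly flag that $\varepsilon\downarrow 0$ must be taken after $n_p\to\infty$ in the characteristic-function bookkeeping. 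Since the lemma is imported rather than proved in the paper, there is no in-text argument to compare against; your write-up would simply make explicit what the citation delegates to the literature.
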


\smallskip

\begin{proof}[Proof of Theorem \ref{thm:Gaus_appr}]
In order to prove that the random vector $(\vectr X(t+1)|\vectr X(t))$ converges to a multivariate normal random variable,
we show that every linear combination of its components converges to a normal random variable.
Consider any arbitrary vector $\nu = [\nu_1,\nu_2,\cdots,\nu_{2n}]^T\in\mathbb R^{2n}$.
The random variable $(\nu^T\vectr X(t+1)|\vectr X(t))$ can be seen as the sum of $n_p$ independent (normalized) categorical random variables $y_j$ over the sample space $\{\frac{\nu_1}{n_p},\frac{\nu_2}{n_p},\cdots,\frac{\nu_{2n}}{n_p}\}$,
where 
$x_1(t)$ of them have success probability $\vectr p_1 = [P_{11},P_{12},\cdots, P_{12n}]$,
$x_2(t)$ have success probability $\vectr p_2 = [P_{21},P_{22}\cdots, P_{22n}]$,
and so on.
Take $\delta = 1$:
\begin{align*}
\lim\limits_{n_p\rightarrow\infty}\frac{1}{s_{n_p}^3}\sum_{j=1}^{n_p}\mathbb E\left[|y_j-\mu_j|^3\right]
& = \lim\limits_{n_p\rightarrow\infty}\dfrac{n_p^{3/2}}{n_p^2}\dfrac{\sum_{r=1}^{2n}\sum_{i=1}^{2n}\vectr X_r(t)|\nu_i-\vectr p_r\nu|^3}{\left(\sum_{r=1}^{2n}\sum_{i=1}^{2n}\vectr X_r(t)|\nu_i-\vectr p_r\nu|^2\right)^{3/2}}
= 0.
\end{align*}
The limit is zero since both numerator and denominator of the second fraction are constant and independent of $n_p$.
On the other hand,
the mean and variance can be obtained based on the direct definition of $\nu^T\vectr X(t+1)$ and relation \eqref{eq:cov_mat}.
Then we are able to conclude that
\begin{equation*}
\left(\frac{\nu^T\vectr X(t+1)-\nu^TP^T\vectr X(t)}{\sqrt{\nu^T\varSigma(\vectr X(t))\nu}}\bigg|\vectr X(t)\right)\xrightarrow{d}\mathcal{N}(0,1).
\end{equation*}
Defining the variable $\omega_i(t)$ through $ X_i(t+1) = \sum_{r=1}^{2n} X_r(t)P_{ri} + \omega_i(t)$ leads to
\begin{align*}
\vectr W(t)\xrightarrow{d}\mathcal{N}(0,\varSigma(\vectr X(t))).
\end{align*}
\end{proof}

\smallskip

\begin{proof}[Proof of Theorem \ref{thm:posit_def}]
The matrix $\varSigma(\vectr X)$ can be written as
$\frac{1}{n_p}\sum_{r=1}^{2n} X_r\varPhi_r$,
where
\begin{equation*}
\varPhi_r = \left[
\begin{array}{ccccc}
P_{r1}(1-P_{r1}) & -P_{r1}P_{r2} & \cdots & -P_{r1}P_{r 2n}\\
-P_{r2}P_{r1} & P_{r2}(1-P_{r2}) & \cdots & -P_{r2}P_{r 2n}\\
\vdots & \vdots & \vdots & \vdots\\
-P_{r 2n}P_{r1} & -P_{r 2n}P_{r2} & \cdots & P_{r 2n}(1-P_{r 2n})
\end{array}
\right],
\end{equation*}
\begin{equation*}
\varPhi_r = \left[
\begin{array}{ccccc}
P_{r1} & 0 & \cdots & 0\\
0 & P_{r2} & \cdots & 0\\
0 & 0 & \ddots & 0\\
0 & 0 & \cdots & P_{r 2n}
\end{array}
\right]-\left[
\begin{array}{c}
P_{r1}\\ P_{r2}\\ \vdots\\ P_{r 2n}
\end{array}
\right]\left[
\begin{array}{c}
P_{r1}\\ P_{r2}\\ \cdots\\ P_{r 2n}
\end{array}
\right]^T.
\end{equation*}
The positive semi-definiteness of all $\varPhi_r$ implies the positive semi-definiteness of $\varSigma(\vectr X)$, for all $X_r\ge 0$.
Further, the above structure of matrix $\varPhi_r$ allows us to compute, for all $\nu\in \mathbb R^{2n}$,
\begin{equation*}
\nu^T\varPhi_r \nu = \sum\limits_{i=1}^{2n}P_{ri}\nu_i^2-\left( \sum\limits_{i=1}^{2n}P_{ri}\nu_i\right)^2.
\end{equation*}
We use the Cauchy-Schwartz inequality, 
$|\vectr{a}\cdot \vectr{b}|\le \Vert \vectr{a}\Vert_2\times \Vert \vectr{b}\Vert_2$,
to show that $\nu^T\varPhi_r \nu\ge 0$. Consider two vectors
\begin{align*}
& \vectr {a} = \left[
\begin{array}{llll}
\sqrt{P_{r1}} & \sqrt{P_{r2}} & \cdots & \sqrt{P_{r 2n}}
\end{array}
\right]^T,\\
& \vectr {b} = \left[
\begin{array}{llll}
\nu_1\sqrt{P_{r1}} & \nu_2\sqrt{P_{r2}} & \cdots & \nu_{2n}\sqrt{P_{r 2n}}
\end{array}
\right]^T.
\end{align*}
The 2-norm of the vector $\vectr{a}$ is clearly equal to one, then
\begin{align*}
\left(\sum\limits_{i=1}^{2n}P_{ri}\nu_i\right)^2 \le \sum\limits_{i=1}^{2n}P_{ri}\sum\limits_{i=1}^{2n}P_{ri}\nu_i^2\Rightarrow
\nu^T\varPhi_r \nu\ge 0,
\end{align*}
The equality holds at least for the vectors $\nu = c \mathfrak 1_{2n}^T$,
where
$c$ is an arbitrary constant.

In order to prove the second part of the theorem we define the random variable
$\omega = \mathfrak 1_{2n} \vectr W = \sum\limits_{r=1}^{2n}\omega_r$, which is a linear combination of multivariate normal random vector.
Then it is a univariate normal random variable characterized by
\begin{align*}
& \mathbb E[\omega] = \mathbb E[\mathfrak 1_{2n}\vectr W] = \mathfrak 1_{2n} \mathbb E[\vectr W] = 0\Rightarrow\\
& \sigma^2(\omega) = \mathbb E[\omega \omega^T] = \mathfrak 1_{2n} \varSigma(X)\mathfrak 1_{2n}^T 
= \sum_{r=1}^{2n} X_r \mathfrak 1_{2n}\varPhi_r \mathfrak 1_{2n}^T = 0.
\end{align*}
Then the random variable $\omega$ is in fact deterministic: $\omega = 0$.

The last part of the theorem is proven by taking the sum of all the equations of the dynamical system
and noticing that the matrix $P$ is a stochastic matrix:
\begin{equation*}
\sum_{r=1}^{2n} X_r(t+1) = \sum_{r=1}^{2n} X_r(t) + \sum_{r=1}^{2n} \omega(t) = \sum_{r=1}^{2n} X_r(t).
\end{equation*}
\end{proof}

\smallskip


%
\begin{proof}[Proof of Theorem \ref{thm:reach_continuity}]
We prove the statement for one of the continuity regions, namely $m=0$ and $\theta,\theta'\in (-\infty,\theta_+]$, 
the other regions being treated in the same way. 
Consider the following chain of inequalities: 
\begin{align*}
& |\mathcal V_k(m,\theta)-\mathcal V_k(m,\theta')|\\
& =\left| \int_{\mathbb R}\mathcal V_{k+1}(0,\bar\theta)t_w(\bar\theta-a\theta-(1-a)\theta_a)d\bar\theta
-\int_{\mathbb R}\mathcal V_{k+1}(0,\bar\theta)t_w(\bar\theta-a\theta'-(1-a)\theta_a)d\bar\theta\right|\\
& \le\int_{\mathbb R}\mathcal V_{k+1}(0,\bar\theta)\left| t_w(\bar\theta-a\theta-(1-a)\theta_a) - t_w(\bar\theta-a\theta'-(1-a)\theta_a)\right|d\bar\theta\\
& \le\int_{\mathbb R}\left| t_w(\bar\theta-a\theta-(1-a)\theta_a) - t_w(\bar\theta-a\theta'-(1-a)\theta_a)\right|d\bar\theta\\
& = \frac{1}{\sigma}\int_{\mathbb R}\left|\phi\left(\frac{\bar\theta-a\theta-(1-a)\theta_a}{\sigma}\right)
-\phi\left(\frac{\bar\theta-a\theta'-(1-a)\theta_a}{\sigma}\right)\right|d\bar\theta\\
& = \int_{\mathbb R}\left|\phi\left(u-\frac{a(\theta-\theta')}{2\sigma}\right) -\phi\left(u+\frac{a(\theta-\theta')}{2\sigma}\right)\right|d\bar\theta\\
& = 2\int_{0}^{\infty}\left[\phi\left(u-\frac{a|\theta-\theta'|}{2\sigma}\right) -\phi\left(u+\frac{a|\theta-\theta'|}{2\sigma}\right)\right]d\bar\theta\\
& = 2\int_{-a|\theta-\theta'|/2\sigma}^{\infty}\phi(v)d\bar\theta - 2\int_{a|\theta-\theta'|/2\sigma}^{\infty}\phi(v)d\bar\theta\\
& = 2\int_{-a|\theta-\theta'|/2\sigma}^{a|\theta-\theta'|/2\sigma}\phi(v)d\bar\theta
\le 2\left(\frac{a|\theta-\theta'|}{2\sigma}+\frac{a|\theta-\theta'|}{2\sigma}\right)\frac{1}{\sqrt{2\pi}}
 = \frac{2a}{\sigma\sqrt{2\pi}}|\theta-\theta'|.
\end{align*}
\end{proof}

\smallskip

\begin{proof}[Proof of Theorem \ref{thm:hetr_mean}]
As we discussed for the homogeneous case, 
$(x_i(t+1)|\vectr z(t))$ is the sum of $n_p$ Bernoulli trials -- however now they allow different success probabilities. Then
\begin{align*}
& \mathbb E[x_i(t+1)|\vectr z(t)] = \sum_{j=1}^{n_p}P_{z_j(t)i}(\alpha_j)\Rightarrow\\
& \sum_{\vectr z(t)\rightarrow\vectr x(t)}\mathbb E[x_i(t+1)| z(t)] 
= \sum_{\vectr z(t)\rightarrow\vectr x(t)}\sum_{j=1}^{n_p}P_{z_j(t)i}(\alpha_j)
= \sum_{j=1}^{n_p}\sum_{\vectr z(t)\rightarrow\vectr x(t)}P_{z_j(t)i}(\alpha_j)
= \sum_{j=1}^{n_p}\sum_{r=1}^{2n}\beta_{rj} P_{r i}(\alpha_j)
\end{align*}
By changing the order of the summation, we can replace 1) by 2) in the following:
\begin{enumerate}
\item
fix the state of all Markov chains,
compute the sum of all probabilities of jumping to bin $i$,
finally sum over the states of $\varXi$ that satisfy $\vectr z(t)\rightarrow\vectr x(t)$.
\item
fix the Markov chain $\mathcal M_{\alpha}$, sum the probabilities of its jump to bin $i$ for all combinations $\vectr z(t)\rightarrow\vectr x(t)$,
finally sum over all Markov chains.
\end{enumerate}
In the latter case the addend of the inner sum has only $2n$ possibilities and we only need to count how many times each probability appears in the summation. 
These quantities are denoted by $\beta_{rj}$ as the number of the appearance of $P_{r i}$ of $\mathcal M_{\alpha_j}$. 
This number can be quantified as follows.  
The total number of states $\vectr z(t)$ generating the label $\vectr x(t) = [j_1,j_2,\cdots,j_{2n}]^T$ is
$n_p!/(j_1!j_2!\cdots j_{2n}!)$.
We know that the Markov chain $\mathcal M_{\alpha_j}$ is in state $z_j(t) = r$ and is jumping to state $i$.
For the remaining Markov chains the state
\begin{equation*}
[z_1(t),\cdots,z_{j-1}(t),z_{j+1}(t),\cdots,z_{n_p}(t)]^T\rightarrow [j_1,\cdots,j_{r-1},j_r-1,j_{r+1},\cdots,j_{2n}]^T.
\end{equation*}
Then the number of possibilities is
\begin{equation*}
\beta_{rj} = \dfrac{(n_p-1)!}{j_1!\cdots j_{r-1}!(j_r-1)!\cdots j_{2n}!}.
\end{equation*}

Finally we have:
\begin{align*}
\mathbb E[x_i(t+1)& | \vectr x(t)]
=\frac{j_1!j_2!\cdots j_{2n}!}{n_p!}\sum_{j=1}^{n_p}\sum_{r=1}^{2n}\dfrac{(n_p-1)! P_{r i}(\alpha_j)}{j_1!\cdots j_{r-1}!(j_r-1)!\cdots j_{2n}!}\\
& = \sum_{j=1}^{n_p}\sum_{r=1}^{2n}\frac{j_r}{n_p}P_{r i}(\alpha_j) = \sum_{r=1}^{2n}j_r\frac{1}{n_p}\sum_{j=1}^{n_p}P_{r i}(\alpha_j)
= \sum_{r=1}^{2n}j_r \int P_{r i}(v)f_{\alpha}(v)d v = \sum_{r=1}^{2n}j_r \overline{P_{r i}},\\
\Rightarrow \mathbb E[X_i(t&+1)|\vectr X(t)] = \sum_{r=1}^{2n}X_r(t) \overline{P_{r i}}.
\end{align*}
Now we look at the second moment of $(x_i(t+1)|\vectr z(t))$:
\begin{align*}
& \mathbb E[x_i^2(t+1)|\vectr z(t)]
= \sigma^2(x_i(t+1)|\vectr z(t))+\left(\mathbb E[x_i(t+1)|\vectr z(t)]\right)^2\\
& = \sum\limits_{j=1}^{n_p} P_{z_j(t)i}(\alpha_j)(1-P_{z_j(t)i}(\alpha_j))+\left(\sum\limits_{j=1}^{n_p} P_{z_j(t)i}(\alpha_j) \right)^2.
\end{align*}
Taking the same steps as for the first term leads to
\begin{align*}
\dfrac{\sum\limits_{\vectr z(t)\rightarrow\vectr x(t)}
\sigma^2(x_i(t+1)|\vectr z(t))}{\#\left\{\vectr z(t)\rightarrow\vectr x(t)\right\}}
 = \sum\limits_{r=1}^{2n}j_r \mathbb E_\alpha[P_{ri}(\alpha)(1-P_{ri}(\alpha))].
\end{align*}
For the second term we take the following steps:
\begin{align*}
& \sum_{\vectr z(t)\rightarrow\vectr x(t)} \left(\sum\limits_{j=1}^{n_p} P_{z_j(t)i}(\alpha_j) \right)^2
= \sum_{\vectr z(t)\rightarrow\vectr x(t)}\sum\limits_{j=1}^{n_p}\sum\limits_{u=1}^{n_p} P_{z_j(t)i}(\alpha_j)P_{z_u(t)i}(\alpha_u) \\
& = \sum\limits_{j,u=1}^{n_p}\sum_{\vectr z(t)\rightarrow\vectr x(t)}P_{z_j(t)i}(\alpha_j)P_{z_u(t)i}(\alpha_u)
= \sum\limits_{j,u=1}^{n_p}\sum\limits_{r,s=1}^{2n}\gamma_{rs} P_{r i}(\alpha_j)P_{s i}(\alpha_u),
\end{align*}
where
\begin{equation*}
\gamma_{rs} = \left\{
\begin{array}{lc}
\dfrac{(n_p-2)!}{j_1!\cdots (j_r-1)!\cdots(j_s-1)!\cdots j_{2n}!} & r\ne s\\
\dfrac{(n_p-2)!}{j_1!\cdots (j_r-2)!\cdots j_{2n}!} & r = s. 
\end{array}
\right.
\end{equation*}
Then we have 
\begin{align*}
& \frac{\sum\limits_{\vectr z(t)\rightarrow\vectr x(t)}\left(\sum\limits_{j=1}^{n_p} P_{z_j(t)i}(\alpha_j) \right)^2}
{\#\left\{\vectr z(t)\rightarrow\vectr x(t)\right\}}\\
& = \sum\limits_{j,u=1}^{n_p}\sum\limits_{r=1}^{2n}\frac{j_r(j_r-1)}{n_p(n_p-1)} P_{r i}(\alpha_j)P_{r i}(\alpha_u)
+ \sum\limits_{j,u=1}^{n_p}\sum\limits_{r,s=1,r\ne s}^{2n}\frac{j_r j_s}{n_p(n_p-1)} P_{r i}(\alpha_j)P_{s i}(\alpha_u)\\
& = \sum\limits_{j,u=1}^{n_p}\sum\limits_{r,s=1}^{2n}\frac{j_r j_s}{n_p(n_p-1)} P_{r i}(\alpha_j)P_{s i}(\alpha_u)
- \sum\limits_{j,u=1}^{n_p}\sum\limits_{r=1}^{2n}\frac{j_r}{n_p(n_p-1)} P_{r i}(\alpha_j)P_{r i}(\alpha_u)\\
& = \sum\limits_{r,s=1}^{2n}\frac{j_r j_s}{n_p(n_p-1)}\sum\limits_{j=1}^{n_p} P_{r i}(\alpha_j)\sum\limits_{u=1}^{n_p} P_{s i}(\alpha_u)
- \sum\limits_{r=1}^{2n}\frac{j_r}{n_p(n_p-1)}\left(\sum\limits_{j=1}^{n_p}P_{r i}(\alpha_j)\right)^2\\
& = \frac{n_p}{n_p-1}\sum\limits_{r,s=1}^{2n}j_r j_s \mathbb E_\alpha[P_{r i}(\alpha)]\mathbb E_\alpha[P_{s i}(\alpha)]
- \frac{n_p}{n_p-1}\sum\limits_{r=1}^{2n}j_r\left(E_\alpha[P_{r i}(\alpha)]\right)^2\\
& = \left(\sum\limits_{r=1}^{2n}j_r \mathbb E_\alpha[P_{r i}(\alpha)]\right)^2
+ \frac{1}{n_p-1}\left(\sum\limits_{r=1}^{2n}j_r \mathbb E_\alpha[P_{r i}(\alpha)]\right)^2
- \frac{n_p}{n_p-1}\sum\limits_{r=1}^{2n}j_r\left(\mathbb E_\alpha[P_{r i}(\alpha)]\right)^2.
\end{align*}
Dividing both sides by $n_p^2$ gives:
\begin{align*}
\mathbb E[X_i^2(t+1)|\vectr X(t)]
& = \frac{1}{n_p}\sum\limits_{r=1}^{2n}X_r \mathbb E_\alpha[P_{ri}(\alpha)(1-P_{ri}(\alpha))]
+ \left(\sum\limits_{r=1}^{2n}X_r \overline{P_{r i}}\right)^2\\
& + \frac{1}{n_p-1}\left(\sum\limits_{r=1}^{2n}X_r \overline{P_{r i}}\right)^2
-\frac{1}{n_p-1}\sum\limits_{r=1}^{2n}X_r \overline{P_{r i}}^2.
\end{align*}
Subtracting the square of the mean value 
$\left(\sum\limits_{r=1}^{2n}X_r \overline{P_{r i}}\right)^2$,
from both sides will give the desired formula for the variance.
Similarly, we have
\begin{align*}
& \mathbb E[x_i(t+1) x_{i'}(t+1)|z(t)]
= \sum\limits_{j=1}^{n_p}\sum\limits_{u=1,u\ne j}^{n_p} P_{z_j(t) i}(\alpha_j)P_{z_u(t) i'}(\alpha_u)\\
& = \sum\limits_{j=1}^{n_p}\sum\limits_{u=1}^{n_p} P_{z_j(t) i}(\alpha_j)P_{z_u(t) i'}(\alpha_u)
- \sum\limits_{j=1}^{n_p} P_{z_j(t) i}(\alpha_j)P_{z_j(t) i'}(\alpha_j).
\end{align*}
The first term is treated like the above theorem and gives the following:
\begin{align*}
\frac{\sum\limits_{\vectr z(t)\rightarrow\vectr x(t)}\sum\limits_{j=1}^{n_p}\sum\limits_{u=1}^{n_p} P_{z_j(t) i}(\alpha_j)P_{z_u(t) i'}(\alpha_u)}{\#\left\{\vectr z(t)\rightarrow\vectr x(t)\right\}}
& = \frac{n_p}{n_p-1}\sum\limits_{r=1}^{2n}j_r \mathbb E_\alpha[P_{r i}(\alpha)]\sum\limits_{s=1}^{2n}j_s \mathbb E_\alpha[P_{s i'}(\alpha)]\\
& -\frac{n_p}{n_p-1}\sum\limits_{r=1}^{2n}j_r \mathbb E_\alpha[P_{r i}(\alpha)]\mathbb E_\alpha[P_{r i'}(\alpha)].
\end{align*}
The second term is also manipulated in a same way:
\begin{align*}
&\frac{\sum\limits_{\vectr z(t)\rightarrow\vectr x(t)}\sum\limits_{j=1}^{n_p} P_{z_j(t) i}(\alpha_j)P_{z_j(t) i'}(\alpha_j)}
{\#\left\{\vectr z(t)\rightarrow\vectr x(t)\right\}}
= \sum_{j=1}^{2n}j_r \mathbb E_\alpha[P_{r i}(\alpha)P_{r i'}(\alpha)].
\end{align*}
Adding these terms together, diving by $n_p^2$, and subtracting the expected value concludes the proof.
\end{proof}

\smallskip

\noindent\emph{Proof of the error bounds. }
We denote the tail of the Gaussian density function by
\begin{equation*}
Q(\gamma) = \int_{\gamma}^{+\infty}\phi(u)du,\quad \phi(u) = \frac{1}{\sqrt{2\pi}}e^{-u^2/2}, 
\end{equation*}
which can be bounded as follows \cite{D04}: 
\begin{equation*}
Q(\gamma)\le \frac{\phi(\gamma)}{\gamma} = \frac{1}{\gamma\sqrt{2\pi}}e^{-\gamma^2/2},\quad \forall\gamma\in\mathbb R^{>0}.
\end{equation*}
The above inequality provides a convergence rate for the limit $\lim\limits_{\gamma\rightarrow+\infty}Q(\gamma) = 0$.
In other words, for any $\epsilon_0>0$ there exists a $\gamma_0>0$ such that $Q(\gamma)<\gamma_0$ for any $\gamma>\gamma_0$.
For instance $Q(\gamma)\le 10^{-6}$ for $\gamma\ge 5$.
The function $Q(\gamma)$ is monotonically decreasing for all $\gamma$.

Consider the following dynamical system with i.i.d. Gaussian process noise $\eta(k)$:
\begin{equation*}
x(k+1) = ax(k)+b+\eta(k),\quad a>0,\quad \eta(k)\sim\mathcal N(0,\sigma).
\end{equation*}
Define a probabilistic safety problem for this Markov process \cite{APLS08} as 
\begin{equation*}
p_{x_0}(A) = \mathsf P_{x_0}\{x(k)\in A, \text{ for all } k\in\mathbb Z_N, x(0) = x_0\}.
\end{equation*}
The solution of this safety problem can be characterized by the value functions $V_k:\mathbb R\rightarrow [0,1]$,
initialized with $V_N(x) = \mathfrak 1_{A}(x)$, and satisfying the recursion
\begin{equation*}
V_{k}(x) = \mathfrak 1_{A}(x)\int_{\mathbb R}V_{k+1}(\bar x)\frac{1}{\sigma}\phi\left(\frac{\bar x-ax-b}{\sigma}\right)d\bar x,
\quad \forall x\in\mathbb R, k\in\mathbb Z_{N-1}.
\end{equation*}
Then $p_{x_0}(A) = V_0(x_0)$. We are interested in the asymptotic properties of the function $p_{x_0}(A)$.
\begin{lemma}
\label{lm:conv_safety_upper}
The solution of the probabilistic safety problem for the above Markov process with a given safe set $ A = [\mathsf a,+\infty)$
converges to $1$ for large values of the initial state:
$\lim_{x_0\rightarrow+\infty}p_{x_0}(A) = 1.$
Similarly for a safe set $ B = (-\infty,\mathsf b]$, we have $\lim_{x_0\rightarrow-\infty}p_{x_0}(B) = 1$.
All the value functions $V_{k}$ present the same limiting behavior. 
\end{lemma}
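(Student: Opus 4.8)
The plan is to reduce the claim to a statement about the value functions and then establish it by backward induction on the time index. Since $p_{x_0}(A)=V_0(x_0)$, it suffices to prove that $\lim_{x\to+\infty}V_k(x)=1$ for every $k\in\mathbb Z_N$; the last clause of the lemma (``all the value functions present the same limiting behavior'') is then precisely this assertion for all $k$, not merely $k=0$. The base case $k=N$ is immediate: $V_N(x)=\mathfrak 1_A(x)=1$ for all $x\ge\mathsf a$, hence $V_N(x)\to 1$ as $x\to+\infty$.

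For the inductive step I would assume $\lim_{x\to+\infty}V_{k+1}(x)=1$, fix $\epsilon>0$, and use the induction hypothesis together with $0\le V_{k+1}\le 1$ to choose $M\in\mathbb R$ with $V_{k+1}(\bar x)\ge 1-\epsilon$ for all $\bar x\ge M$. For $x\ge\mathsf a$ one has $V_k(x)=\int_{\mathbb R}V_{k+1}(\bar x)\frac1\sigma\phi\!\left(\frac{\bar x-ax-b}{\sigma}\right)d\bar x$; discarding the nonnegative contribution of $\{\bar x<M\}$ and bounding $V_{k+1}\ge 1-\epsilon$ on $[M,+\infty)$ gives
\[
V_k(x)\ \ge\ (1-\epsilon)\int_M^{+\infty}\frac1\sigma\phi\!\left(\frac{\bar x-ax-b}{\sigma}\right)d\bar x\ =\ (1-\epsilon)\left[1-Q\!\left(\frac{ax+b-M}{\sigma}\right)\right].
\]
The crucial point is that $a>0$, so the conditional mean $ax+b$ of the successor state diverges to $+\infty$ as $x\to+\infty$; hence $(ax+b-M)/\sigma\to+\infty$ and, by the monotone decay $Q(\gamma)\to 0$ recalled above (with explicit rate $Q(\gamma)\le\phi(\gamma)/\gamma$), the right-hand side tends to $1-\epsilon$. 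Therefore $\liminf_{x\to+\infty}V_k(x)\ge 1-\epsilon$, and since $\epsilon>0$ is arbitrary and $V_k\le 1$ everywhere, $\lim_{x\to+\infty}V_k(x)=1$, which closes the induction.

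The case of the lower safe set $B=(-\infty,\mathsf b]$ with $x_0\to-\infty$ is entirely symmetric: one truncates the integral to $(-\infty,M]$ and again uses $a>0$, so that $ax+b\to-\infty$ as $x\to-\infty$ and the Gaussian mass on $(-\infty,M]$ tends to $1$. I do not expect a genuine obstacle here; the only points needing care are the backward-induction bookkeeping --- the limiting property has to be propagated for every $k$, which is exactly what makes each step work --- and the observation that the sign condition $a>0$ is what drives the successor mean off to $\pm\infty$. The two Gaussian-specific ingredients used, continuity and tail decay, are precisely the features highlighted before the statement, so the same argument would carry over to any noise density sharing them.
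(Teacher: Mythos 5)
Your proof is correct: the backward induction on $k$, the truncation of the integral at a level $M$ where $V_{k+1}\ge 1-\epsilon$, and the observation that $a>0$ drives the conditional mean $ax+b$ of the successor state to $\pm\infty$ together give $\lim_{x\to+\infty}V_k(x)=1$ for every $k$, and the case of the safe set $B$ is symmetric, exactly as you say. The paper's proof rests on the same mechanism but is deliberately quantitative rather than a soft $\epsilon$--$M$ argument: it fixes $\gamma>0$, builds the explicit threshold sequence $\gamma_k=\max\{(\gamma_{k+1}+\gamma\sigma-b)/a,\ \mathsf a\}$ with $\gamma_N=\mathsf a$, and proves by the same backward induction that $V_k(x)\ge 1-(N-k)Q(\gamma)$ for all $x\ge\gamma_k$ (dually, with a sequence $\beta_k$, for the set $B$); the limit statement then follows by letting $\gamma\to\infty$. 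The payoff of the explicit version is downstream: the sequences $\gamma_k,\beta_k$ and the bound $1-(N-k)Q(\gamma)$ are precisely what is reused in the proofs of Theorems \ref{thm:reach_asymp} and \ref{thm:VF_limit} to obtain the computable truncation error $\epsilon=e^{-\gamma^2/2}/(\gamma\sqrt{2\pi})$ and the admissible $\gamma$ as a function of $\mathcal L$ and $N$. So your argument establishes the lemma as stated, but since your induction already contains all the ingredients, it is worth recording the stronger inequality $V_k(x)\ge 1-(N-k)Q(\gamma)$ for $x\ge\gamma_k$ explicitly --- it costs nothing extra and is what the subsequent error analysis actually consumes.
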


\smallskip

\begin{proof}[Proof of Lemma \ref{lm:conv_safety_upper}]
%
Fix an arbitrary positive parameter $\gamma$ and construct the sequence $\{\gamma_k\}_{k=0}^{N}$: 
\begin{align*}
\gamma_{k} = \max\{(\gamma_{k+1}+\gamma\sigma-b)/a, \mathsf a\},\quad \gamma_N = \mathsf a.
\end{align*}
We claim that $V_k(x)\ge 1-(N-k)Q(\gamma)$, for all $x\ge \gamma_k$, which is proved by induction.
The statement is true for $k=N$ since $V_N(x) = \mathfrak 1_{A}(x)$ and $A = [\mathsf a,+\infty)$.
Suppose the statement is true for $(k+1)$, we prove it for $k$.
Consider the variable $x\ge \gamma_{k}\ge \mathsf a$, then:
\begin{align*}
&V_{k}(x)  =  \int_{-\infty}^{+\infty}V_{k+1}(\bar x)\frac{1}{\sigma}\phi\left(\frac{\bar x-ax-b}{\sigma}\right)d\bar x\Rightarrow\\
&1 - V_{k}(x) =
\int_{-\infty}^{+\infty}(1-V_{k+1}(\bar x))\frac{1}{\sigma}\phi\left(\frac{\bar x-ax-b}{\sigma}\right)d\bar x\\
&  = \int_{-\infty}^{\gamma_{k+1}}(1-V_{k+1}(\bar x))\frac{1}{\sigma}\phi\left(\frac{\bar x-ax-b}{\sigma}\right)d\bar x
+ \int_{\gamma_{k+1}}^{+\infty}(1-V_{k+1}(\bar x))\frac{1}{\sigma}\phi\left(\frac{\bar x-ax-b}{\sigma}\right)d\bar x\\
& \le \int_{-\infty}^{\gamma_{k+1}}\frac{1}{\sigma}\phi\left(\frac{\bar x-ax-b}{\sigma}\right)d\bar x
+ \int_{\gamma_{k+1}}^{+\infty}(N-k-1)Q(\gamma)\frac{1}{\sigma}\phi\left(\frac{\bar x-ax-b}{\sigma}\right)d\bar x
\end{align*}
\begin{align*}
& = \int_{-\infty}^{(\gamma_{k+1}-ax-b)/\sigma}\phi\left(u\right)du
+ (N-k-1)Q(\gamma)\int_{(\gamma_{k+1}-ax-b)/\sigma}^{+\infty}\phi\left(u\right)du\\
& \le Q\left(\frac{ax+b-\gamma_{k+1}}{\sigma}\right)+(N-k-1)Q(\gamma)
\le Q\left(\frac{a\gamma_{k}+b-\gamma_{k+1}}{\sigma}\right)+(N-k-1)Q(\gamma)\\
& \le Q(\gamma)+(N-k-1)Q(\gamma) =(N-k)Q(\gamma).
\end{align*}
We have obtained that $V_0(x)\ge 1-NQ(\gamma)$ for all $x\ge \gamma_0$.
Taking a sufficiently large $\gamma$ proves the first part.
The second part can be similarly proved by constructing a sequence $\{\beta_k\}_{k=0}^{N}$ as 
\begin{align*}
\beta_{k} = \min\{(\beta_{k+1}-\gamma\sigma-b)/a, \mathsf b\},\quad \beta_N = \mathsf b.
\end{align*}
\end{proof}

\smallskip

\begin{proof}[Proof of Theorem \ref{thm:reach_asymp}]
We divide the problem into the computation of four bounds for $\theta_k$. 
The first bound is computed by studying the behavior of $\mathcal V_k(1,\theta_k)$ at $+\infty$:
\begin{align*}
\mathcal V_k(1,\theta_k)& = \mathsf P\{s(N)\in\mathcal A| m(k) = 1,\theta(k) = \theta_k\}\\
& \ge \mathsf P\{s(N)\in\{1\}\times[\theta_-,+\infty)| m(k) = 1,\theta(k) = \theta_k\}\\
& \ge \mathsf P\{s(i)\in\{1\}\times[\theta_-,+\infty),\text{ for all } i\in[k,N]| m(k) = 1,\theta(k) = \theta_k\}\\
&  = \mathsf P\{\theta(i)\in[\theta_-,+\infty),\text{ for all } i\in[k,N]| \theta(k) = \theta_k\},
\end{align*}
where $\theta(\cdot)$ satisfies the temperature dynamical equation in the ON mode.
This is exactly the safety problem studied in Lemma \ref{lm:conv_safety_upper}.
Then $\mathcal V_k(1,\theta_k)\ge 1-(N-k)Q(\gamma)$ for all $\theta_k\ge \gamma_k$,
where
\begin{equation}
\label{eq:seq1}
\gamma_{k} = \max\{(\gamma_{k+1}+\gamma\sigma-(1-a)(\theta_a-RP_{rate}))/a, \theta_-\},\quad \gamma_N = \theta_-.
\end{equation}

The second bound is computed by studying the behavior of $\mathcal V_k(0,\theta_k)$ at $-\infty$:
\begin{align*}
\mathcal V_k(0,\theta_k)
& = 1-\mathsf P\{s(N)\in \mathcal S\backslash\mathcal A| m(k) = 0,\theta(k) = \theta_k\}\\
& \le 1-\mathsf P\{s(N)\in\{0\}\times\mathbb R^{<\theta_{+}}| m(k) = 0,\theta(k) = \theta_k\}\\
& \le 1-\mathsf P\{s(i)\in\{0\}\times\mathbb R^{<\theta_{+}},\text{ for all } i\in[k,N]| m(k) = 0,\theta(k) = \theta_k\}\\
&  = 1-\mathsf P\{\theta(i)\in\mathbb R^{<\theta_{+}},\text{ for all } i\in[k,N]| \theta(k) = \theta_k\},
\end{align*}
where $\theta(\cdot)$ satisfies the temperature dynamical equation in the OFF mode.
This is the complement of the safety problem studied in Lemma \ref{lm:conv_safety_upper}.
Then $\mathcal V_k(0,\theta_k)\le (N-k)Q(\gamma)$ for all $\theta_k\le \beta_k$,
where
\begin{equation}
\label{eq:seq2}
\beta_{k} = \min\{(\beta_{k+1}-\gamma\sigma-(1-a)\theta_a)/a, \theta_+\},\quad \beta_N = \theta_+.
\end{equation}

The third bound is provided by the behavior of $\mathcal V_k(0,\theta_k)$ at $+\infty$.
Take the value $\theta_k\ge\theta_+$,
\begin{align*}
\mathcal V_k(0,\theta_k) & = \mathsf P\{s(N) \in \mathcal A| m(k) = 0,\theta(k) = \theta_k\}\\
& = \mathbb E_{s_{k+1}}\left[\mathsf P\{s(N)\in \mathcal A| s(k+1) = s_{k+1}\}|m(k) = 0,\theta(k) = \theta_k \right]\\
& = \mathbb E_{s_{k+1}}\left[\mathcal V_{k+1}(s_{k+1})|m(k) = 0,\theta(k) = \theta_k \right]\\
& = \int_{\mathbb R}\mathcal V_{k+1}(1,\theta_{k+1})t_w(\theta_{k+1}-a\theta_k-(1-a)\theta_a)d\theta_{k+1}.
\end{align*}
Then we have $\mathcal V_k(0,\theta_k)\ge 1-(N-k)Q(\gamma)$, for all $\theta_k\ge\bar\gamma_k$, where
\begin{equation}
\label{eq:seq3}
\bar\gamma_k = \max\{(\gamma_{k+1}+\gamma\sigma-(1-a)\theta_a)/a,\theta_+\}.
\end{equation}

Finally, we study the behavior of $\mathcal V_k(1,\theta_k)$ at $-\infty$.
Take the value $\theta_k\le\theta_-$,
\begin{align*}
\mathcal V_k(1,\theta_k) & = \mathsf P\{s(N)\in \mathcal A| m(k) = 1,\theta(k) = \theta_k\}\\
& = \mathbb E_{s_{k+1}}\left[\mathsf P\{s(N) \in \mathcal A| s(k+1) = s_{k+1}\}|m(k) = 1,\theta(k) = \theta_k \right]\\
& = \mathbb E_{s_{k+1}}\left[\mathcal V_{k+1}(s_{k+1})|m(k) = 1,\theta(k) = \theta_k \right]\\
& = \int_{\mathbb R}\mathcal V_{k+1}(0,\theta_{k+1})t_w(\theta_{k+1}-a\theta_k-(1-a)(\theta_a-RP_{rate}))d\theta_{k+1}.
\end{align*}
Then we have $\mathcal V_k(1,\theta_k)\le (N-k)Q(\gamma)$, for all $\theta_k\le\bar\beta_k$, where
\begin{equation}
\label{eq:seq4}
\bar\beta_k = \min\{(\beta_{k+1}-\gamma\sigma-(1-a)(\theta_a-RP_{rate}))/a,\theta_-\}.
\end{equation}
All these bounds result in $\mathcal V_k(1,\theta_k)\le (N-k)Q(\gamma)$.
Since $\gamma$ is an arbitrary positive parameter, the proof is complete.
\end{proof}

\smallskip



\begin{proof}[Proof of Theorem \ref{thm:VF_limit}]
Since the parameter $a = e^{-h/RC}\in(0,1)$
and $[\theta_-,\theta_+]\subset [\theta_a-RP_{rate}, \theta_a]$,
the sequences introduced in \eqref{eq:seq1}, \eqref{eq:seq2} are monotonic and satisfy the following linear difference equations
\begin{align*}
& \gamma_{k} = (\gamma_{k+1}+\gamma\sigma-(1-a)(\theta_a-RP_{rate}))/a,\quad \gamma_N = \theta_-,\\
& \beta_{k} = (\beta_{k+1}-\gamma\sigma-(1-a)\theta_a)/a,\quad \beta_N = \theta_+.
\end{align*}
The sequences introduced in \eqref{eq:seq3}, \eqref{eq:seq4} are also monotonic. 
To show the correctness of the statement it is sufficient to find a $\gamma$, 
such that $\gamma_0\le\theta_{\mathsf m}$ and
$\beta_0\ge\theta_{-\mathsf m}$.
Note that by such a selection the conditions $\bar\gamma_1\le\theta_{\mathsf m}$ and $\bar\beta_1\ge\theta_{-\mathsf m}$
are automatically satisfied. We have: 
\begin{align*}
& \gamma_{0}\le\theta_{\mathsf m}\Rightarrow
\gamma\le\frac{1-a}{\sigma}\left[\frac{a^N\theta_{\mathsf m}-\theta_-}{1-a^N}+\theta_a-RP_{rate} \right],\\
& \beta_{0}\ge\theta_{-\mathsf m}\Rightarrow
\gamma\le\frac{1-a}{\sigma}\left[\frac{\theta_+-a^N\theta_{-\mathsf m}}{1-a^N}-\theta_a\right].
\end{align*}
Taking the minimum of the right hand-sides leads to the formulation of $\gamma$ in the theorem.

\end{proof}

\smallskip

\begin{proof}[Proof of Theorem \ref{thm:error_1}]
Let the vector $\bar{\mathcal V}_k$ be the solution of problem \eqref{eq:reach_def} for the Markov chain. 
The entries of this vector contain the values of the piecewise constant function $\mathcal W_k$ at the corresponding partition set.
For the absorbing states we have in particular 
\begin{equation*}
\bar{\mathcal V}_k(1) =\bar{\mathcal V}_k(n+1) =  0,\quad
\bar{\mathcal V}_k(n) = \bar{\mathcal V}_k(2n) = 1,\quad\forall k\in \mathbb N_N.
\end{equation*}
Based on Theorem \ref{thm:VF_limit} we have that $|\mathcal V_k(m,\theta)-\mathcal W_k(m,\theta)|\le (N-k)\epsilon$,
for all $(m,\theta)$ belonging to the infinite length intervals.

Recall that the value functions $\mathcal V_{k}$ satisfy the recursion in \eqref{eq:recur_prob}.
We discuss this step recursion for $m = 0,\theta_+\le \theta\le \theta_{\mathsf m}$, 
the other four possibilities being the same. 
Suppose that $\theta\in\Theta_i$ with representative point $\bar\theta_i$:
\begin{align*}
|\mathcal V_{k}(0,\theta) - \mathcal W_{k}(0,\theta)|
& \le |\mathcal V_{k}(0,\theta) - \mathcal V_{k}(0,\bar{\theta}_i)|
+ |\mathcal V_{k}(0,\bar{\theta}_i) - \mathcal W_{k}(0,\bar{\theta}_i)|\\
& \le \frac{2a}{\sigma\sqrt{2\pi}}|\theta-\bar{\theta}_i|+
|\mathcal V_{k}(0,\bar{\theta}_i) - \mathcal W_{k}(0,\bar{\theta}_i)|
\end{align*}
\begin{align*}
\mathcal V_{k}(0,\bar{\theta}_i) & = \int_{\mathbb R}\mathcal V_{k+1}(1,\bar\theta)t_w(\bar\theta-a\bar{\theta}_i-(1-a)\theta_a)d\bar\theta
= \int_{-\infty}^{\theta_{-\mathsf m}}\hspace{-0.1in}\mathcal V_{k+1}(1,\bar\theta)t_w(\bar\theta-a\bar{\theta}_i-(1-a)\theta_a)d\bar\theta\\
& + \int_{\theta_{-\mathsf m}}^{\theta_{\mathsf m}}\hspace{-0.05in}\mathcal V_{k+1}(1,\bar\theta)t_w(\bar\theta-a\bar{\theta}_i-(1-a)\theta_a)d\bar\theta
+ \int_{\theta_{\mathsf m}}^{\infty}\hspace{-0.05in}\mathcal V_{k+1}(1,\bar\theta)t_w(\bar\theta-a\bar{\theta}_i-(1-a)\theta_a)d\bar\theta,
\end{align*}
\begin{align*}
\mathcal W_{k}(0,\bar{\theta}_i) & = \int_{\mathbb R}\mathcal W_{k+1}(1,\bar\theta)t_w(\bar\theta-a\bar{\theta}_i-(1-a)\theta_a)d\bar\theta
= \int_{-\infty}^{\theta_{-\mathsf m}}\hspace{-0.11in}\mathcal W_{k+1}(1,\bar\theta)t_w(\bar\theta-a\bar{\theta}_i-(1-a)\theta_a)d\bar\theta\\
& + \int_{\theta_{-\mathsf m}}^{\theta_{\mathsf m}}\hspace{-0.06in}\mathcal W_{k+1}(1,\bar\theta)t_w(\bar\theta-a\bar{\theta}_i-(1-a)\theta_a)d\bar\theta
+ \int_{\theta_{\mathsf m}}^{\infty}\hspace{-0.06in}\mathcal W_{k+1}(1,\bar\theta)t_w(\bar\theta-a\bar{\theta}_i-(1-a)\theta_a)d\bar\theta,
\end{align*}
\begin{align*}
\Rightarrow & |\mathcal V_{k}(0,\bar{\theta}_i)-\mathcal W_{k}(0,\bar{\theta}_i)| \le 
\int_{-\infty}^{\theta_{-\mathsf m}}|\mathcal V_{k+1}(1,\bar\theta)-0|t_w(\bar\theta-a\bar{\theta}_i-(1-a)\theta_a)d\bar\theta\\
& +\int_{\theta_{-\mathsf m}}^{\theta_{\mathsf m}}|\mathcal V_{k+1}(1,\bar\theta)-\mathcal W_{k+1}(1,\bar\theta)|t_w(\bar\theta-a\bar{\theta}_i-(1-a)\theta_a)d\bar\theta\\
& + \int_{\theta_{\mathsf m}}^{\infty}|\mathcal V_{k+1}(1,\bar\theta)-1|t_w(\bar\theta-a\bar{\theta}_i-(1-a)\theta_a)d\bar\theta
\end{align*}
\begin{align*}
& \le (N-k-1)\epsilon\int_{-\infty}^{\theta_{-\mathsf m}}t_w(\bar\theta-a\bar{\theta}_i-(1-a)\theta_a)d\bar\theta\\
& + E_{k+1}\int_{\theta_{-\mathsf m}}^{\theta_{\mathsf m}}t_w(\bar\theta-a\bar{\theta}_i-(1-a)\theta_a)d\bar\theta
+ (N-k-1)\epsilon\int_{\theta_{\mathsf m}}^{\infty}t_w(\bar\theta-a\bar{\theta}_i-(1-a)\theta_a)d\bar\theta\\
& \le (N-k-1)\epsilon+E_{k+1}
\end{align*}
\begin{align*}
& \Rightarrow E_{k} = \frac{2a}{\sigma\sqrt{2\pi}}\upsilon+ (N-k-1)\epsilon + E_{k+1},\quad E_N = 0,\\
& \Rightarrow E_{1} = \frac{(N-1)(N-2)}{2}\epsilon + (N-1)\frac{2a}{\sigma\sqrt{2\pi}}\upsilon,\quad \quad\forall (m_0,\theta_0)\in\mathbb Z_1\times[\theta_{-\mathsf m},\theta_{\mathsf m}].
\end{align*}
\end{proof}

\smallskip

\begin{proof}[Proof of Theorem \ref{thm:error_hom}]
The total power consumption is the sum of $n_p$ independent Bernoulli trials with different success probabilities:
\begin{align}
& \mathbb E[y(N)|\vectr m_0,\vectr{\theta}_0] = P_{rate,ON}\sum_{j=1}^{n_p}\mathbb E[m(N)|m_{0j},\theta_{0j}]\label{eq:sum_power}\\
& \mathbb E[y_{abs}(N)|\vectr X_0] = H (P^T)^N \vectr X_0 = P_{rate,ON}\sum_{i=1}^{2n}n_pX_{0i}\mathbb E[\bar m(N)|\bar{m}_{0i},\bar{\theta}_{0i}].\nonumber
\end{align}
Then we obtain 
\begin{align*}
& \Rightarrow |\mathbb E[y(N)|m_0,\theta_0] - \mathbb E[y_{abs}(N)|X_0]| \le\\
& \le P_{rate,ON}\sum_{j=1}^{n_p}|\mathbb E[m(N)|m_{0j},\theta_{0j}] - \mathbb E[\bar m(N)|\bar{m}_{0j},\bar{\theta}_{0j}]|
\le P_{rate,ON}n_p E_{1}.
\end{align*}
\end{proof}

\smallskip

\begin{proof}[Proof of Theorem \ref{thm:elim_redun}]
We apply the linear transformation $\tilde X = T X$ with
\begin{equation*}
T = \left[
\begin{array}{cc}
I_{2n-1} & 0_{2n-1}^T\\
\mathfrak 1_{2n-1} & 1
\end{array}
\right]
\Rightarrow
T^{-1} = \left[
\begin{array}{cc}
I_{2n-1} & 0_{2n-1}^T\\
-\mathfrak 1_{2n-1} & 1
\end{array}
\right].
\end{equation*}
The dynamical equation in \eqref{eq:dynamical_approx} becomes 
\begin{equation*}
\tilde X(t+1) = T P^T T^{-1}\tilde X(t)+T W(t),
\end{equation*}
then
\begin{align*}
T P^T T^{-1} & = \left[
\begin{array}{cc}
\Omega_{11}^T-\Omega_{21}^T\mathfrak 1_{2n-1} & \Omega_{21}^T\\
0_{2n-1} & 1
\end{array}
\right],\quad
T W(t) = \left[
\begin{array}{c}
\bar W(t)\\
0
\end{array}
\right].
\end{align*}
The last equation indicates that $\tilde X_{2n}$ is always equal to its initial value.
Since the sum of all state variables of \eqref{eq:dynamical_approx} are equal to one,
this is indeed one.
We then replace $\tilde X_{2n}$ by one in the first $(2n-1)$ equations and omit the last equation,
\begin{align*}
\bar X = \left[\mathbb I_{2n-1}, 0 \right]\tilde X
\Rightarrow
\bar X(t+1) = (\Omega_{11}^T-\Omega_{21}^T\mathfrak 1_{2n-1})\bar X(t) + \Omega_{21}^T + \bar W(t).
\end{align*}
Applying a same transformation to the output equation will result in the matrices $C,D$.
\end{proof}

\smallskip

\begin{proof}[Proof of Theorem \ref{thm:cluster_error}]
Relation \eqref{eq:sum_power} and Theorem \ref{thm:error_1} indicate that the first part of the error in \eqref{eq:cluster_error} is an upper-bound for the sum of abstraction error of each single TCL.

The second part of the error is proved by studying the sensitivity of the solution of the problem \eqref{eq:reach_def} against parameter $\alpha$. 
As we discussed before, the solution of this problem for the Markov chain over the time horizon $N$ is obtained by the recursion
$\bar{\mathcal V}_{k}(\alpha) = P(\alpha)\bar{\mathcal V}_{k+1}(\alpha)$, where $\bar{\mathcal V}_{N}(\alpha)$ is the indicator vector of the reach set, hence independent of $\alpha$.
Then we have
\begin{align*}
\|\bar{\mathcal V}_{k}(\alpha)-\bar{\mathcal V}_{k}(\alpha')\|_\infty & = 
\|P(\alpha)\bar{\mathcal V}_{k+1}(\alpha)-P(\alpha')\bar{\mathcal V}_{k+1}(\alpha')\|_\infty\\
& \le \|\left(P(\alpha)-P(\alpha')\right)\bar{\mathcal V}_{k+1}(\alpha)\|_\infty
+\|P(\alpha')\left(\bar{\mathcal V}_{k+1}(\alpha)-\bar{\mathcal V}_{k+1}(\alpha')\right)\|_\infty\\
& \le  \|P(\alpha)-P(\alpha')\|_\infty\|\bar{\mathcal V}_{k+1}(\alpha)\|_\infty
+ \|P(\alpha')\|_\infty\|\bar{\mathcal V}_{k+1}(\alpha)-\bar{\mathcal V}_{k+1}(\alpha')\|_\infty\\
& \le h_a \|\alpha-\alpha'\| + \|\bar{\mathcal V}_{k+1}(\alpha)-\bar{\mathcal V}_{k+1}(\alpha')\|_\infty,
\end{align*} 
which results in the inequality
\begin{align*}
\|\bar{\mathcal V}_{1}(\alpha)-\bar{\mathcal V}_{1}(\alpha')\|_\infty\le (N-1) h_a\|\alpha-\alpha'\|.
\end{align*}
Define function $\xi(\cdot)$ that assigns to each $\alpha$ the representative parameter of its cluster. Then 
\begin{align*}
\bigg|\sum_{\alpha\in\Gamma_a}&P_{rate,ON}(\alpha)\bar{\mathcal V}_{1}(\alpha)-\sum_i n_i P_{rate,ON}(\alpha_i)\bar{\mathcal V}_{1}(\alpha_i)\bigg|\\
& \le \sum_{\alpha\in\Gamma_a}\left|P_{rate,ON}(\alpha)\bar{\mathcal V}_{1}(\alpha)-P_{rate,ON}(\xi(\alpha))\bar{\mathcal V}_{1}(\xi(\alpha))\right|\\
& \le \sum_{\alpha\in\Gamma_a}\left|P_{rate,ON}(\alpha)-P_{rate,ON}(\xi(\alpha))\right|\bar{\mathcal V}_{1}(\alpha)
+ \sum_{\alpha\in\Gamma_a} P_{rate,ON}(\xi(\alpha))\left|\bar{\mathcal V}_{1}(\alpha)-\bar{\mathcal V}_{1}(\xi(\alpha))\right|\\
& \le n_p\upsilon_a  + (N-1)h_a\upsilon_a \sum_{\alpha\in\Gamma_a} n_i P_{rate,ON}(\alpha_i).
\end{align*}
\end{proof}

\smallskip

For the poof of Theorem \ref{thm:cost_function} we need the following lemma.
\begin{lemma}
\label{lem:quadratic}
The following equality holds: $\nu^T \varSigma(\vectr X)\nu = \frac{1}{n_p}\mathscr R(\nu^T,P^T)\vectr X$.
\end{lemma}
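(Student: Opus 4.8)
The plan is to verify the identity by a direct entrywise computation, leaning on the explicit structure of $\varSigma(\vectr X)$ already established in the proof of Theorem \ref{thm:posit_def}. First I would recall from there that $\varSigma(\vectr X) = \frac{1}{n_p}\sum_{r=1}^{2n} X_r \varPhi_r$, where $\varPhi_r = \mathrm{diag}(P_{r1},\dots,P_{r,2n}) - q_r q_r^{T}$ with $q_r = [P_{r1},\dots,P_{r,2n}]^{T}$, and that consequently, for every $\nu = [\nu_1,\dots,\nu_{2n}]^{T}\in\mathbb R^{2n}$,
\begin{equation*}
\nu^{T}\varPhi_r\nu = \sum_{i=1}^{2n} P_{ri}\nu_i^{2} - \Big(\sum_{i=1}^{2n} P_{ri}\nu_i\Big)^{2}.
\end{equation*}
By linearity of the quadratic form this gives $n_p\,\nu^{T}\varSigma(\vectr X)\nu = \sum_{r=1}^{2n} X_r\,\nu^{T}\varPhi_r\nu$, so it suffices to identify the right-hand side with $\mathscr R(\nu^{T},P^{T})\vectr X$.

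Next I would expand $\mathscr R(\nu^{T},P^{T})$ component by component. Viewing $\nu^{T}\in\mathbb R^{1\times 2n}$ as a row vector, $(\nu^{T})^{\circ 2}$ is the row vector with entries $\nu_i^{2}$, so the $r$-th entry of $(\nu^{T})^{\circ 2}P^{T}$ is $\sum_{i=1}^{2n}\nu_i^{2}(P^{T})_{ir} = \sum_{i=1}^{2n}\nu_i^{2}P_{ri}$; likewise the $r$-th entry of $\nu^{T}P^{T}$ is $\sum_{i=1}^{2n}\nu_i P_{ri}$, hence the $r$-th entry of $(\nu^{T}P^{T})^{\circ 2}$ is $\big(\sum_{i=1}^{2n}\nu_i P_{ri}\big)^{2}$. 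Subtracting, the $r$-th entry of the row vector $\mathscr R(\nu^{T},P^{T})$ equals $\sum_{i=1}^{2n}\nu_i^{2}P_{ri} - \big(\sum_{i=1}^{2n}\nu_i P_{ri}\big)^{2} = \nu^{T}\varPhi_r\nu$. Contracting against $\vectr X$ then yields $\mathscr R(\nu^{T},P^{T})\vectr X = \sum_{r=1}^{2n} X_r\,\nu^{T}\varPhi_r\nu = n_p\,\nu^{T}\varSigma(\vectr X)\nu$, and dividing by $n_p$ gives the claim.

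The computation is routine; the only point that requires genuine care is the bookkeeping of the transpose conventions inside the Hadamard operations — in particular making sure the Hadamard square acts on $\nu^{T}$ as a row vector (so that $(\nu^{T})^{\circ 2}$ has entries $\nu_i^{2}$ and not something else), and that the free index $r$ of the row vector $\mathscr R(\nu^{T},P^{T})$ is the summation index that gets weighted by $X_r$ in the final contraction, rather than the ``target bin'' index $i$ over which the inner sums run. Once the entrywise identity $[\mathscr R(\nu^{T},P^{T})]_r = \nu^{T}\varPhi_r\nu$ is in hand, the conclusion is immediate.
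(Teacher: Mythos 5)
Your proof is correct and follows essentially the same route as the paper's: both start from the decomposition $\varSigma(\vectr X)=\frac{1}{n_p}\sum_{r}X_r\varPhi_r$ established in the proof of Theorem \ref{thm:posit_def}, compute $\nu^T\varPhi_r\nu = \sum_i P_{ri}\nu_i^2-\bigl(\sum_i P_{ri}\nu_i\bigr)^2$, and identify this with the $r$-th entry of $\mathscr R(\nu^T,P^T)$ before contracting against $\vectr X$. The only difference is cosmetic: the paper phrases the final step in matrix notation as $\frac{1}{n_p}(P\nu^{\circ 2})^T\vectr X-\frac{1}{n_p}(\nu^TP^T)^{\circ 2}\vectr X$, while you carry out the same bookkeeping entrywise.
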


\smallskip

\begin{proof}[Proof of Lemma \ref{lem:quadratic}]
Using the notation of the proof of Theorem \ref{thm:posit_def} we have
\begin{align*}
\nu^T \varSigma(\vectr X)\nu = \frac{1}{n_p}\sum_{r=1}^{2n}\nu^T \varPhi_r \nu X_r,\quad
\nu^T \varPhi_r \nu = \nu^T diag(P_r)\nu - \nu^T P_r^T P_r \nu = P_r\nu^{\circ 2} - (P_r \nu)^2,
\end{align*}
where $P_r$ is the $r^{th}$-row of the probability matrix $P$. Then
\begin{align*}
\nu^T \varSigma(\vectr X)\nu
& = \frac{1}{n_p}\sum_{r=1}^{2n}P_r\nu^{\circ 2}X_r - \frac{1}{n_p}\sum_{r=1}^{2n}(P_r \nu)^2 X_r\\
& = \frac{1}{n_p}(P\nu^{\circ 2})^T\vectr X - \frac{1}{n_p}(\nu^T P^T)^{\circ 2} \vectr X
= \frac{1}{n_p}\mathscr R(\nu^T,P^T)\vectr X.
\end{align*}
\end{proof}

\smallskip

\begin{proof}[Proof of Theorem \ref{thm:cost_function}]
We prove \eqref{eq:cost_function} for all $t\le T$.
Define the backward recursion
\begin{align*}
J_{\bar\tau} = \mathbb E\left[\left[y_{abs}(\bar\tau+1)-y_{des}(\bar\tau+1)\right]^2+J_{\bar\tau+1}\big|\vectr X(\bar\tau)\right],\quad
J_T = \kappa^T\vectr X(T).
\end{align*}
Then $\Psi_{\sigma}(T,T) = \kappa^T$. Using the dynamics of the system we get
\begin{align*}
\mathbb E\left[\left[y_{abs}(\bar\tau+1)-y_{des}(\bar\tau+1)\right]^2\big|\vectr X(\bar\tau)\right]
& =\mathbb E\left[\left[HF_{\sigma(\bar\tau)}\vectr X(\bar\tau)+H\vectr W(\bar\tau)-y_{des}(\bar\tau+1)\right]^2\big|\vectr X(\bar\tau)\right]\\
& = \left[HF_{\sigma(\bar\tau)}\vectr X(\bar\tau)-y_{des}(\bar\tau+1)\right]^2 + H\varSigma(\vectr X(\bar\tau))H^T\\
& = \left[HF_{\sigma(\bar\tau)}\vectr X(\bar\tau)-y_{des}(\bar\tau+1)\right]^2 + \frac{1}{n_p}\mathscr R(H,F_{\sigma(\bar\tau)})X(\bar\tau).
\end{align*}
This leads to: 
\begin{align*}
\mathbb E \left[J_{\bar\tau+1}\big|\vectr X(\bar\tau)\right] 
& = \mathbb E\left[\sum_{\tau=\bar\tau+2}^{T}\left[H\Phi_{\sigma}(\tau,\bar\tau+1)\vectr X(\bar\tau+1)-y_{des}(\tau)\right]^2 + \Psi_{\sigma}(T,\bar\tau+1)\vectr X(\bar\tau+1)\big|\vectr X(\bar\tau)\right]\\
& = \sum_{\tau=\bar\tau+2}^{T}\left[H\Phi_{\sigma}(\tau,\bar\tau+1)F_{\sigma(\bar\tau)}\vectr X(\bar\tau)-y_{des}(\tau)\right]^2\\
& + \frac{1}{n_p}\sum_{\tau=\bar\tau+2}^{T}\mathscr R(H\Phi_{\sigma}(\tau,\bar\tau+1),F_{\sigma(\bar\tau)})X(\bar\tau)
 + \Psi_{\sigma}(T,\bar\tau+1)F_{\sigma(\bar\tau)}\vectr X(\bar\tau)
\end{align*}
Summing up the two terms and using the characteristics of the transition matrix $\Psi$ we get
\begin{align*}
J_{\bar\tau} = \sum_{\tau=\bar\tau+1}^{T}\left[H\Phi_{\sigma}(\tau,\bar\tau)\vectr X(\bar\tau)-y_{des}(\tau)\right]^2 + \Psi_{\sigma}(T,\bar\tau)\vectr X(\bar\tau),
\end{align*}
where
\begin{align*}
\Psi_{\sigma}(T,\bar\tau) = \Psi_{\sigma}(T,\bar\tau+1)F_{\sigma(\bar\tau)}  + 
\frac{1}{n_p}\sum_{\tau=\bar\tau+1}^{T}\mathscr R(H\Phi_{\sigma}(\tau,\bar\tau+1),F_{\sigma(\bar\tau)}).
\end{align*}
This recursion admits the following explicit solution
\begin{align*}
\Psi_{\sigma}(T,\bar\tau) = \kappa^T\Phi_{\sigma}(T,\bar\tau)
+\frac{1}{n_p}\sum_{\tau_1=\bar\tau}^{T}\sum_{\tau_2=\tau_1+1}^{T}\mathscr R(H\Phi_{\sigma}(\tau_2,\tau_1+1),F_{\sigma(\tau_1)})\Phi_{\sigma}(\tau_1,\bar\tau).
\end{align*}
\end{proof}

\end{document}